%
%
%
%
%
%
\RequirePackage{fix-cm} 
\documentclass[smallextended]{svjour3}       
\smartqed  
\usepackage{graphicx}
\usepackage{subfigure}
\usepackage[raggedright]{sidecap}
\usepackage{mathptmx}
\usepackage{amsmath}
\usepackage{amssymb}
\usepackage[utf8]{inputenc}
\usepackage[english]{babel}
\usepackage[authoryear, round]{natbib}
\usepackage{mathtools}
\DeclarePairedDelimiter\ceil{\lceil}{\rceil}
\DeclarePairedDelimiter\floor{\lfloor}{\rfloor}
\usepackage{footmisc}
\usepackage[normalem]{ulem}

\usepackage{color}
\usepackage[usenames,dvipsnames,svgnames,table]{xcolor}

\usepackage{hyperref}
\hypersetup{
     colorlinks   = true,
     citecolor    = blue
}

\newcounter{clm1}
\setcounter{clm1}{0}
\newtheorem{clm}[clm1]{Claim}
%
\newcommand\myworries[1]{\textcolor{black}{#1}}
\definecolor{mycolor}{rgb}{0, 0, 0}
\usepackage{footmisc}
\usepackage{latexsym}
%
%
%
\begin{document}

\title{Stability, Efficiency, and Contentedness of Social Storage Networks}

\titlerunning{Stability of Social Storage Networks}        

\author{Pramod Mane \and \\
        Kapil Ahuja \and \\
        Nagarajan Krishnamurthy
}

\authorrunning{P. Mane, K. Ahuja, N. Krishnamurthy} 

\institute{P. Mane and K. Ahuja \at
	 Computer Science and Engineering, \\ 
	Indian Institute of Technology Indore, India. \\ 
              \email{pramod@iiti.ac.in, kahuja@iiti.ac.in}           
		\and
           N. Krishnamurthy \at
	Operations Management and Quantitative Techniques, \\ 
	Indian Institute of Management Indore, India.\\ 
	\email{nagarajan@iimidr.ac.in, naga.research@gmail.com}
}

\maketitle

\begin{abstract}
Social storage systems are a good alternative to existing data backup systems of local, centralized, and P2P backup. Till date, researchers have mostly focussed on either building such systems by using existing underlying social networks (exogenously built) or on studying Quality of Service (QoS) related issues. In this paper, we look at two untouched aspects of social storage systems. One aspect involves modelling social storage as an endogenous social network, where agents themselves decide with whom they want to build data backup relation, which is more intuitive than exogenous social networks.  The second aspect involves studying the stability of social storage systems, which would help reduce maintenance costs and further, help build efficient as well as  contented networks.

\myworries{We have a four fold contribution that covers the above two aspects. We, first, model the social storage system as a {\it strategic network formation game}. We define the utility of each agent in the network under two different frameworks, one where the cost to add and maintain links is considered in the utility function and the other where budget constraints are considered. In the context of social storage and social cloud computing, these utility functions are the first of its kind, and we use them to define and analyse the social storage network game. Second, we propose the concept of {\it bilateral stability} which refines the pairwise stability concept defined by \cite{jackson}, by requiring mutual consent for both addition and deletion of links, as compared to mutual consent just for link addition. Mutual consent for link deletion is especially important in the social storage setting. The notion of {\it bilateral stability} subsumes the bilateral equilibrium definition of \cite{Goyal-bilateral-deviation-important-paper}.  Third, we prove necessary and the sufficient conditions for bilateral stability of social storage networks. For symmetric social storage networks, we prove that there exists a unique neighborhood size, independent of the number of agents (for all non-trivial cases), where no pair of agents has any incentive to increase or decrease their neighborhood size. We call this neighborhood size as the \textit{stability point}. Fourth, given the number of agents and other parameters, we discuss which bilaterally stable networks would evolve and also discuss which of these stable networks are efficient --- that is, stable networks with maximum sum of utilities of all agents. We also discuss ways to build contented networks, where each agent achieves the maximum possible utility. }

\keywords{Social Storage, Endogenous Network Formation, Bilateral Stability, Pairwise Stability, F2F Backup System, Peer-to-Peer System}

\end{abstract}

\section{Introduction}\label{sec:introduction}
In this digital era, where personal data size is growing exponentially, data backup is not a new need. Data stored on an agent's local machine is prone to loss due to disk-failure, malware, and so on. Local backup, \textcolor{mycolor}{centralised on-line backup (for example, Backblaze\footnote{\label{footnote:backblaze}https://secure.backblaze.com/buy.htm (Visited on 09 May 2017)}, CrashPlan\footnote{\label{footnote:crash-plan}https://store.crashplan.com/store/ (Visited on 09 May 2017)}) and decentralised (Peer-to-Peer) backup (for example, Pstore \citep{pstore}, Pastiche \citep{pastiche}, Samsara \citep{samsara}, etc.)} are some strategies available to agents. Each has its own merits and demerits. For example, maintaining data backup on a local external hard disk on a regular basis is cumbersome. As far as on-line backup systems are concerned, on the one hand, centralised on-line backup is not cost efficient, especially when the amount of data required to be backed up is huge. On the other hand, although Peer-to-Peer (P2P) backup systems are cost efficient, they require dealing with several issues like data \textcolor{mycolor}{availability, reliability and security \citep{Steinmetz2005}}. 

In recent years, to cope up with the above issues in P2P storage systems, researchers have been focusing on social network relationships. It is believed that social ties between agents will help to build backup systems that overcome aforementioned issues. This trend that takes real world social relationships into account for constructing a data backup system is emerging as \textcolor{mycolor}{{\it Social Storage} or Friend-to-Friend (F2F) Storage (Friendstore \citep{friendstore-1}, FriendBox \citep{friendboxstorage}, BackupBuddy\footnote{http://www.buddybackup.com/ (Visited on 09 May 2017)\label{footnote:buddybackup}} are a few examples)}.

\textcolor{mycolor}{Existing research on social storage is moving in two directions. One research trend \citep{f2fstorage, blockparty, friendstore, f2box, hybridfriendbox, friendboxstorage} has been focusing on various technical approaches to build the system, for example, data backup techniques. 
The other direction \citep{f2frecoverability, f2femepricalstudy, Gracia-availability-f2f, f2findirecttie} has been focusing on studying Quality of Service (QoS) related issues. This includes data availability, reliability, the cost associated with communication, data maintenance, data placement or scheduling polices, by taking online social relationships into account.} 

\subsection{Social Storage Issues Addressed in this Paper}
In this work, we address two other issues of social storage. {\it First}, major social storage studies (such as \cite{Gracia-availability-f2f, hybridfriendbox, f2findirecttie}) have considered exogenous social networks (an underlying social network, for instance Facebook, Orkut, Venus, etc.) to construct a social storage system and to study QoS related issues\footnote{This is because social storage is in its infancy and an architectural prototype of social storage is in the development stage.}. However, the approach of considering an exogenous social network to build a social storage system (or to do QoS analysis) fails to address various aspects. 

Here, there is an assumption that an agent in the underlying network is involved in data backup activity with all its neighbors. However, it is possible that agents do not want to perform a data backup activity with their set of existing neighbors. In other words, the approach does not focus on participation benefits and costs. Rational (self-interested) behavior of agents involved in the data backup activity is not taken into consideration\footnote{Although \cite{f2femepricalstudy} begin discussing about agents' strategic behavior in a scenario where limited storage is available for the agents, this has just been touched upon and has not been looked at in detail.}. 

The QoS analysis, which is based upon the neighborhood size in the underlying network, is no longer valid. Thus, it is important to study when agents want to perform a data backup activity and/ or when they do not.
Hence, in this paper, we model the social storage system as an {\it endogenous network formation game}. 


{\it Second}, social storage systems may not be stable (when agents have no incentive to add new partners or delete existing partners). Even if stable, they may not be efficient (maximizing the sum of utilities of all agents), and even if efficient,  they may not be contended (when all agents achieve their maximum utility). 

There is limited study on stability, efficiency, and contentment of social storage systems. While proposing the idea of F2F backup systems, \cite{f2fstorage} argue that social ties between agents act as incentives for them to stay in the system, thereby resulting in a stable social storage system. In their context, a system is unstable when agents arrive and depart the social storage system randomly --- lesser this randomness, more the stability of the system. In our case, a social storage system, as above, is {\it stable} when agents have no incentive to add new partners or delete existing partners. In the following subsections, we motivate this definition of stability in detail. 

Studying social storage systems as an endogenous network formation game, and then analyzing its stability may, on first glance, seem contradictory --- since one cannot do anything from outside the (endogenous) system if the agents' themselves do not form a stable network. 
In our case, agents always form a stable network, but the network may not be {\it efficient}. That is, the sum of the utilities of all agents may not be the maximum possible. In our case, as many agents as possible may not be {\it contented} as well, i.e. all agents achieve maximum possible utility. Contented networks are also efficient. 

Looking at both endogenous network formation and efficiency and contentment is useful because, though the social storage system is built endogenously, an independent observer (say, an administrator or a regulator) can check whether the system is efficient and contented, and if not, can externally do a small perturbation to the network. In some scenarios we look at, the independent administrator may achieve efficiency and/ or contentment by just introducing a small number of dummy agents.

\subsection{Our Model}
Both aspects of social storage systems that we address in this paper (that is, endogenously evolved systems as well as stability, efficiency  and contentment of such systems) are easily analysed by using {\it strategic network formation} models. \myworries{Modelling a utility function (the payoff that each agent receives in a network) is the foremost requirement to study network formation in a strategic setting \citep{Jackson-Book}.  This aspect has not been given much attention by researchers working in the social storage domain.}


\myworries{In the strategic network formation literature, specifically endogenous network formation game, different kinds of utility functions have been proposed and successfully validated. We summarise some of these utility functions in the next section. Utility modelling is more crucial in the social storage context, where decision makers are human agents who aim to optimise their own goals. This is in comparison to the P2P storage context (our closest cousin), where nodes (computer systems) are decision makers. Here, agents do not want to loose their data and want to maximise their data reliability, which is also conceived as a risk averse behaviour.}

\myworries{In this paper, we compute the utility of agents by incorporating some fundamental aspects such as the disk failure rate, the value (or benefit) associated with the data, and the cost to an agent for maintaining relationship with others. We discuss this utility function under two different frameworks, namely Multi-Objective Framework and Single-Objective Framework. As far as we know, in the social storage literature, this is the first attempt of its kind. The most challenging aspect of designing the utility function used by us is that it is simple, yet captures the behavior of the system well. Our utility model can also easily extendible to more realistic scenarios involving the online availability of agents, the bandwidth available to them, the agent heterogeneity, and trust.}

\subsection{Our Solution Concept and Technique}
In the endogenous network formation model of \cite{jackson}, rational decision makers build a network by interacting with each other. Here, the pairwise stability solution concept takes agents' mutual consent into account while building a relationship (that is, adding a link in the network). But, any agent can decide not to maintain a relationship (that is, delete any of its existing links) without consent of the agent at the other end of the link. 

However, the social storage system discussed earlier, impels us to focus on the requirement of bilateral consent while deleting a link as well. For instance, let agents $i$ and $j$ be backup partners. That is, $i$ provides its storage space to $j$ for the purpose of storing $j's$ data, and vice versa. Now, let us assume that breaking a backup partnership without mutual consent is allowed. If agent $i$ breaks the partnership without consent of $j$, then there is a threat that $j$ will lose its data which is stored on $i's$ storage space. Hence, backup partnerships in social storage networks have to be viewed as mutual contracts which cannot be broken unilaterally. We call this as {\it bilateral stability}. This definition of bilateral stability also applies to other contexts where mutual consent is required for deletion, for example, service level agreements in the Cloud. \myworries{We give a brief overview of the strategic network formation literature (specifically, link formation strategies), other solution concepts, and how bilateral stability relates to them in the next section.}

\myworries{The rest of this paper is divided into seven more sections. In Section \ref{sec:the-background}, we discuss past work done in strategic network formation. Here, we glance at different utility functions and different solution concepts. In Section \ref{sec:the-model}, we formally describe our social storage model and compute the utility of agents. In Section \ref{sec:background}, we study the endogenous social storage network formation game by focusing on mutual consent for both link addition and deletion. That is, we propose our solution concept of bilateral stability. In Section \ref{sec:stability-point}, we provide some necessary and sufficient conditions for bilateral stability of social storage networks. Here, we first define the {\it stability point} (the ideal neighborhood size) such that no agent gains by deviating from the stability point. Then, we show that there exists a unique stability point independent of the number of agents (for all non-trivial cases).  We also show that there exist unique and non-unique pairwise stable storage networks under certain conditions. Also, given the number of agents and other parameters, we discuss which pairwise stable networks would evolve. In Section \ref{subsec:stabile-networks}, we do further analysis related to stability, efficiency as well as contentment of networks, e.g. efficient networks are always stable in the context we study. In Section \ref{sec:related-future}, we look at related work, and in Section \ref{sec:conclusion} we conclude the paper and discuss future work.}

\section{Background}\label{sec:the-background}
\myworries{Strategic network formation literature is vast. In rest of this section, we first summarise few seminal works in utility function design. Then, we visit few important works in strategic network formation modelling.  Finally, we end this section with a brief survey of impactful solution concepts. We also relate our contribution to the state-of-the-art work in each of these three areas.}

\myworries{We do not focus on papers which look at applications \citep{belleflamme2004market, goyal2006bilateralism, furusawa2007free, venkatesh2013strategic, goyal2001r, goyal2003networks, zirulia2006industry, Suijs-Network-2005, Skorin-Kapov2017} or touch on other tangential topics. For example, anti-coordination among agents \citep{bramoull}; contextual and correlated peer effects \citep{bramoull-peer-effect}; partner heterogeneity \citep{billand-strict-nash-networks}; and Nash and stable characterisation for a graph structure \citep{bramoul-strategic-interaction}.}


\subsection{Utility Function in the Model}
\myworries{As pointed out earlier, a utility function is an important element of a strategic network formation game. This reveals individual benefit-cost tradeoff in a network. Utility functions are either degree based (only direct connections) or distance based (direct as well as indirect connections). }

\myworries{In degree based utility functions, one is only concerned about the effects of its local neighbourhood. This is the case for us too. Although indirect connections are not considered explicitly here, they do effect an agent's utility either positively or negatively. For example, in the co-author model \citep{jackson}, where agents are involved in a collaborative project, an individual's utility goes down if its neighbours are tightly connected (or agents are densely connected in the network). In the job contact network \citep{jobcontact}, for an individual, the probability of getting job information increases as its neighbourhood size increases. However, it also depends on how the individuals are connected (tightly or loosely) and unemployment in the network. }

\myworries{Distance based modelling is suitable for those settings where agents are aiming to minimise the cost of communication. The connection model \citep{jackson}, the network creation game \citep{Fabrikant-NCG-2003}, the locality game \citep{stefan-network-creation-game}, are some examples where a distance based utility function is used. In the social storage context, \cite{f2findirecttie} suggests explicitly exploiting indirect relationships so as to maximise data reliability and availability. In our view, this approach is in-general suitable for those systems that utilise exogenous social relationships (i.e., a social graph). }

\subsection{Network Formation Game in the Model}
\myworries{Most exhaustive survey of network formation games and games on networks has been done in the following works: \cite{Dutta-survey-1, Jackson-survey-1, networkthesis, goyal-on-web, borkotokey-survey}. Few standard models broadly cover strategic network formation modelling. This includes, the cooperative game theory model, the unilateral connection model, the link investments model, and the bilateral connection model (\cite{networkthesis}). Next, we briefly discuss these models and also relate them to our model.} 

\myworries{\cite{aumann1988endogenous} have proposed an extensive network formation game, where agents form links sequentially (one after another) using some exogenous rules. Agents propose with whom they want to form links, and later that proposal is either accepted or rejected by others. But once a link is formed between a pair of agents, it cannot withdrawn. This is the essence of the cooperative game theory model. In the unilateral connection model, agents form links without consent and links are directional (\cite{balanetwork}). 
In the link investments model (\cite{bloch2007formation}) and its variant (\cite{bloch2009communication}), agents propose investments for their every direct link. These investments are either positive or negative. Linking between a pair of agents takes place if and only if total investment on that link is positive.}

\myworries{\cite{myersongraphs} has proposed the link-announcement game. In this agents proposing to form links, announce the name of the agents with whom they want to form these links.  This announcement is done simultaneously. A link between two agents takes place if only if both announce each others name. Inspired by the link-announcement model, \cite{jackson} proposed the pairwise connection model. Here, link formation takes place with the mutual consent of the involved agents, however, link deletion takes place without consent. Our social storage network formation game is inspired by the pairwise connection model. We differ in the link deletion scenario, where for deleting a link, both the agents involved must agree.}

\subsection{Solution Concept}
\myworries{\cite{jackson} have argued that the Nash equilibrium as a solution concept is not useful in the network formation context for two reasons. First, due to the existence of multiple Nash equilibrium, and second, it fails to capture mutual consent of agents in link formation. Hence, they have proposed the pairwise stability solution concept. In this paper, we have shown that there is a need to refine the pairwise stability solution concept in the social storage context. We have changed the deletion condition of the pairwise stability so that it is suitable in our context. We call this concept as bilateral stability.} 

\myworries{Our solution concept of bilateral stability subsumes the concept of bilateral equilibrium proposed by \cite{Goyal-bilateral-deviation-important-paper}. The set of all strategies that are bilaterally stable contains the set of all bilateral equilibrium strategies. A network which is bilaterally stable may contain agents who may be better off by deviating, where as a bilateral equilibrium network does not contain any such agent.  Both definitions, however, allow only bilateral deviations (or pairwise addition as well as deletion with mutual consent). \cite{Buechel-bilateral-stability} have termed bilateral equilibrium as bilateral stability. 
\cite{hummon2000utility} also discusses mutual consent for deletion, but he does not formally define or study the concept of stability with mutual consent for deletion. The author performs agent based simulation of the connection model proposed by \cite{jackson}, and discusses simulation outputs. Other works that focus on agent based simulations are by \cite{falk2003s} and \cite{goeree2009search}.}

\myworries{Other network formation game solution concepts, for example, strong and coalition-proof Nash equilibria \citep{duttamutuswami}; strong pairwise stability \citep{jackson2005strongly}; pairwise stable Nash equilibrium \citep{goyal-pairwise-nash}; farsighted equilibrium \citep{Dutta-Farsighted}; Nash-Cournot equilibrium \citep{Flaam-network-creation-game}; and monadic stability \citep{gilles-formation-under-mutual-consent} are relevant in variants of the scenario we discuss in this paper. }


\section{Social Storage Network Model}\label{sec:the-model}

\begin{definition}
\textit{A social storage network $\mathfrak{g} =(\mathbf{A},\mathbf{L})$ consists of a set of agents, $\mathbf{A}$, and a set of links connecting these agents, $\mathbf{L}$, where a link between two agents represents a data backup partnership between them}.
\end{definition}
\begin{table}[ht!]
\caption{Notation Summary}
\begin{tabular}{ |l|l| }
\hline
$\mathfrak{g}$ & social storage network.\\ \hline
$\mathbf{A}$ & set of agents (or vertices). \\ \hline
$N$ & number of agents in $\mathfrak{g}$ (that is, $N$ is the number of elements in the set $\mathbf{A}$). \\ \hline
$\mathbf{L}$ &set of links (or edges). \\ \hline
$\langle ij\rangle$ & link between agents $i$ and $j$.\\ \hline
$a_{ij}$ & indicator for data backup partnership between agents $i$ and $j$. \\ \hline
$c$ & cost incurred by an agent to maintain a link. \\ \hline
$\beta_{i}$ & worth (or value) that agent $i$ has for its data.\\ \hline 
$s_{i}$ & amount of storage available with agent $i$ that it can contribute to other agents. \\ \hline
$d_{i}$ & amount of data that agent $i$ wants to backup. \\ \hline
$b_{i}$ & budget allocated by agent $i$ towards backup partnerships. \\ \hline
$\lambda$  &  probability of failure of a disk. \\ \hline
$\eta_{i}(\mathfrak{g})$ & neighborhood size of agent $i$ in $\mathfrak{g}$. (Also denotes the set of neighbors of $i$). \\ \hline
$\mathfrak{g}+\langle ij\rangle$ & new link $\langle ij\rangle$ is added to $\mathfrak{g}$. \\ \hline
$\mathfrak{g}-\langle ij\rangle$ & existing link $\langle ij\rangle$ is deleted from $\mathfrak{g}$. \\ \hline
$\mathcal{G}(N)$ & the set of all networks on $N$ agents.\\ \hline
$\mathfrak{g}(\kappa_{i})$ & {a component of network $\mathfrak{g}$, where $\kappa_{i}$ is the set of agents in that component.} \\ \hline
{$\mathfrak{g^c}$} & {complement of network $\mathfrak{g}$.} \\ \hline
\end{tabular}
\label{tab:notation-summary}
\end{table}

Given a social storage network $\mathfrak{g} =(\mathbf{A},\mathbf{L})$, the link $\langle ij\rangle \in \mathbf{L}$ represents the fact that agents $i$ and $j$ are neighbors of each other, and are involved in a data backup partnership. This partnership indicates that both the agents commit to share their storage resources with each other so that they can backup their data on each other's shared storage space. Storage resource sharing and data backup activity are bidirectional and occur with the mutual consent of $i$ and $j$. This implies, the link $\langle ij\rangle$ and the link $\langle ji\rangle$ are identical. We also refer to $i$ and $j$ as backup partners. 
The set of agents with whom agent $i$ has links is represented by $\eta_{i}(\mathfrak{g})$. In other words, $\eta_{i}(\mathfrak{g})$ is the neighborhood of agent $i$. We also use $\eta_{i}(\mathfrak{g})$ to represent the neighborhood size of $i$, which will be clear from the context. 

At any given point in time, each agent plays a dual role: that of a data owner who wants to back up its data, and that of storage provider who provides storage space for each of its backup partners. Pairs of agents may add a new link (or continue to maintain the existing link) or delete the existing link (or continue to remain without a link). 
In the context of social storage, mutual consent is necessary for adding as well as for deleting links. 
That is, \textit{an agent does not add a new link without the consent of the agent with whom it wants to add the new link and does not delete an existing link without the consent of the agent from whom it wants to delete the existing link}. 

The structure of the network, $\mathfrak{g}$, is determined by actions of the agents. Firstly, the network is updated when two agents $i$ and $j$ add a new link $\langle ij\rangle$, and we denote this by $\mathfrak{g}+\langle ij\rangle$. Secondly, the network is updated when a pair agents $i$ and $j$ delete an existing link $\langle ij\rangle$, and we denote this by $\mathfrak{g}-\langle ij\rangle$. As agents themselves decide with whom they want to perform backup partnerships and with whom they do not, this is a process of endogenous network formation (or partner selection). 
{In this paper, we do not explicitly consider trust between pairs of agents. We assume that {\it all agents} trust each other, and thus, anyone can form links with anyone.}

\sloppy
\par A social storage network may be connected or may consist of two or more connected components. We say a network $\mathfrak{g}$ is connected if there exists a path between every pair of {agents} $i$ and $j \in \mathbf{A}$, or else the network $\mathfrak{g}$ is disconnected. A disconnected network $\mathfrak{g}$ can be partitioned into disjoint sub-networks $\mathfrak{g}(\kappa_1), \mathfrak{g}(\kappa_2),...,\mathfrak{g}(\kappa_n)$, where $\kappa_1 \cup \kappa_2 \cup \ldots \cup \kappa_{n} = \mathbf{A}, \kappa_{r} \cap \kappa_{s} = \phi$ for all $r, s \in \{1, 2, \ldots, n\}, r \neq s$, such that any pair of {agents} $i$ and $j$ are connected if and only if $i$ and $j$ are elements of the same set $\kappa_{r}$. Such sub-networks are called as \textit{components} of the network $\mathfrak{g}$. 

A \textit{complete} network is one where every agent is connected to every other agent. A \textit{null (or empty)} network is one where there are no links --- that is, no agent is connected to any agent. A component which is complete is called a \textit{clique}. 

An \textit{$r$-regular} network is one where each agent has exactly $r$ neighbors. An \textit{$N$ agent star} network consists of a single universal agent and $N-1$ pendant agents. A universal agent is one who is adjacent to other $N-1$ pendant agents. A pendant agent is one who is adjacent to only the universal agent. A star component is a component which is a star (sub-)network. 

The \textit{complement} of network $\mathfrak{g}$, denoted by $\mathfrak{g^c}$, is a network on the same set of agents such that $\langle ij \rangle \in \mathfrak{g^c}$ if and only if $\langle ij \rangle \not \in \mathfrak{g}$.

Table \ref{tab:notation-summary} summarizes all notations used in this paper.


\subsection{Utility of an Agent in a Social Storage Network}\label{sec:frameworks}
Data stored on local hard disk is in danger of getting lost or damaged due to local disk failure. Hence, to keep data safe, each data owner wants to backup its data. 
Social storage systems use two types of techniques to backup data. The first is erasure coding, and the second is replication\footnote{\cite{erasure-vs-replication} perform quantitative comparisons between these two techniques.} \citep{f2frecoverability}. Erasure coding is the data redundancy technique in which a data object is divided into $x$ blocks and recoded into $y$ blocks ($y>x$). Then the main data block can be recovered from any subset of $y$. Replication is the data redundancy technique in which an agent maintains a single data copy on each partner's storage device. In this paper, we consider the replication technique.
As hard disks are prone to failure, there is a chance that a data owner's backup partner's hard disk also fails. It is likely that each backup partner's hard disk fails, so each data owner's interest lies in recovering at least one copy of its data so that the value of the data is intact. It is not hard to observe that each agent's chance of data recovery,  given a particular disk failure rate, depends on its neighborhood size. The more the number of neighbors, the higher the chance of data recovery. 
\par In the absence of costs to add and maintain links, the aim of each agent in a social storage network is to maximize {the} chance of data recovery, given that the local copy of data has been damaged or lost. However, every agent incurs a cost for each of its links. Keeping this in mind, we define the utility of each agent in the network under two frameworks. The utility of agent $i$ in the network $\mathfrak{g}$ is represented by a function  $u_{i}:\mathcal{G}(\mathbf{A})\rightarrow \mathbb{R}$, where $\mathcal{G}$ is the set of all networks, ($\mathfrak{g}$ is an element of $\mathcal{G}$). The profile of utility functions $(u_1,. . ., u_n)$ is a vector of utilities for all agents. We first define the parameters required to define the utility function. $\lambda \in (0,1)$ is the average disk failure rate in the network. That is, at any point in time, the probability of failure of agent $i$'s disk is $\lambda$. For data owner (agent) $i$, the value of the local data that is to be backed up, is $\beta_{i}$. Each agent incurs a cost $c$ to maintain a link. That is, the total cost of adding and/ or maintaining a link is $2c$, and we assume that the agents connected by the link equally share this cost. This cost can be interpreted as the cost required for infrastructure, bandwidth, time, etc. There is no {\em{additional}} cost to add a new link. Each agent $i$ also has allocated budget $b_{i}$ for maintaining its links. Further, each agent $i$ has a certain amount of local data $d_{i}$ that the agent wants to store on  storage devices of backup partners. Also, each agent $i$ has a certain amount of storage space $s_{i}$ available for sharing with other agents in the network. Using these parameters, we now define the utility of an agent in the following two frameworks.

\subsubsection{Multi-Objective Framework (MO-Framework)}\label{subsub:framework1}
In the first framework, there are two objective functions that each agent $i$ tries to optimise. Firstly, each agent $i$ wants to minimise the total cost associated with maintaining the links, i.e., $c\eta_i(\mathfrak{g})$. Secondly, each agent wants to maximise the expected value of backup data. Since the disk failure rate is $\lambda$, and $i$ has $\eta_i(\mathfrak{g})$ neighbors, the expected value of $i$'s backup data is $\beta_{i}(1-\lambda^{\eta_i(\mathfrak{g})})$. 
Note that, as each agent is interested in ``how many links to maintain'', we look at the expected value of an agent's backup data {\em{given}} that the local copy of the agent's data has been damaged or lost. 
For each agent $i$, these two objective functions can be written as a single objective function as follows: 
\begin{equation}\label{eq:objectivefunction}
[\alpha( \beta_{i}(1-\lambda^{\eta_{i}(\mathfrak{g})}))]-[(1-\alpha) (c\eta_i(\mathfrak{g}))], \quad \text{where} \quad \alpha\in (0,1).
\end{equation}
For elegance of results on stability, we let $\alpha=1/2$. We drop the factor of $1/2$ from (\ref{eq:objectivefunction}), for all $i \in \mathbf{A}$, and just consider the following utility function $u_{i}(\mathfrak{g})$, for all $i \in \mathbf{A}$, for the given network $\mathfrak{g}$:
\begin{equation}\label{eq:utility-mo-frame}
u_{i}(\mathfrak{g})=\beta_{i}(1-\lambda^{\eta_{i}(\mathfrak{g})})-c \eta_{i}(\mathfrak{g}).
\end{equation}

\textcolor{mycolor}{As evident above, this is no longer a MO-optimization problem. We have done this conversion because (a) this is one of the easiest way to solve a MO-problem, and (b) our focus is on the network formation game, stability, efficiency, and contentedness of the network. Solving the MO-optimization problem without this conversion is part of future work, and we discuss that in Section 7. We also still call this a MO-framework a nomenclature (to differentiate with Single Objective (SO)-framework discussed below).}  

Each agent $i$ wants to maximise $u_i(\mathfrak{g})$ over all possible values of $\eta_i(\mathfrak{g})$. The social optimisation problem can be formulated as
\begin{center}
${max\ }\sum\limits_{i \in \mathbf{A}}(u_{i}(\mathfrak{g}))$\\ \quad \\
\textit{{such that}} \\ \quad \\
$\eta_{i}(\mathfrak{g})=\sum\limits_{i,j\in \mathfrak{g}}^{}a_{ij}$ and \\

$ s_{i}\geq \sum\limits_{j\in \eta_{i}(\mathfrak{g})}^{}d_{j}a_{ij},$\\ \quad \\ 
where,
\[ a_{ij} =
  \begin{cases}
    1       & \quad \text{if } i \text{  and } j \text{  have a backup agreement},\\
    0  & \quad \text{otherwise}.\\
  \end{cases}
\]
\end{center}

\subsubsection{Single Objective Framework (SO-Framework)}\label{subsub:framework2}
In this framework, each agent $i$ has only one objective (as compared to two in the previous framework). 
Each agent tries to maximise the expected value of backup data. 
The cost, $c\eta_i(\mathfrak{g})$, incurred by agent $i$ to maintain links (which was 
the second objective function in the MO-Framework), appears in constraints here. This is because, in the SO-Framework, every agent $i$ has 
an allocated budget, $b_i$, towards backup agreements. 

For the given network  $\mathfrak{g}$, utility $u_{i}(\mathfrak{g})$ of agent $i$ is
 
$$u_{i}(\mathfrak{g})=\beta_{i}(1-\lambda^{\eta_{i}(\mathfrak{g})}).$$

Each agent $i$ wants to maximise $u_i(\mathfrak{g})$ over all possible values of $\eta_i(\mathfrak{g})$. The social optimisation problem can be formulated as
\begin{center}
${max\ }\sum\limits_{i \in \mathbf{A}}(u_{i}(\mathfrak{g}))$\\ \quad \\
\textit{{such that}}\\ \quad \\
$\eta_{i}(\mathfrak{g})=\sum\limits_{i,j\in \mathfrak{g}}^{}a_{ij},$ \\ 
$s_{i}\geq \sum\limits_{j\in \eta_{i}(\mathfrak{g})}^{}d_{j}a_{ij}, and$  \\ 
\text{}\\ \text{}\\ 
$b_{i}\geq c\eta_{i}(\mathfrak{g}),$\\ \quad \\
where,
\[ a_{ij} =
  \begin{cases}
    1       & \quad \text{if } i \text{  and } j \text{  have a backup agreement},\\
    0  & \quad \text{otherwise}.\\
  \end{cases}
\]
\end{center}

\begin{remark}
{The utility function in the SO-Framework may be reduced to the Constant Absolute Risk Aversion (CARA \cite{Pratt-CARA})}\footnote{{We refer  the readers to a survey by \cite{Meyer-CARA-2010} on functional forms for the utility functions of agents, based on their risk taking abilities.}} function. In the context of social storage, agents are {\em{risk averse}} as they do not want to ``risk'' losing their data, which is what the above utility function captures. This function may also be viewed as the Cumulative Distribution Function of an Exponential distribution, given that the disk failure rate is Poisson.
\end{remark}

\begin{remark}
We explicitly write the formulation of the social optimization problems in the two different frameworks, as above, primarily to highlight that the cost is moved from the utility function in the MO-framework to budget constraints in the SO-framework. Our goal is not to solve these problems but rather analyze the corresponding network properties, for example, the efficiency of the resulting networks.   
\end{remark}

\subsection{Bilateral Stability, Efficiency and Contentedness} \label{sec:background} 
There is a need for a solution concept which is suitable for characterizing the storage network formation game. A strategic network formation game (NFG) is described as below. NFG consists of a set of agents $\mathbf{A}=\{1, 2,...,N\}$ who represent nodes in the network $\mathfrak{g}$ --- if $i$ is an agent, we use $i \in \mathbf{A}$ and $i \in \mathfrak{g}$ synonymously. In this setting, pairs of agents may form new links thereby increasing their expected value of backup data, by incurring higher costs to maintain links. Pairs of agents may also delete existing links, thereby reducing the costs incurred, but reducing the probability of retrieving the data too. 
The shape of the network is not only defined by each agent's cost and benefit trade off, but also by limitation of resources available with the agents. 

\par {\emph{Pairwise stability}} introduced by \cite{jackson} (see Definition \ref{def:jackson:stability}) is an appropriate solution concept  when agents require mutual consent while adding a link, but any agent can delete any of its existing links without consent. 

\begin{definition}\label{def:jackson:stability}\citep{jackson}
	\textit{A network $\mathfrak{g}$ is pairwise stable if and only if}
\begin{enumerate}
	\item for all $\langle ij\rangle \in \mathfrak{g}$, $u_{i}(\mathfrak{g})\geq u_{i}(\mathfrak{g}-\langle ij\rangle)$ and  $u_{j}(\mathfrak{g})\geq u_{j}(\mathfrak{g}-\langle ij\rangle)$, and
	\item for all $\langle ij\rangle \not \in \mathfrak{g}$, if $u_{i}(\mathfrak{g}+\langle ij\rangle) > u_{i}(\mathfrak{g})$, then  $u_{j}(\mathfrak{g}+\langle ij\rangle)< u_{j}(\mathfrak{g})$.
\end{enumerate}
\end{definition}

\par We modify the pairwise stability concept introduced by \cite{jackson} so as to ensure that deletion of links also happens with mutual consent. We call this modified pairwise stability as {\emph{bilateral stability}}. 

{\emph{Bilateral equilibrium}} \citep{Goyal-bilateral-deviation-important-paper} is another refinement of {\emph{pairwise stability}} \citep{jackson}. 
\cite{Goyal-bilateral-deviation-important-paper} define strategies of agents as sets of links they would want to add, and define {\emph{bilateral equilibrium}} as a strategy profile that is a Nash equilibrium (that is, no agent benefits by unilaterally deviating) and is pairwise stable (where both addition and deletion require mutual consent). 
The set of all bilaterally stable strategies (see Definition \ref{def:stability}) is a superset of the set of all bilateral equilibrium strategies \citep{Goyal-bilateral-deviation-important-paper}, as discussed earlier.  

The modified definition of pairwise stability we use for social storage is given below.
\begin{definition}\label{def:stability}
\textit{A social storage network $\mathfrak{g}$ is bilaterally stable if and only if}
\begin{enumerate}
\item for all $\langle ij\rangle \in \mathfrak{g}$, if $u_{i}(\mathfrak{g}-\langle ij\rangle)>u_{i}(\mathfrak{g})$, then $u_{j}(\mathfrak{g}-\langle ij\rangle)< u_{j}(\mathfrak{g})$, and
\item for all $\langle ij\rangle \not \in \mathfrak{g}$, if $u_{i}(\mathfrak{g}+\langle ij\rangle) > u_{i}(\mathfrak{g})$, then $u_{j}(\mathfrak{g}+\langle ij\rangle) < u_{j}(\mathfrak{g})$.
\end{enumerate}
\end{definition}

Definition \ref{def:stability} is a network stability concept, whose first part states that no pair of agents with a link between them, wants to delete the link, and the second part states  that no pair of agents has an incentive to add a new link. Note that neither link formation (addition) nor link deletion can happen without mutual consent. Our further discussions about social storage stability stands on Definition \ref{def:stability}.

\begin{remark}
\textcolor{mycolor}{Definition \ref{def:stability} can be rewritten using only conditions on the addition of links by rewriting the first condition (that is, the deletion condition) as a condition for addition in $\mathfrak{g^c}$. Similarly, we can also rewrite Definition \ref{def:stability} using only deletion conditions.}
\end{remark}



Now, we generalize Definition \ref{def:stability} so that it is suitable as a solution concept for the two frameworks discussed in the previous section.

For this, we first define {\em{remaining storage}} available with agent $i$ in a network $\mathfrak{g}$ as 

\begin{equation}\label{eq:remaining-storage}
RS_{i}= s_{i} - \sum\limits_{j\in \eta_{i}(\mathfrak{g})}^{}d_{j}a_{ij},
\end{equation}
\noindent and {\em{remaining budget}} of agent $i$ in $\mathfrak{g}$ as 

\begin{equation}\label{eq:remaining-budget}
RB_{i}=b_{i} - \sum\limits_{j\in \eta_{i}(\mathfrak{g})}^{}c a_{ij},
\end{equation}

\noindent where 

\[ a_{ij} =
  \begin{cases}
    1       & \quad \text{if } i \text{  and } j \text{  have a backup agreement},\\
    0  & \quad \text{otherwise}.\\
  \end{cases}
\]

For the MO-Framework, where we have storage constraints, the following modification of Definition \ref{def:stability} is appropriate. 

\begin{definition}\label{def:stability-storage-constraints}
\textit{A social storage network $\mathfrak{g}$ with storage constraints is bilaterally stable if and only if}
\begin{enumerate}
\item for all $\langle ij\rangle \in \mathfrak{g}$, if $u_{i}(\mathfrak{g}-\langle ij\rangle)>u_{i}(\mathfrak{g})$, then $u_{j}(\mathfrak{g}-\langle ij\rangle)< u_{j}(\mathfrak{g})$, and 
\item for all $\langle ij\rangle \not \in \mathfrak{g}$, if $[u_{i}(\mathfrak{g}+\langle ij\rangle)  > u_{i}(\mathfrak{g})$ and $RS_{j}\geq d_{i}]$, then

 \quad \quad \quad \quad \quad $[u_{j}(\mathfrak{g}+\langle ij\rangle) < u_{j}(\mathfrak{g})$  or $RS_{i}<d_{j}]$. 
\end{enumerate}
\end{definition}

In the above definition, there is no change in the link deletion condition of Definition \ref{def:stability}. However, while adding a link, an agent has to ensure that the other agent has sufficient storage to store its data (besides ensuring increase in its utility). We assume that the agents are rational and self-centered (and hence, it is up to agent $j$ to check whether agent $i$ has sufficient storage for agent $j$'s data or not). 

Next, we adapt Definition \ref{def:stability} for the SO-Framework, where we have storage and budget constraints. 

\begin{definition}\label{def:stability-storage-budget-constraints}
\textit{A social storage network $\mathfrak{g}$ with storage and budget constraints is bilaterally stable if and only if}
\begin{enumerate}
\item for all $\langle ij\rangle \in \mathfrak{g}$, if $u_{i}(\mathfrak{g}-\langle ij\rangle)>u_{i}(\mathfrak{g})$, then $u_{j}(\mathfrak{g}-\langle ij\rangle)< u_{j}(\mathfrak{g})$, and 
\item for all $\langle ij\rangle \not \in \mathfrak{g}$, if $[u_{i}(\mathfrak{g}+\langle ij\rangle)  > u_{i}(\mathfrak{g})$ and $RS_{j}\geq d_{i}$ and $RB_{i}\geq c)]$, then

 \quad \quad \quad \quad  \quad $[u_{j}(\mathfrak{g}+\langle ij\rangle) < u_{j}(\mathfrak{g})$  or $RS_{i}<d_{j}$ or $RB_{j}<c]$. 
\end{enumerate}
\end{definition}

As in the case of MO-Framework, there is no change in the link deletion condition of Definition \ref{def:stability}. However, while adding a link, an agent has to ensure that the other agent has sufficient storage to store its data and agent itself has sufficient budget to form the link (besides ensuring increase in its utility). This is, again, based on the assumption that the agents are rational and self-centered. 

We, now, define efficient and contented social storage networks, with as well as without constraints. Efficient social storage networks are social storage networks where as many agents as possible achieve maximum utility, whereas contented social storage networks are those where all agents achieve maximum utility.
\begin{definition}\label{def:welfare}
\textit{A social storage network $\mathfrak{g}$ is efficient with respect to utility profile $(u_1,..., u_N)$ if $\sum\limits_{i}u_i(\mathfrak{g}) \geq \sum\limits_{i}u_i(\mathfrak{g}')$, for all $\mathfrak{g}' \in \mathcal{G}(N)$}.
\end{definition}
\begin{definition}\label{def:welfare-storage-constraint}
\textcolor{mycolor}{\textit{A social storage network $\mathfrak{g}$ with storage constraints is efficient with respect to utility profile $(u_1,..., u_N)$ if $\sum\limits_{i}u_i(\mathfrak{g}) \geq \sum\limits_{i}u_i(\mathfrak{g}')$, for all $\mathfrak{g}' \in \mathcal{G}(N)$ where $RS_{i} \geq 0$ for all $i \in \mathfrak{g}'$ }.}
\end{definition}
\begin{definition}\label{def:welfare-storage-budget-constraints}
\textcolor{mycolor}{\textit{A social storage network $\mathfrak{g}$ with storage and budget constraints is efficient with respect to utility profile $(u_1,..., u_N)$ if $\sum\limits_{i}u_i(\mathfrak{g}) \geq \sum\limits_{i}u_i(\mathfrak{g}')$, for all $\mathfrak{g}' \in \mathcal{G}(N)$ where $RS_{i} \geq 0$ and $RB_{i} \geq 0$, for all $i \in \mathfrak{g}'$ }.}
\end{definition}

\begin{definition}\label{def:welfare-contented}
A social storage network $\mathfrak{g}$ is contented with respect to utility profile $(u_1,..., u_N)$ if, for each $i \in A$, $u_i = \underset{\eta_{i}(\mathfrak{g})} {max\ }(\beta_{i}(1-\lambda^{\eta_{i}(\mathfrak{g})})-c \eta_{i}(\mathfrak{g}))$, under the MO-Framework, 
and $u_i = \underset{\eta_{i}(\mathfrak{g})} {max\ }(\beta_{i}(1-\lambda^{\eta_{i}(\mathfrak{g})}))$, under the SO-Framework. 
\end{definition}

\textcolor{mycolor}{
\begin{remark}
If maximum possible utility is not achievable by a one or more agents because of storage or budget constraints, then those agents are not contented, and hence, the social storage network is not contented. Therefore, we do not define contentedness with constraints.
\end{remark}
}

\section{Stable Network Characterization and Stability Point}\label{sec:stability-point}
In this section, we study the stability aspects of social storage networks considering the utilities of agents and the solutions concept as defined in Sections \ref{sec:frameworks} and \ref{sec:background}, respectively.

Free riding is a situation where an agent offers less storage space, but consumes more. To deal with free riding, many backup systems have used the concept of symmetric resource sharing (or equal resource trading). Internet Cooperative Backup System \citep{cbssystem}, PeerStore \citep{peerstore-system}, Pastiche \citep{pastiche}, are a few examples of P2P backup systems, which use symmetric resource trading to mitigate free riding.

We term a social storage system with symmetric resource sharing as a symmetric social storage system. We consider symmetry in the agents'  value of their respective data, storage space available, amount of data to be shared, and budget in two different scenarios. These scenarios are discussed next.

\begin{definition}\label{def:symmetric-sv-network}
A symmetric value network (SVN) $\mathfrak{g}$ is a social storage network where the benefit (value) associated with backed-up data is the same for all agents in the network, i.e., $\beta_{i} = \beta_{j}$  (say  $\beta$), for all $i, j \in \mathbf{A}$, and hence, utility of each agent $i$ in the network is
\begin{center}
\begin{equation}\label{eq:utility}
\begin{split}
u_{i}(\mathfrak{g})=\beta(1-\lambda^{\eta_{i}(\mathfrak{g})})-c \eta_{i}(\mathfrak{g}) \text{\ for MO-Framework \ref{subsub:framework1} and},  \\
u_{i}(\mathfrak{g})=\beta(1-\lambda^{\eta_{i}(\mathfrak{g})}) \text{ for SO-Framework \ref{subsub:framework2}},\\
\end{split}
\end{equation}
where  $\beta,\lambda, c \in (0,1)$. 
\end{center}
\qed
\end{definition}


\begin{definition}\label{def:symmetric-sb-network}
A symmetric resource network (SRN) $\mathfrak{g}$ is a social storage network where all agents in $\mathfrak{g}$ have an equal amount of (limited) storage space available to them, an equal amount of data that they want to backup, and have the same budget. That is, for all $i, j\in \mathfrak{g}$, $s_{i} = s_{j}$ (say $s$),  $d_{i} = d_{j}$ (say $d$), and $b_{i} = b_{j}$ (say $b$).
\end{definition}

\begin{remark}
\textcolor{mycolor}{From this symmetric setup, we can move to real life scenarios in many ways. We can have different value of cost and benefit for different agents. Another way to include heterogeneity in this model is by using the concept of Social Range Matrix \citep{stefan-social-range-matrix}, which we have done recently \citep{jain-social-storage}. Here, each agent is concerned about its perceived utility, which is a linear combination of its utility as well as others utilities (depending upon whether the pair are friends, enemies or do not care about each other).}
\end{remark}

Now, we work with SVN under the MO-Framework, where each agent in the given network $\mathfrak{g}$ has as much storage as is required for all other agents in $\mathfrak{g}$. That is, 
\begin{equation}\label{eq:SVN-enough-storage}
s_{i} \geq \sum\limits_{\substack{j\in \mathfrak{g}, \\ j\not=i}}^{} d_{j}, \quad \mbox{for all\ \ } i \in \mathfrak{g}.
\end{equation}
Note that $s_{i}$ may be different from some other $s_{j}$. For convenience, we shall call such a network as SVN with sufficient storage. The reason we do this is that it leads to the results of the realistic scenario, that is, SV-SRN under the MO-Framework. 

\begin{remark}
An SV-SRN, $\mathfrak{g}$ under the MO-Framework is a social storage network where the utility of each agent $i\in \mathfrak{g}$ is $u_{i}(\mathfrak{g})=\beta_{i}(1-\lambda^{\eta_{i}(\mathfrak{g})})-c \eta_{i}(\mathfrak{g})$, and for all agents $i, j \in \mathfrak{g}$, $\beta_i = \beta_j, s_i = s_j$, and $d_i = d_j$.
\end{remark}

Next, we work with SVN under the SO-Framework where each agent in the given network $\mathfrak{g}$ has as much storage as is required for all other agents in $\mathfrak{g}$, and 
each agent in the given network $\mathfrak{g}$ has as much budget as is required to maintain backup-partnerships with every other agent in $\mathfrak{g}$. 
That is, 
\begin{equation}\label{eq:SRN-enough-budget}
s_{i}\geq \sum\limits_{j\in A, \\ j\not=i}^{}d_{j}, \quad \mbox{for all\ \ } i \in \mathfrak{g}, \quad \mbox{and},  \quad  b_{i}\geq c(N-1),  \quad \mbox{for all\ \ } i \in \mathfrak{g} \quad \text{where } N = |\mathbf{A}|.  
\end{equation}

As in the SO-Framework, this leads to the scenario of SV-SRN under the SO-Framework. However, for SO-Framework, we present the results for SRN directly rather than SV-SRN. This is because SV-SRN is a subset of SRN and so, what holds for SRN does for SV-SRN as well.

\begin{remark}
An SV-SRN, $\mathfrak{g}$ under the SO-Framework is a social storage network where the utility of each agent $i\in \mathfrak{g}$ is $u_{i}(\mathfrak{g})=\beta_{i}(1-\lambda^{\eta_{i}(\mathfrak{g})})$, and for all agents $i, j \in \mathfrak{g}$, $\beta_i=\beta_j, s_i=s_j, d_i = d_j$, and $b_i=b_j$.
\end{remark}

For ease of exposition, from now onwards, whenever we discuss SVN networks, we will always assume sufficiency of every resource --- that is, sufficient storage under MO-Framework, and sufficient storage and budget under SO-Framework. Whenever we discuss SRN or SV-SRN networks, we will not make these assumptions of sufficiency. These are summarised in Table  \ref{table:Different-Networks}.

\begin{table}[h!]
\centering
\caption{Summary of Network Study under Different Frameworks with/ without Sufficient Resources.}
\begin{tabular}{lll}
\hline
\textbf{Network Type} & \textbf{Framework} & \textbf{Resource Availability} \\ 
\hline
&&\\
SVN & MO-Framework & Sufficient Storage.\\ 
&&\\
SV-SRN & MO-Framework & Limited Storage and Limited Budget.\\
&&\\
SVN & SO-Framework & Sufficient Storage and Sufficient Budget.\\ 
&&\\
SRN & SO-Framework & Limited Storage and Limited Budget.\\
&&\\
\hline
\end{tabular}
\label{table:Different-Networks}
\end{table}



In the following subsections, we characterize bilaterally stable symmetric social storage networks, by first deriving the deviation conditions --- conditions for an agent to have an incentive to add or delete a link, given the network parameters (that is, disk failure rate $\lambda$, value of backup data $\beta$, and the cost of maintaining a link $c$). This also gives us necessary and sufficient conditions for bilateral stability, in terms of the network parameters ($\lambda$, $\beta$, and $c$). Further, this  makes it easier to visualize a bilaterally stable network, and we use these conditions to derive the ideal neighborhood size for having a bilaterally stable network. We term this ideal neighborhood size as the {\it stability point} (see Definition \ref{def:stability-point}).

\begin{definition}\label{def:stability-point}
Given a network $\mathfrak{g}$, we define the \textit{stability point} $\hat{\eta}$ of $\mathfrak{g}$ as the neighborhood size (degree) such that no agent in $\mathfrak{g}$ has any incentive to increase its neighborhood size to more than $\hat{\eta}$ and to decrease it to less than $\hat{\eta}$.
\end{definition}

{We, now, characterize SVN and SV-SRN under the MO-Framework, and SVN and SRN under the SO-Framework. Further, we prove uniqueness of the {\emph{stability point}} of these networks and also show that the stability point is independent of the number of agents for all cases under the MO-Framework and for all cases but one trivial case under the SO-Framework, the trivial case being SVN with sufficient storage and sufficient budget where it is easy to see that the complete network is the only stable network.}

\subsection{Characterization Under the MO-Framework}
In this subsection, we characterize bilaterally stable SVN and SV-SRN under the MO-Framework. We, first, derive conditions under which an agent has an incentive to add a new link or delete an existing link. Then, we derive necessary and sufficient conditions for bilateral stability of SVN and SV-SRN under the MO-Framework, and prove that the stability point of these networks is unique and independent of the number of agents.
\begin{lemma}\label{lemma:add}
\textcolor{mycolor}{In an SVN $\mathfrak{g}$, under the MO-Framework, for any agent $i\in \mathfrak{g}$, forming a partnership with another agent $j\in \mathfrak{g}$ is beneficial if and only if $c < \beta \lambda^{\eta_{i}(\mathfrak{g})}[1-\lambda]$.}
\end{lemma}
\begin{proof}
As $\mathfrak{g}$ is an SVN under the MO-Framework, 

$u_{i}(\mathfrak{g})=\beta(1-\lambda^{\eta_{i}(\mathfrak{g})})-c\eta_{i}(\mathfrak{g})$, for all $i \in \mathfrak{g}$.\\

For $i,j \in \mathfrak{g}$, if $\langle ij \rangle \not \in \mathfrak{g}$, then

 $u_{i}(\mathfrak{g}+\langle ij\rangle)=[\beta(1-\lambda^{\eta_{i}(\mathfrak{g})+1})]-[c (\eta_{i}(\mathfrak{g})+1)]$.\\

Adding a new link or backup partner is beneficial for $i$ if and only if 

$u_{i}(\mathfrak{g}+\langle ij \rangle)> u_{i}(\mathfrak{g})$, if and only if

$[\beta(1-\lambda^{\eta_{i}(\mathfrak{g})+1})-c(\eta_{i}(\mathfrak{g})+1)] > [\beta(1-\lambda^{\eta_{i}(\mathfrak{g})})-c(\eta_{i}(\mathfrak{g}))]$, if and only if

$c < \beta[\lambda^{\eta_{i}(\mathfrak{g})}-\lambda^{\eta_{i}(\mathfrak{g})+1}]$.	\qed
\end{proof}

\begin{remark}\label{remark:add}

\textcolor{mycolor}{The term on the left-hand side of the inequality in Lemma \ref{lemma:add} is the cost $c$ that agent $i$ incurs in order to add a new neighbour $j$. The term on the right-hand side is the expected benefit that agent $i$ receives by forming a new link with neighbour $j$.}
\end{remark}
\begin{lemma}\label{lemma:delete}
\textcolor{mycolor}{In an SVN $\mathfrak{g}$, under the MO-Framework, for any agent $i\in \mathfrak{g}$, breaking an existing partnership with another agent $j\in \mathfrak{g}$ is beneficial if and only if 
\\ $c > \beta \lambda^{\eta_{i}(\mathfrak{g})-1}[1-\lambda]$.}
\end{lemma}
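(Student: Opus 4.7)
The plan is to mirror the argument given for Lemma \ref{lemma:add} almost line by line, since the roles of addition and deletion are symmetric in this utility function. I will start from the explicit form of agent $i$'s utility in an SVN under the MO-Framework, namely $u_{i}(\mathfrak{g}) = \beta(1-\lambda^{\eta_{i}(\mathfrak{g})}) - c\eta_{i}(\mathfrak{g})$, and then write down the corresponding expression after the link $\langle ij\rangle$ is removed. Since deletion of one link decreases agent $i$'s neighborhood size by exactly one, we get $u_{i}(\mathfrak{g}-\langle ij\rangle) = \beta(1-\lambda^{\eta_{i}(\mathfrak{g})-1}) - c(\eta_{i}(\mathfrak{g})-1)$.

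Next, I would appeal to the first clause of Definition \ref{def:stability} (the deletion criterion for bilateral stability) to assert that breaking the partnership is beneficial for $i$ if and only if $u_{i}(\mathfrak{g}-\langle ij\rangle) > u_{i}(\mathfrak{g})$. Substituting the two expressions above and simplifying, the $\beta$ terms collapse to $-\beta\lambda^{\eta_{i}(\mathfrak{g})-1}$ on the left and $-\beta\lambda^{\eta_{i}(\mathfrak{g})}$ on the right, while the cost terms contribute a $+c$ on the left. Rearranging this inequality yields the desired equivalent form $c > \beta[\lambda^{\eta_{i}(\mathfrak{g})-1} - \lambda^{\eta_{i}(\mathfrak{g})}]$.

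Each step in this chain is an if-and-only-if, so both directions of the lemma follow simultaneously, as in the proof of Lemma \ref{lemma:add}. I do not anticipate any genuine obstacle: the argument is a direct utility comparison, the SVN symmetry ($\beta_i = \beta$ for all $i$) means no agent-specific $\beta_i$ needs to be tracked, the MO-Framework makes the cost term simply linear in the neighborhood size, and the sufficient-storage assumption (\ref{eq:SVN-enough-storage}) guarantees that deletion is feasible without any resource-based caveats. The only mild care needed is to note that we are invoking the deletion clause of Definition \ref{def:stability} rather than the addition clause used for Lemma \ref{lemma:add}, and that the strict inequality $u_i(\mathfrak{g}-\langle ij\rangle) > u_i(\mathfrak{g})$ translates to the strict inequality $c > \beta[\lambda^{\eta_{i}(\mathfrak{g})-1} - \lambda^{\eta_{i}(\mathfrak{g})}]$, matching the statement of the lemma exactly.
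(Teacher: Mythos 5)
Your proposal is correct and follows essentially the same route as the paper's own proof: write down $u_i(\mathfrak{g}-\langle ij\rangle)$ explicitly, compare it with $u_i(\mathfrak{g})$ via the deletion clause of Definition \ref{def:stability}, and simplify the resulting inequality to $c > \beta[\lambda^{\eta_{i}(\mathfrak{g})-1}-\lambda^{\eta_{i}(\mathfrak{g})}]$ through a chain of equivalences. The algebra and the if-and-only-if structure match the paper exactly.
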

\begin{proof}
As $\mathfrak{g}$ is an SVN, under the MO-Framework, $u_{i}(\mathfrak{g})=\beta(1-\lambda^{\eta_{i}(\mathfrak{g})})-c\eta_{i}(\mathfrak{g})$, for all $i \in \mathfrak{g}$.\\

If $\langle ij \rangle \in \mathfrak{g}$, then $u_{i}(\mathfrak{g}-\langle ij\rangle)=[\beta(1-\lambda^{\eta_{i}(\mathfrak{g})-1})]-[c (\eta_{i}(\mathfrak{g})-1)]$.\\

Deleting an existing link is beneficial for any agent $i$ if and only if 

$u_{i}(\mathfrak{g}-\langle ij \rangle)> u_{i}(\mathfrak{g}))$, if and only if 

$[\beta(1-\lambda^{\eta_{i}(\mathfrak{g})-1})-c(\eta_{i}(\mathfrak{g})-1)] > [\beta(1-\lambda^{\eta_{i}(\mathfrak{g})})-c(\eta_{i}(\mathfrak{g}))]$, if and only if

$c > \beta[\lambda^{\eta_{i}(\mathfrak{g})-1}-\lambda^{\eta_{i}(\mathfrak{g})}]$.							\qed
\end{proof}

\begin{remark}
\textcolor{mycolor}{We interpret Lemma \ref{lemma:delete} in lines similar to Remark \ref{remark:add}.}
\end{remark}

\begin{theorem}\label{theorem:stability}
An SVN $\mathfrak{g}$, under the MO-Framework, is bilaterally stable if and only if
\begin{enumerate}
\item \textcolor{mycolor}{for all $\langle ij\rangle \in \mathfrak{g}$, if $\beta \lambda^{\eta_{i}(\mathfrak{g})-1}[1-\lambda]< c$,  then $\beta \lambda^{\eta_{j}(\mathfrak{g})-1}[1-\lambda]>c$, and
\item for all $\langle ij\rangle \not \in \mathfrak{g}$, if $\beta \lambda^{\eta_{i}(\mathfrak{g})}[1-\lambda]>c$, then $\beta\lambda^{\eta_{j}(\mathfrak{g})}[1-\lambda]<c$.}
\end{enumerate}
\end{theorem}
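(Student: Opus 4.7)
The plan is to prove the theorem as a direct translation of the bilateral stability conditions in Definition \ref{def:stability} into inequalities on the network parameters, using the characterizations provided by Lemmas \ref{lemma:add} and \ref{lemma:delete}. Because both lemmas are stated as biconditionals (``if and only if''), each utility comparison in the stability definition has an equivalent expression in terms of $\beta$, $\lambda$, $c$, and the neighborhood sizes. Since the ``only if'' and the ``if'' directions of the theorem use the same equivalences, I plan to present them jointly as a chain of biconditionals, rather than as two separate arguments.

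I would first handle Part~1 (the link-deletion clause). Fix any $\langle ij\rangle \in \mathfrak{g}$. By Lemma \ref{lemma:delete} applied to agent $i$, deleting $\langle ij\rangle$ strictly benefits $i$ exactly when $c > \beta[\lambda^{\eta_i(\mathfrak{g})-1} - \lambda^{\eta_i(\mathfrak{g})}]$, i.e., exactly when $\beta[\lambda^{\eta_i(\mathfrak{g})-1} - \lambda^{\eta_i(\mathfrak{g})}] < c$. Applying the same lemma to $j$ but with the reversed strict inequality (which is valid since the lemma is an iff), $u_j(\mathfrak{g}-\langle ij\rangle) < u_j(\mathfrak{g})$ is equivalent to $\beta[\lambda^{\eta_j(\mathfrak{g})-1} - \lambda^{\eta_j(\mathfrak{g})}] > c$. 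Plugging both equivalences into clause~1 of Definition \ref{def:stability} yields clause~1 of the theorem.

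Part~2 (the link-addition clause) is handled symmetrically using Lemma \ref{lemma:add}. For $\langle ij\rangle \notin \mathfrak{g}$, $u_i(\mathfrak{g}+\langle ij\rangle) > u_i(\mathfrak{g})$ is equivalent to $\beta[\lambda^{\eta_i(\mathfrak{g})} - \lambda^{\eta_i(\mathfrak{g})+1}] > c$, and $u_j(\mathfrak{g}+\langle ij\rangle) < u_j(\mathfrak{g})$ is equivalent to $\beta[\lambda^{\eta_j(\mathfrak{g})} - \lambda^{\eta_j(\mathfrak{g})+1}] < c$, again by the iff form of Lemma \ref{lemma:add}. Substituting these into clause~2 of Definition \ref{def:stability} yields clause~2 of the theorem. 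Taking Parts~1 and 2 together gives the desired biconditional between bilateral stability and the two parameter conditions.

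I do not expect any serious obstacle: both lemmas are already biconditionals, so the argument is essentially substitution. The one point where I will be careful is the matching of strict inequalities. Specifically, to conclude $u_j(\mathfrak{g}-\langle ij\rangle) < u_j(\mathfrak{g}) \iff \beta[\lambda^{\eta_j(\mathfrak{g})-1} - \lambda^{\eta_j(\mathfrak{g})}] > c$ I am invoking Lemma \ref{lemma:delete} in its ``negated'' form; this is justified because the lemma characterizes the strict-increase condition with a strict inequality on $c$, so the strict-decrease condition is characterized by the opposite strict inequality (the equality case corresponds to $u_j(\mathfrak{g}-\langle ij\rangle) = u_j(\mathfrak{g})$ and is consistently excluded on both sides). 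The analogous remark applies to Lemma \ref{lemma:add} in Part~2.
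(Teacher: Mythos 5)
Your proposal is correct and is essentially the paper's own argument: the paper proves this theorem by directly citing Lemmas \ref{lemma:add} and \ref{lemma:delete} together with Definition \ref{def:stability}, which is exactly the substitution you carry out. Your extra care about the strict-inequality direction (that the strict-decrease condition corresponds to the reversed strict inequality, with the equality case matching on both sides) is a valid and worthwhile detail that the paper leaves implicit.
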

\begin{proof}
Follows from Lemma \ref{lemma:add}, Lemma \ref{lemma:delete} and Definition \ref{def:stability} of bilateral stability.	\qed
\end{proof}

We state and prove the following for SV-SRN, under the MO-Framework.

\begin{lemma}\label{lemma:adding-deleting-link-beneficial-in-SV-SRN-MO-Framework}
Let $\mathfrak{g}$ be an SV-SRN, under the MO-Framework. For any agent $i\in \mathfrak{g}$, adding a new partnership with agent $j\in \mathfrak{g}$ is beneficial if and only if \\
\textcolor{mycolor}{$(c < \beta \lambda^{\eta_{i}(\mathfrak{g})}[1-\lambda] \quad and \quad s-d\eta_{j}(\mathfrak{g})\geq d)$, \\
and breaking an existing partnership with agent $j\in \mathfrak{g}$ is beneficial if and only if \\
 $c > \beta \lambda^{\eta_{i}(\mathfrak{g})-1}[1-\lambda]$.}
\end{lemma}
\begin{proof}
If $\langle ij\rangle \not \in \mathfrak{g}$, agent $i$ has an incentive to add a link with agent $j$, if and only if \\

\noindent $c < \beta[\lambda^{\eta_{i}(\mathfrak{g})}-\lambda^{\eta_{i}(\mathfrak{g})+1}]$ (from Lemma \ref{lemma:add}), where $\eta_i(\mathfrak{g}) = $ neighborhood size of $i$, 

\quad \quad \quad \quad \quad \quad \quad \quad and the amount of storage available with agent $j \geq$ agent $i$'s 

\quad \quad \quad \quad \quad \quad \quad \quad data size, if and only if\\

\noindent $ c < \beta[\lambda^{\eta_{i}(\mathfrak{g})}-\lambda^{\eta_{i}(\mathfrak{g})+1}]$ and $s_j-\sum\limits_{k \in \eta_j(\mathfrak{g})}{d_k} \geq d_i$, 

\quad \quad \quad \quad \quad \quad \quad \quad where $\eta_j(\mathfrak{g})$ is the set of neighbors of $j$, if and only if \\

\noindent $ c < \beta[\lambda^{\eta_{i}(\mathfrak{g})}-\lambda^{\eta_{i}(\mathfrak{g})+1}]$ and $s-d\eta_{j}(\mathfrak{g})\geq d$, (as $s_k = s_l, d_k = d_l$, for all $k, l \in \mathfrak{g}$), 

\quad \quad \quad \quad \quad \quad \quad \quad where $\eta_j(\mathfrak{g})$ is the neighborhood size of $j$. \\

To delete an existing link, agent $i$ only looks at the cost for maintain the link, and hence, from Lemma \ref{lemma:delete}, agent $i$ has an incentive to delete a link if and only if \\
 $c > \beta[\lambda^{\eta_{i}(\mathfrak{g})-1}-\lambda^{\eta_{i}(\mathfrak{g})}]$.	\qed
\end{proof}

\begin{theorem}\label{theorem:stability-SV-SRN-MO-Framework}
An SV-SRN $\mathfrak{g}$, under the MO-Framework, is bilaterally stable if and only if
\end{theorem}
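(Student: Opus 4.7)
The plan is to derive the theorem directly by translating each clause of Definition \ref{def:stability-storage-constraints} (bilateral stability under storage constraints) into the parametric inequalities provided by Lemma \ref{lemma:adding-deleting-link-beneficial-in-SV-SRN-MO-Framework}. Since the utility function for SV-SRN under the MO-Framework is identical in form to that of an SVN (only the parameters are now assumed equal across agents), the ``incentive'' part of each condition is already a direct consequence of Lemmas \ref{lemma:add} and \ref{lemma:delete}; what distinguishes SV-SRN is that the addition clause must also carry the storage-feasibility rider.

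First I would dispatch the deletion clause. Definition \ref{def:stability-storage-constraints}(1) is identical to Definition \ref{def:stability}(1) because storage limits do not obstruct link removal; hence applying Lemma \ref{lemma:delete} to each endpoint of $\langle ij\rangle \in \mathfrak{g}$ gives: if $\beta[\lambda^{\eta_{i}(\mathfrak{g})-1}-\lambda^{\eta_{i}(\mathfrak{g})}]<c$, then $i$ has a strict incentive to delete, so bilateral stability forces $j$ not to have that incentive, i.e., $\beta[\lambda^{\eta_{j}(\mathfrak{g})-1}-\lambda^{\eta_{j}(\mathfrak{g})}]>c$. This is a mechanical rewriting of the utility inequalities.

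Next I would address the addition clause using Lemma \ref{lemma:adding-deleting-link-beneficial-in-SV-SRN-MO-Framework}. For $\langle ij\rangle \notin \mathfrak{g}$, that lemma states that adding $\langle ij\rangle$ is beneficial to $i$ if and only if both $c<\beta[\lambda^{\eta_{i}(\mathfrak{g})}-\lambda^{\eta_{i}(\mathfrak{g})+1}]$ and $s-d\,\eta_{j}(\mathfrak{g})\geq d$. By symmetry the same holds for $j$ with the roles swapped. Definition \ref{def:stability-storage-constraints}(2) then becomes: if both conditions (utility gain for $i$ and storage room at $j$) hold, then at least one of the mirror conditions must fail at $j$, giving the desired
\[
\bigl(\beta[\lambda^{\eta_{i}(\mathfrak{g})}-\lambda^{\eta_{i}(\mathfrak{g})+1}]>c \;\text{and}\; s-d\,\eta_{j}(\mathfrak{g})\geq d\bigr) \Rightarrow \bigl(\beta[\lambda^{\eta_{j}(\mathfrak{g})}-\lambda^{\eta_{j}(\mathfrak{g})+1}]<c \;\text{or}\; s-d\,\eta_{i}(\mathfrak{g})<d\bigr).
\]
The ``only if'' direction is proved by contrapositive: if either conjunction in the consequent fails while both in the antecedent hold, then both $i$ and $j$ benefit from and can accommodate the new link, so $\mathfrak{g}+\langle ij\rangle$ would form, contradicting bilateral stability.

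I do not anticipate a genuine obstacle; the only subtlety is keeping the symmetric simplification of the storage constraint clean, namely that $s_{j}-\sum_{k\in \eta_{j}(\mathfrak{g})}d_{k}\geq d_{i}$ collapses to $s-d\,\eta_{j}(\mathfrak{g})\geq d$ precisely because $s_{k}=s$ and $d_{k}=d$ for all $k\in\mathfrak{g}$ in an SV-SRN (and likewise on the $i$ side). With that simplification in hand, the theorem follows as a one-line invocation of Lemma \ref{lemma:adding-deleting-link-beneficial-in-SV-SRN-MO-Framework} and Definition \ref{def:stability-storage-constraints}, mirroring how Theorem \ref{theorem:stability} was obtained from Lemmas \ref{lemma:add}, \ref{lemma:delete} and Definition \ref{def:stability}.
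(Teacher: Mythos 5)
Your proposal is correct and matches the paper's own argument, which likewise derives the result as an immediate consequence of Lemma \ref{lemma:adding-deleting-link-beneficial-in-SV-SRN-MO-Framework} together with Definition \ref{def:stability-storage-constraints}. You merely spell out the clause-by-clause translation (deletion clause unchanged from the SVN case, addition clause carrying the storage rider) that the paper leaves implicit.
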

\begin{enumerate}
\item for all $\langle ij\rangle \in \mathfrak{g}$, $\beta \lambda^{\eta_{i}(\mathfrak{g})-1} [1-\lambda]< c \Rightarrow \beta \lambda^{\eta_{j}(\mathfrak{g})-1}[1-\lambda]>c$, and
\item for all $\langle ij\rangle \not \in \mathfrak{g}$, $\beta \lambda^{\eta_{i}(\mathfrak{g})}[1-\lambda] >c$ and $s-d\eta_{j}(\mathfrak{g})\geq d \Rightarrow$ \\
\quad $\beta \lambda^{\eta_{j}(\mathfrak{g})}[1-\lambda]<c$ or $s-d\eta_{i}(\mathfrak{g})< d$.
\end{enumerate}
\begin{proof}
Follows from Lemma \ref{lemma:adding-deleting-link-beneficial-in-SV-SRN-MO-Framework}, and Definition \ref{def:stability-storage-constraints} of bilateral stability.	\qed
\end{proof}

Now, we look at the stability point of SVN and SV-SRN under the MO-Framework.


\begin{theorem}\label{prop:equilibrium-point}
\textcolor{mycolor}{Let $\mathfrak{g}$ be an SVN under the MO-Framework. Then, the stability point $\hat{\eta}$ of $\mathfrak{g}$ is unique and is given by
$\hat{\eta} = \ceil*{\frac{|\ln(\frac{c}{\beta(1-\lambda)})|}{|\ln\lambda|}} = \floor*{\frac{|(\ln\frac{c\lambda}{\beta(1-\lambda)})|}{|\ln\lambda|}}$.}
\end{theorem}
\begin{proof}
From  Lemma \ref{lemma:add}, adding a link for agent $i$ is beneficial if and only if 


$\eta_{i}(\mathfrak{g}) \ln \lambda>\ln(\frac{c}{\beta (1-\lambda)})$, if and only if

$\eta_{i}(\mathfrak{g}) <\frac{|\ln(\frac{c}{\beta (1-\lambda)})|}{|\ln \lambda|}$\\

\noindent {Hence,} for agent $i$, increasing neighborhood size is not beneficial if and only if

 $\eta_{i}(\mathfrak{g}) \geq \frac{|\ln(\frac{c}{\beta (1-\lambda)})|}{|\ln \lambda|}$.\\


\noindent Similarly, from Lemma \ref{lemma:delete}, deleting a link for agent $i$ is beneficial if and only if 



$\ln(\frac{c\lambda}{\beta(1-\lambda)})>{\eta_{i}(\mathfrak{g})} \ln \lambda$, if and only if

$\frac{|\ln(\frac{c\lambda}{\beta(1-\lambda)})|}{|\ln \lambda|}<{\eta_{i}(\mathfrak{g})}$\\

\noindent {So,} decreasing neighborhood size is not beneficial for agent $i$ if and only if
$\frac{|\ln(\frac{c\lambda}{\beta(1-\lambda)})|}{|\ln \lambda|}\geq{\eta_{i}(\mathfrak{g})}$.\\


\noindent {Therefore,} $L = \frac{|\ln(\frac{c}{\beta(1-\lambda)})|}{|\ln\lambda|}$ and $U = \frac{|\ln(\frac{c\lambda}{\beta(1-\lambda)})|}{|\ln\lambda|}$ are, respectively, the lower and upper bounds of $\hat{\eta}$. \\

\noindent $\text{U }  = \frac{|\ln(\frac{c}{\beta(1-\lambda)})|}{|\ln\lambda|} + \frac{|\ln\lambda|}{|\ln\lambda|}$
$= L + 1$. \\

\noindent It is easy to see that if $L$ is not an integer (and hence, $U$ is not an integer), the stability point $\hat{\eta}$ is the unique positive integer between $L$ and $U$. 
\qed
\end{proof}

\begin{remark}
\textcolor{mycolor}{For most values of $c, \beta,$ and $\lambda$, $\frac{|\ln(\frac{c}{\beta(1-\lambda)})|}{|\ln\lambda|}$, and hence, $\frac{|\ln(\frac{c\lambda}{\beta(1-\lambda)})|}{|\ln\lambda|}$ are non-integers.}
\end{remark}
\begin{example}\label{exmp:stability-point}
\sloppy
Consider the networks $\mathfrak{g}$ and $\mathfrak{s}$ (see Fig. \ref{fig:stable-example-network-s-g-observation}). In both the networks, let the cost $c=0.0055$, $\beta=0.6$, and $\lambda=0.2$. Here, $\ceil*{\frac{|\ln(\frac{c}{\beta(1-\lambda)})|}{|\ln\lambda|}}=\ceil*{2.72}$ and $\floor*{\frac{|(\ln\frac{c\lambda}{\beta(1-\lambda)})|}{|\ln\lambda|}}=\floor*{3.72}$, and hence, $\hat{\eta}=3$. In network $\mathfrak{g}$, all agents have three neighbours each, and hence, $\mathfrak{g}$ is bilaterally stable. Despite the fact that agent $g$ in the network $\mathfrak{s}$ has an incentive to add one more link, network $\mathfrak{s}$ is also bilaterally stable.

\begin{figure*}[!h]
\centering
\subfigure[Network $\mathfrak{g}$ ]
{
\includegraphics[scale=0.11]{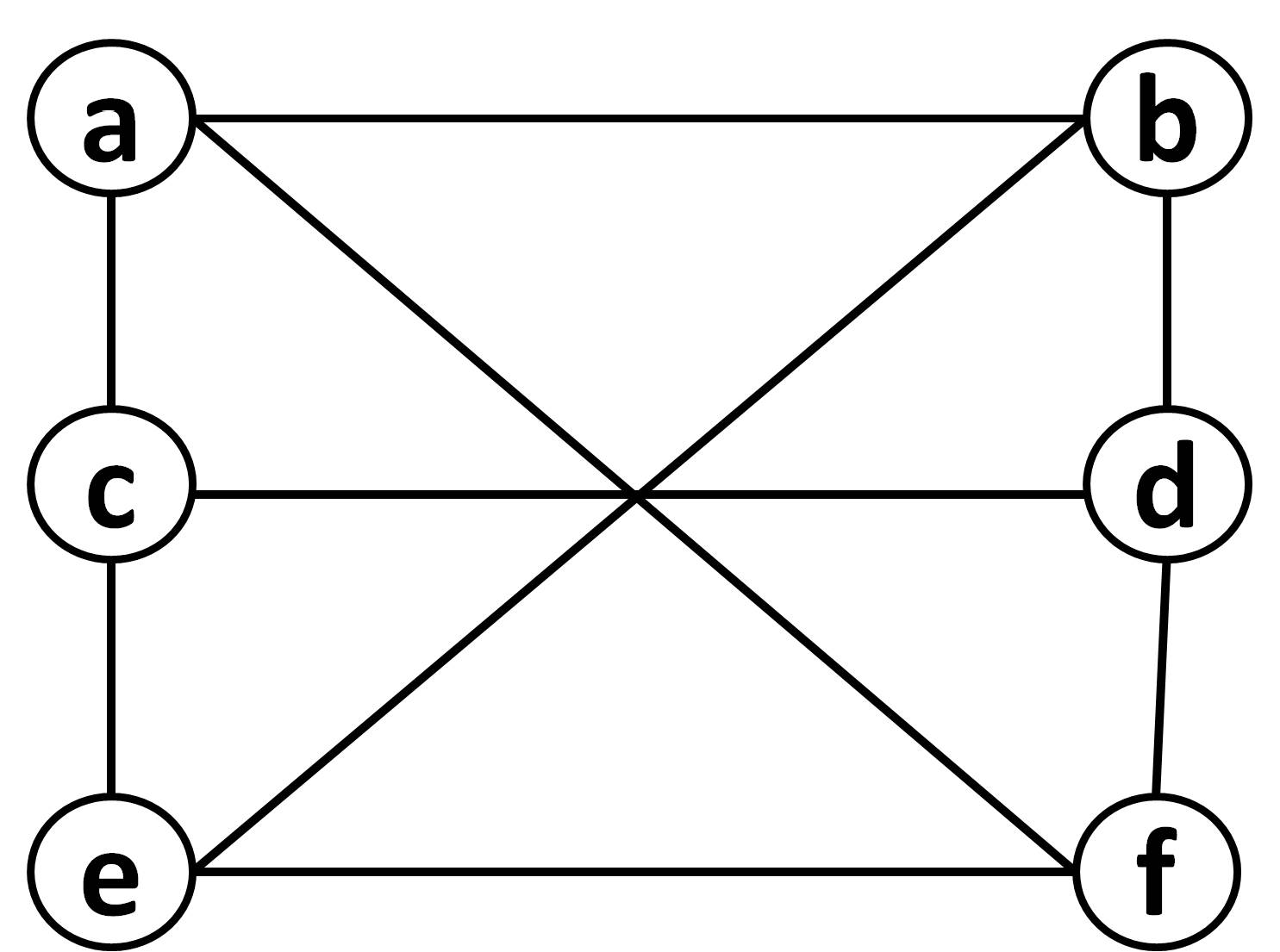} 
\label{fig:stable-example-7-agents-g-network}
}
\quad \quad \quad
\subfigure[Network $\mathfrak{s}$]
{
\includegraphics[scale=0.11]{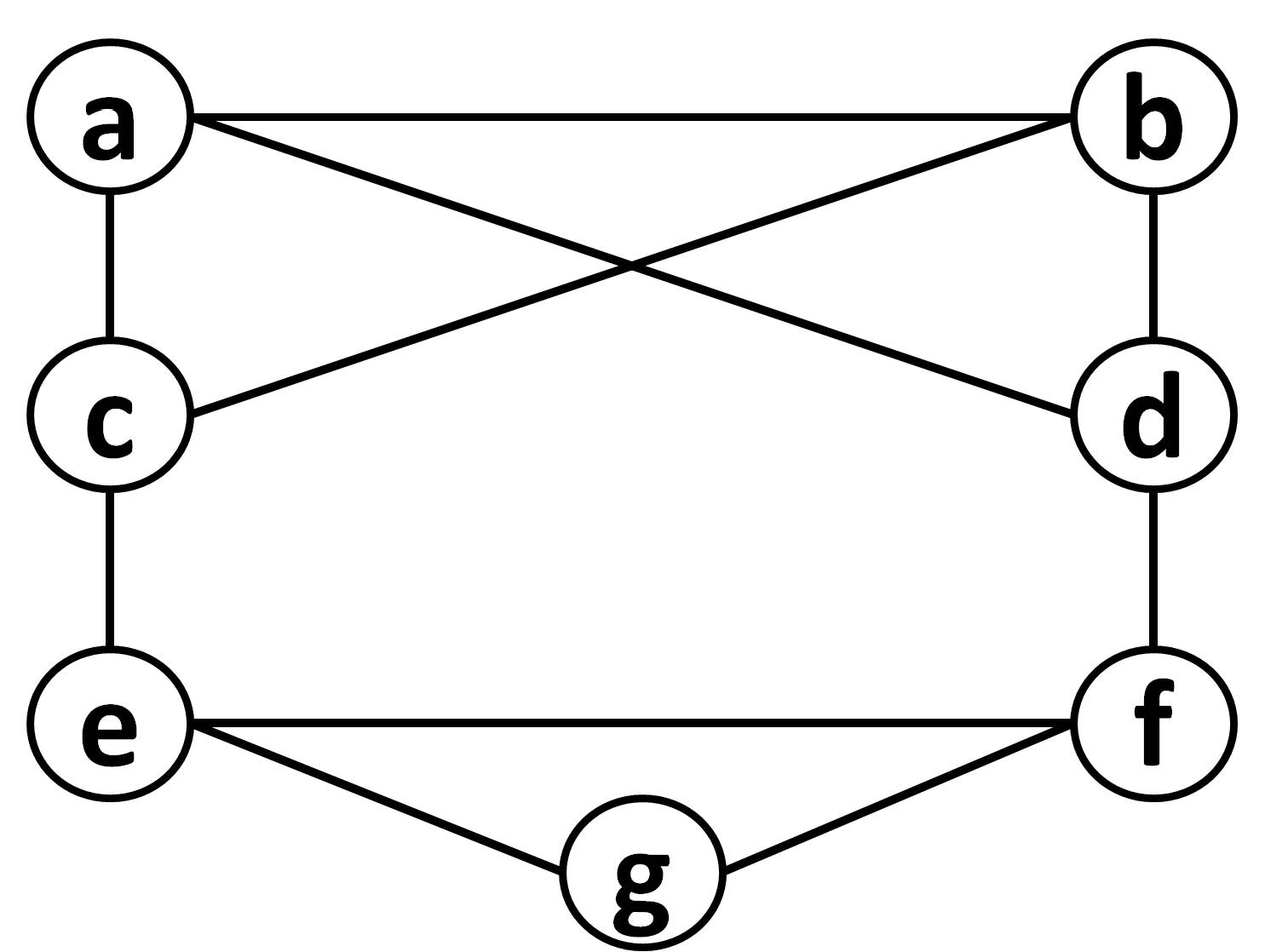} 
\label{fig:stable-example-7-agents-s-network}
}
\caption{Stable SVN Networks under the MO-Framework with Sufficient Storage}
\label{fig:stable-example-network-s-g-observation}
\end{figure*}
\end{example}

Now, we derive the stability point for SV-SRN network under the MO-Framework. Here, Definition \ref{def:stability-storage-constraints} is relevant, and for simplicity, we assume that $\frac{s}{d}$ is an integer.

\begin{theorem}\label{theorem:SV-SRN-Stability-Point}
Let $\mathfrak{g}$ be an SV-SRN, under the MO-Framework. 
 
Then, $\tilde{n}=min\{\hat{\eta}, \frac{s}{d} \}$, is the unique stability point of $\mathfrak{g}$.  
\end{theorem}
\begin{proof}
If all agents have sufficient storage, then from Theorem \ref{prop:equilibrium-point}, $\hat{\eta}$ is the stability point. 


Now, let us assume that each agent has a total amount of storage, $s$, available for sharing, $d$ amount of data to backup. Then, $\frac{s}{d}$ defines the maximum possible neighborhood size of each agent in the network.

Therefore, $min\{\hat{\eta}, \frac{s}{d} \}$ is the stability point, given $c, \lambda, \beta, s, d$. 

Alternatively, we may also use the bound $\frac{s}{d}$ in Lemmas \ref{lemma:add}, \ref{lemma:delete} and Theorem \ref{prop:equilibrium-point}. \qed
\end{proof} 

Henceforth, for the sake of uniformity, we shall use $\hat{\eta}$ (and not $\tilde{n}$)
for the stability point of SV-SRN under the MO-Framework too.

\subsection{Characterization Under the SO-Framework}
In this subsection, we derive necessary and sufficient conditions for bilateral stability of SVN and SRN under the SO-Framework, and then discuss the stability point of these networks.

\begin{lemma}\label{lemma:susb-always-add}
In an SVN, $\mathfrak{g}$, under the SO-Framework, for any agent $i\in \mathfrak{g}$, forming a partnership with another agent $j\in \mathfrak{g}$ is always beneficial. 
\end{lemma}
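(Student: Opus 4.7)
The plan is to argue that, under the SO-Framework, the utility function is the pure expected-value term $u_i(\mathfrak{g}) = \beta(1 - \lambda^{\eta_i(\mathfrak{g})})$, with cost effects only entering through the budget constraint. Since an SVN under the SO-Framework is assumed (per the convention fixed in Section~\ref{sec-analysis} and equation~\eqref{eq:SRN-enough-budget}) to carry sufficient storage and sufficient budget for every possible link, the constraints in Definition~\ref{def:stability-storage-budget-constraints} cannot block a link addition. So the claim reduces to showing that $u_i(\mathfrak{g}+\langle ij\rangle) > u_i(\mathfrak{g})$ for every $i$ and every non-neighbor $j$.

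First I would compute the utility in the perturbed network directly:
\[
u_i(\mathfrak{g}+\langle ij\rangle) \;=\; \beta\bigl(1-\lambda^{\eta_i(\mathfrak{g})+1}\bigr).
\]
Then I would form the difference
\[
u_i(\mathfrak{g}+\langle ij\rangle) - u_i(\mathfrak{g}) \;=\; \beta\lambda^{\eta_i(\mathfrak{g})}\bigl(1-\lambda\bigr),
\]
and observe that since $\beta\in(0,1)$ and $\lambda\in(0,1)$, each factor on the right-hand side is strictly positive, so the difference is strictly positive regardless of the current neighborhood size $\eta_i(\mathfrak{g})$.

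Finally I would wrap up by appealing to Definition~\ref{def:stability-storage-budget-constraints}: because in an SVN under the SO-Framework the remaining storage and remaining budget assumptions from \eqref{eq:SRN-enough-budget} guarantee $RS_j \geq d_i$ and $RB_i \geq c$ (and symmetrically for $j$), the only condition that matters for ``beneficial addition'' is the utility inequality, which we have just verified. There is no genuine obstacle here; the only subtlety worth flagging is to make it explicit that the cost term has been removed from the utility under the SO-Framework (contrast with Lemma~\ref{lemma:add}), so one must not reuse the MO-Framework inequality $c<\beta[\lambda^{\eta_i(\mathfrak{g})}-\lambda^{\eta_i(\mathfrak{g})+1}]$, which would otherwise have been the natural-looking condition.
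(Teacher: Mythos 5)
Your proposal is correct and follows essentially the same route as the paper: both reduce the claim to a direct comparison of $\beta(1-\lambda^{\eta_i(\mathfrak{g})+1})$ with $\beta(1-\lambda^{\eta_i(\mathfrak{g})})$, which holds because $\lambda\in(0,1)$ (you phrase it as the positive difference $\beta\lambda^{\eta_i(\mathfrak{g})}(1-\lambda)$, the paper as the equivalent chain ending in $\lambda<1$). Your extra remark that the sufficiency assumptions on storage and budget rule out any constraint-based obstruction is a reasonable explicit addition that the paper leaves to its standing convention for SVN.
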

\begin{proof}

As $\mathfrak{g}$ is an SVN, under the SO-Framework,
$u_{i}(\mathfrak{g})=\beta(1-\lambda^{\eta_{i}(\mathfrak{g})})$, for all $i \in \mathfrak{g}$.\\
For $i,j \in \mathfrak{g}$, if $\langle ij \rangle \not \in \mathfrak{g}$, then
 $u_{i}(\mathfrak{g}+\langle ij\rangle)=[\beta(1-\lambda^{\eta_{i}(\mathfrak{g})+1})]$.\\


\noindent Adding a new link or backup partner is beneficial for agent $i$ if and only if

$u_{i}(\mathfrak{g}+\langle ij \rangle)> u_{i}(\mathfrak{g})$, if and only if

$[\beta(1-\lambda^{\eta_{i}(\mathfrak{g})+1})] > [\beta(1-\lambda^{\eta_{i}(\mathfrak{g})})]$, if and only if

$ \lambda^{\eta_{i}(\mathfrak{g})+1} < \lambda^{\eta_{i}(\mathfrak{g})}$, if and only if	

$ \lambda < 1$, which is always true.	\qed
\end{proof}

\begin{corollary}\label{lemma:susb-never-del}
In an SVN, $\mathfrak{g}$, under the SO-Framework, no agent benefits by deleting any existing partnership. 
\end{corollary}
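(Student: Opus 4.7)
The plan is to mirror the argument of Lemma \ref{lemma:susb-always-add}, but run it in the opposite direction, since deletion is the inverse of addition and the utility in the SO-Framework depends only on the neighborhood size through the term $\beta(1-\lambda^{\eta_i(\mathfrak{g})})$, which is strictly increasing in $\eta_i(\mathfrak{g})$ whenever $\lambda \in (0,1)$.

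First I would write down the utility of agent $i$ before and after deletion of an existing link $\langle ij \rangle$. By Equation \ref{eq:utility} (SO-Framework case), $u_i(\mathfrak{g}) = \beta(1-\lambda^{\eta_i(\mathfrak{g})})$, and after deletion the new neighborhood size is $\eta_i(\mathfrak{g})-1$, giving $u_i(\mathfrak{g}-\langle ij \rangle) = \beta(1-\lambda^{\eta_i(\mathfrak{g})-1})$. Then I would compare them: $u_i(\mathfrak{g}-\langle ij \rangle) > u_i(\mathfrak{g})$ if and only if $\lambda^{\eta_i(\mathfrak{g})-1} < \lambda^{\eta_i(\mathfrak{g})}$, i.e., if and only if $1 < \lambda$, which contradicts $\lambda \in (0,1)$. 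Hence no agent has a strict incentive to delete, which is exactly the claim of the corollary.

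Alternatively, and even more economically, I would simply appeal directly to Lemma \ref{lemma:susb-always-add}: that lemma shows that for every agent in every SVN under the SO-Framework, enlarging the neighborhood by one always strictly increases utility. Applying this observation to the network $\mathfrak{g}-\langle ij \rangle$ and the missing link $\langle ij \rangle$ shows $u_i(\mathfrak{g}) > u_i(\mathfrak{g}-\langle ij \rangle)$, i.e., the would-be deleter is strictly worse off. No obstacle is expected here; the only thing to be careful about is to stress that the argument uses $\lambda < 1$ strictly, and that (unlike the MO-Framework) no cost term ever creates a counter-incentive, so the conclusion holds for every agent and every existing link without any further condition.
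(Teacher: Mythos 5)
Your proposal is correct and matches the paper's (implicit) argument: the paper states this as an immediate corollary of Lemma \ref{lemma:susb-always-add}, relying on exactly the observation you make — that $u_i = \beta(1-\lambda^{\eta_i(\mathfrak{g})})$ is strictly increasing in $\eta_i(\mathfrak{g})$ because $\lambda \in (0,1)$, so deletion can only lower utility. Both your direct computation and your appeal to the lemma applied to $\mathfrak{g}-\langle ij\rangle$ are valid and essentially the same reasoning the authors intend.
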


\begin{theorem}\label{coro:SVN-SO-Framework-Stability}
An SVN, $\mathfrak{g}$, under the SO-Framework, is bilaterally stable if and only if $\mathfrak{g}$ is a complete network.
\end{theorem}
\begin{proof}
Follows from Lemma \ref{lemma:susb-always-add} and Corollary \ref{lemma:susb-never-del}.
\end{proof}

Now, we state and prove the following for SRN, under the SO-Framework.

\begin{lemma}\label{lemma:lemma:adding-deleting-link-beneficial-in-SRN-SO-Framework}
In an SRN, $\mathfrak{g}$, under the SO-Framework, for any agent $i\in \mathfrak{g}$, forming a partnership with another agent $j\in \mathfrak{g}$ is beneficial if and only if $b-c\eta_{i}(\mathfrak{g})\geq c$ and $s-d\eta_{j}(\mathfrak{g})\geq d$.
\end{lemma}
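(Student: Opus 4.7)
The plan is to reduce this to two essentially separate ingredients: (i) the fact that under the SO-Framework utility is strictly increasing in neighborhood size, so adding a link is always utility-improving in isolation, and (ii) the observation that in the SRN setting, the only thing that can block a mutually beneficial link is the resource feasibility captured by the remaining-storage and remaining-budget quantities from Equations \ref{eq:remaining-storage} and \ref{eq:remaining-budget}, which simplify in the symmetric case.

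First I would invoke Lemma \ref{lemma:susb-always-add} (or equivalently recompute $u_i(\mathfrak{g}+\langle ij\rangle)-u_i(\mathfrak{g})=\beta(\lambda^{\eta_i(\mathfrak{g})}-\lambda^{\eta_i(\mathfrak{g})+1})>0$, which holds because $\lambda\in(0,1)$ and $\beta>0$) to conclude that in terms of payoff alone, the link is \emph{always} strictly preferred by $i$. Hence, appealing to Definition \ref{def:stability-storage-budget-constraints}, the only way the partnership is not beneficial is that the resource constraints forbid it. Next I would specialize the remaining-budget formula $RB_i=b_i-c\eta_i(\mathfrak{g})$ and the remaining-storage formula $RS_j=s_j-\sum_{k\in\eta_j(\mathfrak{g})}d_k$ to the SRN case where $b_i=b$, $s_i=s$, and $d_k=d$ for all agents, obtaining $RB_i=b-c\eta_i(\mathfrak{g})$ and $RS_j=s-d\,\eta_j(\mathfrak{g})$. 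The necessary conditions "$i$ can afford one more link" and "$j$ has room for $i$'s data" become $RB_i\geq c$ and $RS_j\geq d_i=d$, i.e., $b-c\eta_i(\mathfrak{g})\geq c$ and $s-d\,\eta_j(\mathfrak{g})\geq d$, which are exactly the inequalities in the statement.

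Finally I would assemble the equivalence. For the "if" direction, assuming both inequalities hold, the link is feasible for both $i$ (budget-wise) and $j$ (storage-wise), and since $u_i$ strictly increases, the partnership is beneficial. For the "only if" direction, if the partnership is beneficial then in particular it must be feasible, which by the translation above forces the two inequalities. The only subtlety—and hence the main thing to be careful about rather than a genuine obstacle—is to read "beneficial" here as jointly meaning \emph{utility-improving and admissible under the storage and budget constraints}, consistent with Definition \ref{def:stability-storage-budget-constraints}; once that reading is fixed, the proof is a direct corollary of Lemma \ref{lemma:susb-always-add} together with a substitution of the SRN symmetry parameters into the remaining-resource formulas.
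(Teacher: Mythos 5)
Your proposal is correct and follows essentially the same route as the paper: note that under the SO-Framework utility is strictly increasing in neighborhood size (so the payoff condition is vacuous), then reduce ``beneficial'' via Definition~\ref{def:stability-storage-budget-constraints} to the two feasibility conditions, which in the symmetric case read $b-c\eta_{i}(\mathfrak{g})\geq c$ and $s-d\eta_{j}(\mathfrak{g})\geq d$. Your only deviation is cosmetic: you cite Lemma~\ref{lemma:susb-always-add} (stated for SVN) for the monotonicity, but your parenthetical recomputation covers the SRN case where the $\beta_i$ need not be equal, which is all that is needed.
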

\begin{proof}
In the SO-Framework, the utility of each agent $i \in \mathfrak{g}$ increases with increase in its neighborhood size $\eta_i(\mathfrak{g})$. \\

Therefore, 
for any agent $i\in \mathfrak{g}$, forming a partnership with another agent $j\in \mathfrak{g}$ is beneficial if and only if agent $i$'s budget allows this link addition and agent $j$ has free storage space for agent $i$'s data. (Refer Definition \ref{def:stability-storage-budget-constraints}). \\

Agent $j$ has free storage space for $i$'s data, if and only if $s-d\eta_{j}(\mathfrak{g})\geq d$. (Similar to the proof of Lemma \ref{lemma:adding-deleting-link-beneficial-in-SV-SRN-MO-Framework}). \\ 

Similarly, agent $i$'s budget allows adding a link, if and only if $ b-c\eta_{i}(\mathfrak{g})\geq c$. 
\qed
\end{proof}

\begin{corollary}\label{lemma:deleting-link-beneficial-in-SRN-SO-Framework}
In an SRN, $\mathfrak{g}$, under the SO-Framework, no agent benefits by deleting any existing partnership. 
\end{corollary}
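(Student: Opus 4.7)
The plan is to show that under the SO-Framework the utility function $u_i(\mathfrak{g}) = \beta(1-\lambda^{\eta_i(\mathfrak{g})})$ is strictly increasing in the neighborhood size $\eta_i(\mathfrak{g})$, since $\beta > 0$ and $\lambda \in (0,1)$ imply $\lambda^{\eta_i(\mathfrak{g})} > \lambda^{\eta_i(\mathfrak{g})-1} \cdot \lambda$ in a strictly decreasing fashion. Because the cost of a link does not appear in the objective of the SO-Framework (it enters only as the budget constraint $b_i \geq c \eta_i(\mathfrak{g})$), removing a link can never improve an agent's utility; it can only strictly decrease it.

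Concretely, I would fix an arbitrary link $\langle ij \rangle \in \mathfrak{g}$ and compute
\[
u_i(\mathfrak{g}) - u_i(\mathfrak{g}-\langle ij\rangle) = \beta(1-\lambda^{\eta_i(\mathfrak{g})}) - \beta(1-\lambda^{\eta_i(\mathfrak{g})-1}) = \beta\bigl(\lambda^{\eta_i(\mathfrak{g})-1} - \lambda^{\eta_i(\mathfrak{g})}\bigr) = \beta\lambda^{\eta_i(\mathfrak{g})-1}(1-\lambda).
\]
Since $\beta > 0$, $\lambda \in (0,1)$, and $\eta_i(\mathfrak{g}) \geq 1$ (because $\langle ij\rangle$ exists), this quantity is strictly positive. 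Hence $u_i(\mathfrak{g}) > u_i(\mathfrak{g}-\langle ij\rangle)$, i.e., agent $i$ strictly prefers to keep the link rather than delete it. By symmetry, the same holds for agent $j$.

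The statement follows immediately: deletion of an existing link always strictly reduces an agent's utility in the SO-Framework, so no agent has any incentive to delete any of its existing partnerships. I do not foresee any real obstacle here, since the result is a direct corollary of Lemma \ref{lemma:susb-always-add} (or rather its contrapositive applied to deletion): the storage and budget constraints only appear as side conditions for link \emph{addition}, whereas link deletion is evaluated purely by comparing utilities, and the utility is strictly increasing in neighborhood size.
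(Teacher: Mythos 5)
Your proof is correct and follows essentially the same route as the paper, which treats this as an immediate consequence of the monotonicity of the SO-Framework utility in neighborhood size (the storage and budget constraints enter only the link-addition condition in Definition \ref{def:stability-storage-budget-constraints}, so deletion is judged purely by the utility comparison you carry out). The only cosmetic point is that in an SRN the data values need not be symmetric, so you should write $\beta_i$ rather than a common $\beta$; since $\beta_i>0$ for each agent, the argument is unaffected.
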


\begin{theorem}\label{theorem:stability-SV-SRN-SO-Framework}
An SRN $\mathfrak{g}$ under the SO-Framework is bilaterally stable if and only if
$[b-c\eta_{i}(\mathfrak{g})\geq c$ and $s-d\eta_{j}(\mathfrak{g})\geq d]  \Rightarrow [b-c\eta_{j}(\mathfrak{g})<c$ or $s-d\eta_{i}(\mathfrak{g})<d]$, for all $\langle ij\rangle \not \in \mathfrak{g}$.
\end{theorem}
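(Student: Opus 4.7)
The plan is to establish this theorem as a direct consequence of Lemma \ref{lemma:lemma:adding-deleting-link-beneficial-in-SRN-SO-Framework}, Corollary \ref{lemma:deleting-link-beneficial-in-SRN-SO-Framework}, and Definition \ref{def:stability-storage-budget-constraints}. The structure of Definition \ref{def:stability-storage-budget-constraints} has two clauses (one for deletion, one for addition), and I would handle them separately.

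First I would dispose of the deletion clause. Corollary \ref{lemma:deleting-link-beneficial-in-SRN-SO-Framework} says that under the SO-Framework, no agent in an SRN ever benefits from deleting an existing partnership. Hence, for every $\langle ij\rangle \in \mathfrak{g}$, the antecedent $u_{i}(\mathfrak{g}-\langle ij\rangle) > u_{i}(\mathfrak{g})$ in clause (1) of Definition \ref{def:stability-storage-budget-constraints} is always false, so the implication is vacuously true regardless of $\mathfrak{g}$. This means the deletion clause of bilateral stability imposes no restriction whatsoever and can be dropped from the characterization, which is exactly why the statement of the theorem only lists the addition clause.

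Next I would translate the addition clause. In the SO-Framework, the utility $u_i(\mathfrak{g}) = \beta(1-\lambda^{\eta_i(\mathfrak{g})})$ is strictly increasing in $\eta_i(\mathfrak{g})$, so $u_{i}(\mathfrak{g}+\langle ij\rangle) > u_{i}(\mathfrak{g})$ holds for every pair $i,j$. Thus, in the conjunction $[u_{i}(\mathfrak{g}+\langle ij\rangle) > u_{i}(\mathfrak{g})$ and $RS_{j}\geq d_{i}$ and $RB_{i}\geq c]$ that appears as the antecedent of clause (2) of Definition \ref{def:stability-storage-budget-constraints}, the utility term is always true; in the SRN setting this reduces to $s - d\,\eta_j(\mathfrak{g}) \geq d$ and $b - c\,\eta_i(\mathfrak{g}) \geq c$, which by Lemma \ref{lemma:lemma:adding-deleting-link-beneficial-in-SRN-SO-Framework} is precisely the condition that forming the link is beneficial to $i$. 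By the same argument applied with the roles of $i$ and $j$ swapped, the consequent $[u_{j}(\mathfrak{g}+\langle ij\rangle) < u_{j}(\mathfrak{g})$ or $RS_{i} < d_{j}$ or $RB_{j} < c]$ reduces to $s - d\,\eta_i(\mathfrak{g}) < d$ or $b - c\,\eta_j(\mathfrak{g}) < c$, since the utility disjunct is never satisfied.

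Putting the two pieces together, clause (2) of Definition \ref{def:stability-storage-budget-constraints} becomes exactly the implication stated in the theorem, while clause (1) is automatically satisfied. Both directions of the biconditional follow immediately, since each step in the translation is itself an equivalence. I do not anticipate a genuine obstacle here; the only care needed is to note that in the SO-Framework the monotonicity of utility makes the utility conjunct in the antecedent trivially true and the utility disjunct in the consequent trivially false, which is what allows these terms to be removed from the characterization.
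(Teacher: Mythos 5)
Your proposal is correct and follows essentially the same route as the paper, which proves the theorem by direct appeal to Lemma \ref{lemma:lemma:adding-deleting-link-beneficial-in-SRN-SO-Framework} and Definition \ref{def:stability-storage-budget-constraints}; your explicit handling of the vacuous deletion clause via Corollary \ref{lemma:deleting-link-beneficial-in-SRN-SO-Framework} is exactly the observation the paper records in the remark immediately following the theorem. No gaps.
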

\begin{proof}
Follows from Lemma \ref{lemma:lemma:adding-deleting-link-beneficial-in-SRN-SO-Framework}, and Definition \ref{def:stability-storage-budget-constraints}. \qed
\end{proof}

\begin{remark}
Since in an SRN $\mathfrak{g}$ under the SO-Framework, no agent benefits by deleting any existing partnership, link deletion does not appear in the bilateral stability conditions above.
\end{remark}



Now, we look at the stability point of SVN and SRN under the SO-Framework. The following case (Theorem \ref{lemma:stability-point-susb-network}) is the only case where the stability point depends on the number of agents, $N$. 
\begin{theorem}\label{lemma:stability-point-susb-network}
In an SVN $\mathfrak{g}$, under the SO-Framework, $\hat{\eta} = N - 1$, is the unique stability point, where $N$ is the number of agents.
\end{theorem}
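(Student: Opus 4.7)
The plan is to use Lemma \ref{lemma:susb-always-add} and Corollary \ref{lemma:susb-never-del} together with the fact that, in an SVN under the SO-Framework, each agent is endowed with enough storage and enough budget to sustain a link with every other agent (see equation \eqref{eq:SRN-enough-budget}). The intuition is simply that in the SO-Framework the utility $u_i(\mathfrak{g}) = \beta(1-\lambda^{\eta_i(\mathfrak{g})})$ is strictly increasing in $\eta_i(\mathfrak{g})$ and has no link-cost penalty, so every agent wants to reach the largest feasible neighborhood size, which is $N-1$.

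Concretely, I would split the argument into two directions matching Definition \ref{def:stability-point}. First, I would show that no agent has an incentive to \emph{decrease} its neighborhood size below $N-1$: by Corollary \ref{lemma:susb-never-del}, no agent ever benefits from deleting a partnership in an SVN under the SO-Framework, so $\eta_i(\mathfrak{g}) = N-1$ is robust to deletions. Second, I would show that no agent has an incentive to \emph{increase} its neighborhood size beyond $N-1$: since there are only $N$ agents in total, $N-1$ is the maximal possible degree and the set of links $\langle ij \rangle \notin \mathfrak{g}$ with $j \in \mathbf{A}$ is empty once $i$ is connected to every other agent, so the addition clause in Definition \ref{def:stability} is vacuously satisfied.

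To conclude uniqueness, I would note that by Lemma \ref{lemma:susb-always-add} every agent always strictly gains from adding any not-yet-existing link, and by the sufficient-resource assumption \eqref{eq:SRN-enough-budget} no storage or budget constraint can block such an addition. Hence any $\hat{\eta} < N-1$ would leave some pair $\langle ij \rangle \notin \mathfrak{g}$ with both endpoints strictly better off by adding the link, contradicting the definition of the stability point. Combined with Corollary \ref{coro:SVN-SO-Framework-Stability}, which already identifies the complete network as the unique bilaterally stable SVN under the SO-Framework, this forces $\hat{\eta} = N-1$.

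The argument is essentially a direct consequence of the earlier lemma and corollaries, so I do not anticipate a real obstacle; the only delicate point is making explicit why the resource-sufficiency hypothesis baked into SVN under the SO-Framework rules out the feasibility restrictions that normally appear in Definition \ref{def:stability-storage-budget-constraints}, which is why $\hat{\eta}$ here depends on $N$ (unlike the other cases treated in Section \ref{subsec:unique-stability-point}).
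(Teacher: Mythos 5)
Your proposal is correct and takes essentially the same route as the paper, whose entire proof is the single line ``Follows from Lemma \ref{lemma:susb-always-add}''; you merely make explicit the implicit steps (deletion is never beneficial, $N-1$ is the maximal possible degree, and resource sufficiency removes any feasibility obstruction). No gap.
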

\begin{proof}
Follows from Lemma \ref{lemma:susb-always-add}. 
\qed
\end{proof}

Except SVN under the SO-Framework, in all other scenarios (including the following), the stability point is independent of $N$. In all cases (including the above), the stability point is unique. In the following, for simplicity, we assume that $\frac{s}{d}$ and $\frac{b}{c}$ are integers.

\begin{theorem}\label{lemma:stability-point-sb-network}
In an SRN $\mathfrak{g}$, under the SO-Framework, 
$\hat{\eta}=min\{\frac{s}{d},\frac{b}{c}\}$, for all $i\in \mathfrak{g}$, is the unique stability point, where no agent has incentive to add or delete a link. 
\end{theorem}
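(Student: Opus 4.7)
The plan is to exploit the strict monotonicity of the SO-Framework utility together with the hard storage and budget constraints of an SRN. Since $u_i(\mathfrak{g}) = \beta(1 - \lambda^{\eta_i(\mathfrak{g})})$ is strictly increasing in $\eta_i(\mathfrak{g})$ (as $\lambda \in (0,1)$), Corollary \ref{lemma:deleting-link-beneficial-in-SRN-SO-Framework} already tells us that no agent ever prefers to delete an existing link. This takes care of the ``no incentive to decrease'' half of Definition \ref{def:stability-point} automatically, at any candidate neighborhood size, including $\hat{\eta}$.

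For the ``no incentive to increase'' half at $\hat{\eta} = \min\{s/d, b/c\}$, I would invoke Lemma \ref{lemma:lemma:adding-deleting-link-beneficial-in-SRN-SO-Framework}: a new link $\langle ij\rangle$ is beneficial (equivalently, feasible under Definition \ref{def:stability-storage-budget-constraints}) if and only if $b - c\eta_i(\mathfrak{g}) \geq c$ and $s - d\eta_j(\mathfrak{g}) \geq d$. Because the SRN is symmetric and consent is mutual, the symmetric conditions $b - c\eta_j \geq c$ and $s - d\eta_i \geq d$ must also hold. Now suppose $\eta_i(\mathfrak{g}) = \hat{\eta}$. If $\hat{\eta} = b/c$, then $b - c\eta_i = 0 < c$, so agent $i$'s budget rules out any additional link; if $\hat{\eta} = s/d$, then $s - d\eta_i = 0 < d$, so the prospective partner $j$ would have no storage left for $i$'s data, blocking $j$'s consent. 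Either way, no $\langle ij\rangle$ can be added, so no agent has incentive (or ability) to grow beyond $\hat{\eta}$.

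To certify that $\hat{\eta}$ is indeed \emph{the} stability point, I would additionally note that the resource constraints force $\eta_i(\mathfrak{g}) \leq \min\{s/d, b/c\}$ for every agent $i$ in any feasible SRN (otherwise either $d\eta_i > s$ or $c\eta_i > b$), so $\hat{\eta}$ is tight; and strictly below $\hat{\eta}$ both feasibility checks of Lemma \ref{lemma:lemma:adding-deleting-link-beneficial-in-SRN-SO-Framework} are satisfied, while monotonicity of $u_i$ gives a strictly positive marginal benefit, so any agent with $\eta_i < \hat{\eta}$ still wants to add a link. Hence $\hat{\eta}$ is the unique neighborhood size at which the ``add'' incentive flips from yes to no. The main subtlety to watch for is the integrality assumption $s/d, b/c \in \mathbb{Z}$ stated just before the theorem (without which one would write $\min\{\lfloor s/d \rfloor, \lfloor b/c \rfloor\}$), together with the need to apply both capacity checks symmetrically because bilateral consent governs link addition. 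Independence from $N$ then follows immediately, since $\hat{\eta}$ is determined solely by $s,d,b,c$.
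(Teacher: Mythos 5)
Your proposal is correct and follows essentially the same route as the paper: strict monotonicity of $u_i$ rules out deletion, and Lemma \ref{lemma:lemma:adding-deleting-link-beneficial-in-SRN-SO-Framework} plus the binding constraints $s \geq d\,\eta_i(\mathfrak{g})$ and $b \geq c\,\eta_i(\mathfrak{g})$ cap the degree at $\min\{s/d,\,b/c\}$; you simply spell out the consent logic more explicitly than the paper's terse argument. One small slip: when $\hat{\eta}=s/d$ and $\eta_i(\mathfrak{g})=\hat{\eta}$, the quantity $s - d\,\eta_i(\mathfrak{g}) = 0$ means it is agent $i$ (not $j$) who has no storage left for $j$'s data, though your conclusion that $j$'s consent is thereby blocked is still right.
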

\begin{proof}
A constructive proof follows from Lemma \ref{lemma:lemma:adding-deleting-link-beneficial-in-SRN-SO-Framework}. \\
Alternatively, it is clear that it is beneficial for each agent to add as many links as possible. The degree of agent $i$ in $\mathfrak{g}$, $\eta_{i}(\mathfrak{g})$, is limited only by its storage space $s$ and budget $b$. That is, 
 
$ s \geq d \eta_{i}(\mathfrak{g})$ and 
$ b \geq c \eta_{i}(\mathfrak{g})$. \\
The theorem follows as the above is true for all $i \in \mathfrak{g}$. \qed
\end{proof}

\begin{example}
Let us consider the networks $\mathfrak{g}$ (see Fig. \ref{fig:s-b-cnstraints-6-agents-g-network}) and $\mathfrak{s}$ (see Fig. \ref{fig:s-b-cnstraints-6-agents-s-network}), each consisting of six agents, and network $\mathfrak{t}$ (see Fig. \ref{fig:s-b-cnstraints-7-agents-t-network}) consisting of seven agents. Assume that, in $\mathfrak{g}$ and $\mathfrak{t}$,  $s=60$ TB, $d=20$ TB, $b=0.5$, and $c=0.1$. Assume that, in network $\mathfrak{s}$, $s=60$ TB, $d=10$ TB, $b=0.4$, and $c=0.1$.
\begin{figure*}[!h]
\centering
\subfigure[Network $\mathfrak{g}$ ]
{
\includegraphics[scale=0.11]{stable-example-6-players.jpg} 
\label{fig:s-b-cnstraints-6-agents-g-network}
}
\quad \quad \quad
\subfigure[Network $\mathfrak{s}$ ]
{
\includegraphics[scale=0.11]{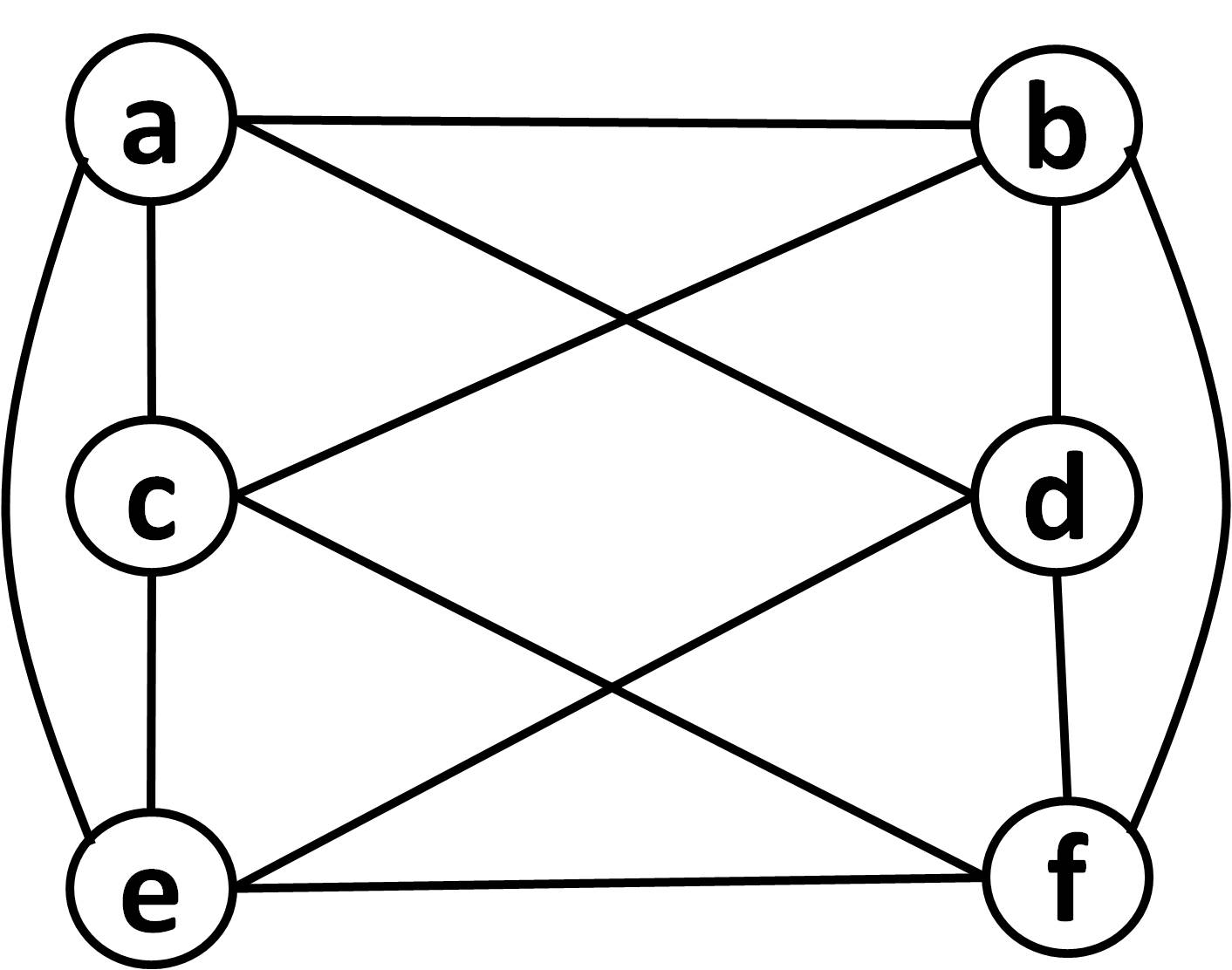} 
\label{fig:s-b-cnstraints-6-agents-s-network}
}
\quad \quad \quad
\subfigure[Network $\mathfrak{t}$]
{
\includegraphics[scale=0.11]{stable-example-7-players.jpg} 
\label{fig:s-b-cnstraints-7-agents-t-network}
}
\caption{Stable SRN Networks under the SO-Framework}
\label{fig:s-b-cnstraints-observation}
\end{figure*}

Note that in networks $\mathfrak{g}$ and $\mathfrak{t}$, although the budget constraints permit agents to maintain five neighbors each, storage limitations do not permit agents to maintain more than three neighbors each, and hence, $\mathfrak{g}$ and $\mathfrak{t}$ are bilaterally stable networks.

In network $\mathfrak{s}$, although storage constraints permit agents to maintain six neighbors each, budget constraints do not allow agents to maintain more than four neighbors each. Hence, $\mathfrak{s}$ is bilaterally stable. \qed
\end{example}


We summarize the above results on bilateral stability conditions and stability point in Table \ref{table:stability-summary} and Table \ref{table:stability-point}, respectively.
\begin{table}[h!]
\small
\centering
\caption{Summary of stability condition for different network.}
\begin{tabular}{p{1.7cm} p{1.9cm} p{7cm}}
\hline
\textbf{Network Type} & \textbf{Framework} & \textbf{Condition(s) for Bilateral Stability} \\ 
\hline
&&\\
SVN & MO-Framework& 
1. For all $\langle ij\rangle \in \mathfrak{g}$,  

$\mbox{        }\quad\quad\textcolor{mycolor}{\beta \lambda^{\eta_{i}(\mathfrak{g})-1}[1-\lambda]< c \Rightarrow \beta \lambda^{\eta_{j}(\mathfrak{g})-1}[1-\lambda]>c}$, and

2. For all $\langle ij\rangle \not \in \mathfrak{g}$, 

$\mbox{        }\quad\quad\textcolor{mycolor}{\beta \lambda^{\eta_{i}(\mathfrak{g})}[1-\lambda]>c \Rightarrow \beta \lambda^{\eta_{j}(\mathfrak{g})}[1-\lambda]<c}$.\\ 

&&\\

SV-SRN & MO-Framework &
1. For all $\langle ij\rangle \in \mathfrak{g}$, 

$\quad\quad\textcolor{mycolor}{\beta \lambda^{\eta_{i}(\mathfrak{g})-1}[1-\lambda]< c \Rightarrow \beta \lambda^{\eta_{j}(\mathfrak{g})-1}[1-\lambda]>c}$, and

2. For all $\langle ij\rangle \not \in \mathfrak{g}$, 

$\quad\quad\textcolor{mycolor}{\beta \lambda^{\eta_{i}(\mathfrak{g})}[1-\lambda] >c}$ and $s-d\eta_{j}(\mathfrak{g})\geq d$

$\quad\quad\quad\quad\quad\quad \Rightarrow \textcolor{mycolor}{\beta \lambda^{\eta_{j}(\mathfrak{g})}[1-\lambda]<c}$ or $s-d\eta_{i}(\mathfrak{g})< d$.\\

&&\\

SVN & SO-Framework &  Each agent $i\in \mathfrak{g}$ has backup partnerships with all agents $j \in \mathfrak{g}$ with $j\not=i$.\\

&&\\

SRN & SO-Framework & For all $\langle ij\rangle \not \in \mathfrak{g}$,  

$\quad\quad[b-c\eta_{i}(\mathfrak{g})\geq c$ and $s-d\eta_{j}(\mathfrak{g})\geq d]$ 

$\quad\quad\quad\quad\quad\quad \Rightarrow [b-c\eta_{j}(\mathfrak{g})<c$ or $s-d\eta_{i}(\mathfrak{g})< d]$.\\
\hline
\end{tabular}

\label{table:stability-summary}
\end{table} 
\begin{table}[h!]
\small
\centering
\caption{Summary of stability point for different network types under MO- and SO-frameworks.}
\begin{tabular}{p{1.8cm} p{2cm} p{4.05cm}}
\hline
\textbf{Network Type} &  \textbf{Framework} & \textbf{Unique Stability Point} \\ 
\hline
&&\\
SVN & MO-Framework & $\hat{\eta} = \ceil*{\frac{|\ln(\frac{c}{\beta(1-\lambda)})|}{|\ln\lambda|}} = \floor*{\frac{|(\ln\frac{c\lambda}{\beta(1-\lambda)})|}{|\ln\lambda|}}$.\\ 
&&\\
SV-SRN & MO-Framework & $\tilde{n}=min\{\hat{\eta}, \frac{s}{d} \}$ \\
&&\\
SVN & SO-Framework & $\hat{\eta}=N-1$\\
&&\\
SRN & SO-Framework & $\hat{\eta}=min\{ \frac{s}{d},  \frac{b}{c} \}$\\
&&\\
\hline
\end{tabular}
\label{table:stability-point}
\end{table}
\section{Stable, Efficient and Contented Networks}\label{subsec:stabile-networks}
We first discuss conditions on $N$ and $\hat{\eta}$ for connected networks to be bilaterally stable in Section \ref{subsubsec:single}. We, then, look at networks that are comprised of multiple connected components, and discuss conditions on $N$, $\hat{\eta}$ as well as number of agents in individual components that lead to a bilaterally stable network in Section \ref{subsubsec:components}. Finally, we discuss conditions that lead to unique bilaterally stable networks in Section \ref{subsubsec:unique}.

Henceforth, whenever we say $\mathfrak{g}$ is a symmetric social storage network, $\mathfrak{g}$ may be any of the networks SVN, SRN or SV-SRN with $N$ agents, under the MO- or SO- Framework,  with the unique stability point $\hat{\eta}$ corresponding to that network type and framework. 

\subsection{Stable Networks}
\textcolor{mycolor}{Up to this point, we have not explicitly discussed the process of network formation. This is because all our results above are independent of any process or protocol for network formation. However, the following results depend on the where we start the network formation from (refer \citep{Dai-Network-Formation} for different network configurations). We consider networks that evolve either from a null network (where all agents are initially disconnected) or from a complete network (where all agents are initially connected). When a network evolves from the null network, every agent starts contacting other agents to form links, in no particular order. This happens until there is no pair of agents who would consent to form a link. Similarly, when a network evolves from the complete network, pairs of agents consider deleting links if beneficial.}





\subsubsection{Connected Stable Networks}\label{subsubsec:single}
We start our discussion with the following remark. 

\begin{remark}
Each agent {\em{aims}} to achieve neighborhood size $\hat{\eta}$.
\end{remark}

Though agents want to achieve neighborhood size $\hat{\eta}$, this may not always be possible. 
The following example demonstrates how stable networks may evolve when all agents are isolated (Fig. \ref{fig:delete-network-1less}) or all connected (Fig. \ref{fig:delete-network-2more}), initially. 


\begin{example}\label{exmp:achieve-stability}

\begin{figure*}[!h]
\centering
\subfigure[Network $\mathfrak{g}$ ]
{
\includegraphics[scale=0.11]{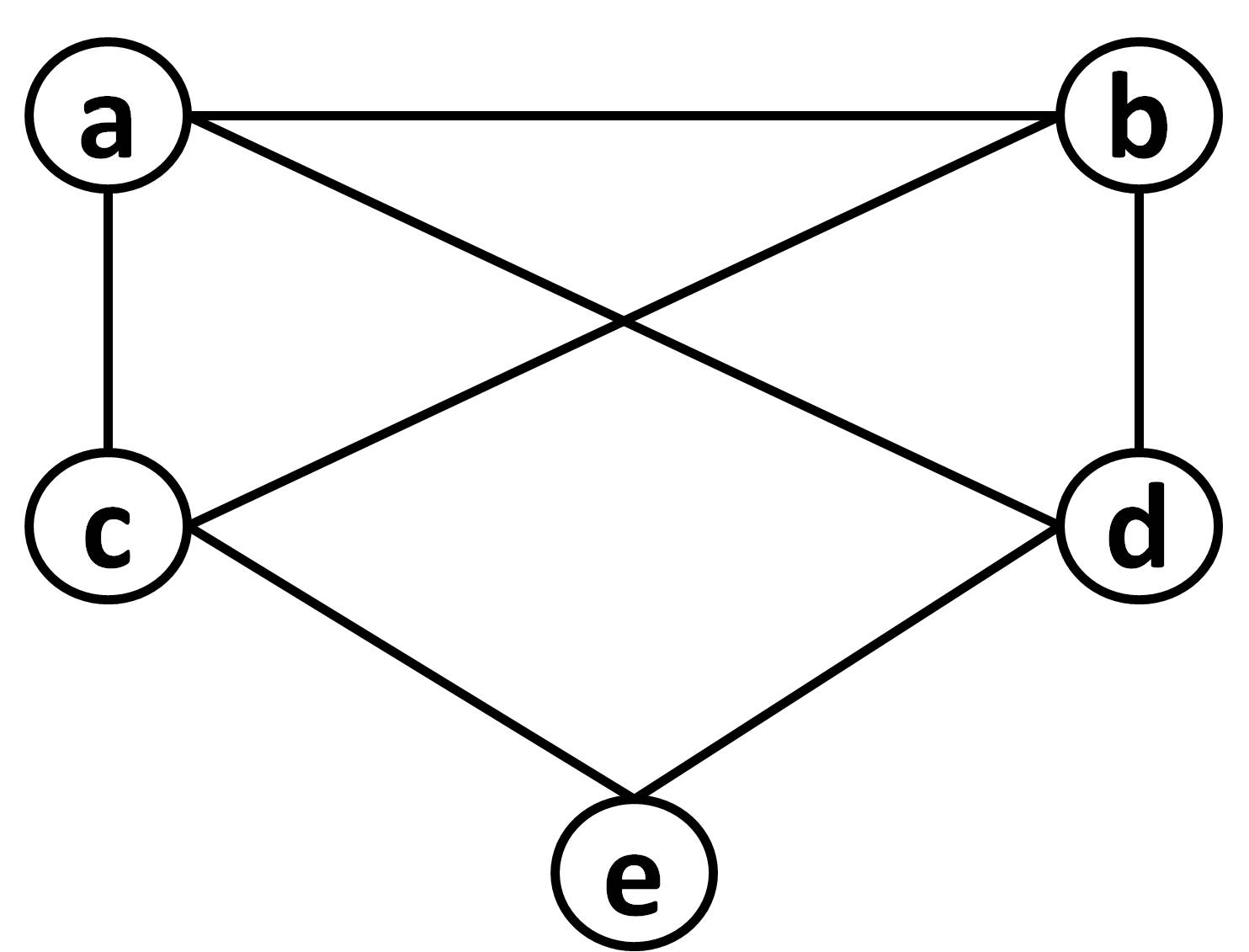} 
\label{fig:delete-network-1less}
}
\quad \quad \quad
\subfigure[Network $\mathfrak{s}$ ]
{
\includegraphics[scale=0.11]{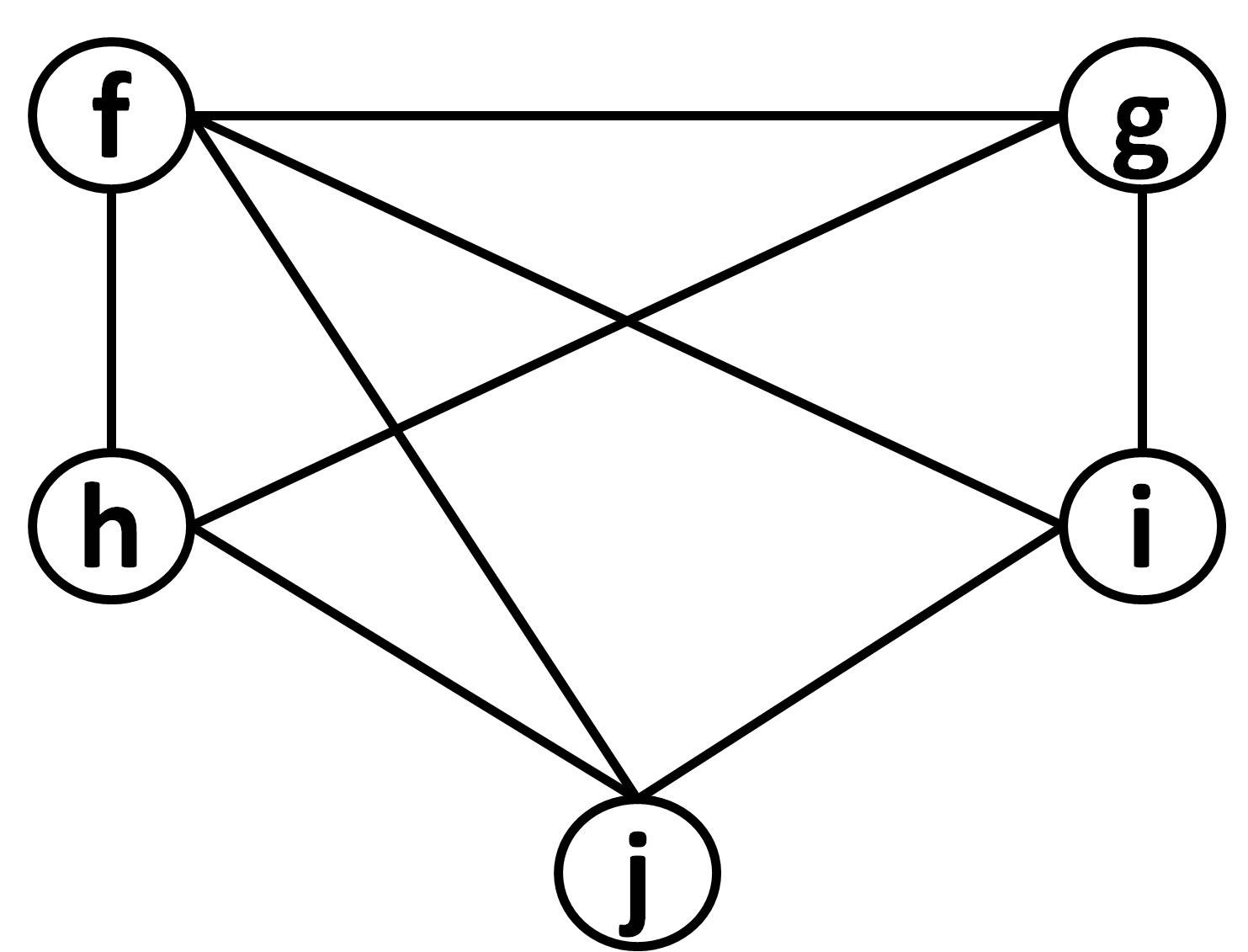} 
\label{fig:delete-network-2more}
}

\caption{Stable Networks $\mathfrak{g}$ evolved from the Null Network, and $\mathfrak{s}$ from the Complete Network}
\label{fig:s-b-delete-observation-observation}
\end{figure*}

Consider networks $\mathfrak{g}$ and $\mathfrak{s}$. Let $\hat{\eta} = 3$. Networks $\mathfrak{g}$ and $\mathfrak{s}$ are both bilaterally stable, where $\mathfrak{g}$ evolves from the empty network and $\mathfrak{s}$ evolves from the complete network. In $\mathfrak{g}$, although agent $e$ has an incentive to add another link, no other agent (who does not have a link with $e$) would consent to adding a new link with $e$ as they (that is, agents $a$, $b$, $c$ and $d$) have already reached their stability point $\hat{\eta}$ (that is, their neighborhood size is $\hat{\eta} = 3$) and hence, have no incentive to add or delete any link. 

In $\mathfrak{s}$, although agent $f$ has an incentive to delete a link, no other agent (who has a link with $f$) would consent to deleting their link with $f$ as they (that is, agents $g$, $h$, $i$ and $j$) have already reached their stability point $\hat{\eta}$ (that is, their neighborhood size is $\hat{\eta} = 3$) and hence, have no incentive to add or delete any link. 
\qed
\end{example}

In Propositions \ref{lem:n-1-agent} and \ref{lem:g-is-even-stable} below, we provide results that would be useful for an independent observer in checking for a pairwise stable symmetric social storage network, how many agents have maximised their utility. Thus, as discussed in the introduction, such an observer (say, an administrator or regulator) can externally perturb the system so that all agents achieve maximum utility. 

\begin{proposition}\label{lem:n-1-agent}
Let $N$ and $\hat{\eta}$ be (positive) odd integers, with $\hat{\eta}<N$. Then:
\begin{enumerate}
\item  Any symmetric social storage network $\mathfrak{g}$ with $N$ agents and stability point $\hat{\eta}$ consists of at least one agent who has an incentive to either add or delete a link.
\item There exists a connected, bilaterally stable, symmetric social storage network with exactly $N-1$ agents who have no incentive to add or delete any link.
\end{enumerate}
\end{proposition}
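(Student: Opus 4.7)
The plan is a parity argument for Part 1, followed by an explicit near-regular construction for Part 2.

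For Part 1, I would invoke the handshake lemma: $\sum_{i\in\mathbf{A}}\eta_i(\mathfrak{g}) = 2|\mathbf{L}|$ is always even. If every agent satisfied $\eta_i(\mathfrak{g}) = \hat{\eta}$, the sum would equal $N\hat{\eta}$, which is the product of two odd integers and therefore odd, a contradiction. Hence at least one agent $i^{\ast}$ must have $\eta_{i^{\ast}}(\mathfrak{g}) \neq \hat{\eta}$. By Definition~\ref{def:stability-point} together with Lemmas~\ref{lemma:no-incentive-neighbor-increase} and~\ref{lemma:no-incentive-neighbor-decrease}, such an $i^{\ast}$ has a strict incentive to add a link (if $\eta_{i^{\ast}}(\mathfrak{g}) < \hat{\eta}$) or to delete one (if $\eta_{i^{\ast}}(\mathfrak{g}) > \hat{\eta}$).

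For Part 2, I would exhibit a connected network $\mathfrak{g}^{\ast}$ with one distinguished agent $v^{\ast}$ of even degree $d^{\ast} \in \{\hat{\eta}-1,\hat{\eta}+1\}$ and every other agent of degree exactly $\hat{\eta}$. Since $N-1$ is even and $\hat{\eta} \le N-2$, the degree sum $(N-1)\hat{\eta} + d^{\ast}$ is even, the Erdős--Gallai conditions are easily satisfied (only two distinct degree values, each at most $N-1$), and the sequence is graphical. Connectedness can be arranged constructively: in the generic case take $d^{\ast}=\hat{\eta}+1$, start from a Hamiltonian cycle on the $N$ vertices (connected by construction), and then add chords so that $v^{\ast}$ receives one extra edge beyond the cycle while the remaining agents are lifted exactly to $\hat{\eta}$. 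For $\hat{\eta}=1$, where the Hamiltonian cycle overshoots the leaves' target degree, I would instead use a star graph with $v^{\ast}$ as the hub of degree $N-1$ (still even) and every leaf of degree $1$. For the SRN/SV-SRN regime where the storage or budget bound forces $\hat{\eta}=s/d$ (or $\hat{\eta}=b/c$) and so prohibits $v^{\ast}$ from holding $\hat{\eta}+1$ neighbors' data, I would fall back to $d^{\ast}=\hat{\eta}-1$ and attach $v^{\ast}$ as a pendant to a connected $\hat{\eta}$-regular graph on the remaining $N-1$ vertices.

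Bilateral stability of $\mathfrak{g}^{\ast}$ is then essentially automatic. Each of the $N-1$ non-special agents is already at the stability point $\hat{\eta}$, so by Theorem~\ref{prop:equilibrium-point} it strictly loses from either deleting an incident link or adding a new one, and hence refuses consent for every proposed deviation. The special agent $v^{\ast}$ may itself desire to move $d^{\ast}$ back toward $\hat{\eta}$, but every candidate counterparty is a non-special agent and blocks the move. Both clauses of Definition~\ref{def:stability} (or the constrained analogues~\ref{def:stability-storage-constraints} and~\ref{def:stability-storage-budget-constraints}) therefore hold, and the resulting network has exactly $N-1$ agents with no incentive to add or delete any link. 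The main obstacle is the graph-theoretic step: producing a \emph{connected} realization of the desired near-regular degree sequence uniformly across all regimes (SVN, SRN, SV-SRN under both MO- and SO-Frameworks, including the boundary cases $\hat{\eta}=1$ and $\hat{\eta}=s/d$); the stability verification itself is definitional once the construction is in hand.
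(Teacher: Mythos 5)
Your Part~1 is the paper's own argument in slightly different clothing: the paper notes that if all $N$ agents sat at $\hat{\eta}$ the link count would be $N\hat{\eta}/2$, a non-integer, while you sum degrees and invoke the handshake lemma; both then conclude via Lemmas~\ref{lemma:no-incentive-neighbor-increase} and~\ref{lemma:no-incentive-neighbor-decrease} that the deviant agent strictly wants to add or delete. Part~2 also follows the paper's skeleton --- aim for $N-1$ agents at degree $\hat{\eta}$ and one special agent off it, and get bilateral stability for free because every counterparty at the stability point withholds consent --- the only real difference is how connectivity is secured. The paper takes an arbitrary realization and merges non-trivial components by degree-preserving swaps (delete $\langle i_1 j_1\rangle$ and $\langle i_2 j_2\rangle$, add $\langle i_1 j_2\rangle$ and $\langle i_2 j_1\rangle$), with a separate patch for an isolated agent; you build a connected realization directly (Hamiltonian cycle plus chords generically, a star when $\hat{\eta}=1$). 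Your star is in fact a genuine improvement on that corner: for $\hat{\eta}=1$ and $N\ge 5$ the sequence $(\hat{\eta}+1,\hat{\eta},\dots,\hat{\eta})=(2,1,\dots,1)$ has no connected realization at all, and the paper's swap repair cannot help because every link of a near-perfect matching is a bridge; the star, whose hub has degree $N-1$ rather than $\hat{\eta}\pm 1$, still delivers exactly $N-1$ agents with no incentive to deviate and the leaves block all of the hub's attempted deletions. Your generic chord-completion step is asserted rather than proved, but the target sequence is graphical with minimum degree $\ge 1$ and degree sum $\ge 2(N-1)$, so a connected realization exists by the standard argument --- which is exactly the paper's swap argument --- so that omission (which you flag) is minor.

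The one step that fails as written is your fallback for the capped regimes (SV-SRN under MO, SRN under SO, where $\hat{\eta}=s/d$ or $b/c$ makes degree $\hat{\eta}+1$ infeasible). Attaching $v^{\ast}$ ``as a pendant'' to a connected $\hat{\eta}$-regular graph on the other $N-1$ agents gives $v^{\ast}$ degree $1$, not $\hat{\eta}-1$, and pushes the attachment point to degree $\hat{\eta}+1$ --- precisely the degree the cap forbids --- so the network is infeasible and has fewer than $N-1$ agents at $\hat{\eta}$. The repair is the paper's isolated-agent trick generalized: since $\hat{\eta}-1$ is even, pick $(\hat{\eta}-1)/2$ pairwise disjoint links $\langle x_m y_m\rangle$ of the $\hat{\eta}$-regular graph, delete them, and add $\langle v^{\ast} x_m\rangle$ and $\langle v^{\ast} y_m\rangle$; every $x_m,y_m$ stays at $\hat{\eta}$, $v^{\ast}$ reaches $\hat{\eta}-1$, and connectivity is preserved because each deleted link is replaced by a path through $v^{\ast}$. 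One residual caveat, which afflicts the paper's statement as much as your proof: in a capped regime with $\hat{\eta}=1$ the maximum feasible degree is $1$, so every feasible network is a matching and no connected network on $N\ge 3$ agents exists at all; the star rescues only the sufficient-storage/budget (SVN) settings, so connectivity in Part~2 must be read as restricted to regimes where a degree exceeding $\hat{\eta}$, or $\hat{\eta}\ge 2$ itself, is feasible.
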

\begin{proof}
Let $\mathfrak{g}$ be bilaterally stable, and let $\ell$ be the number of links in $\mathfrak{g}$. 

$\hat{\eta} < N$ ensures that $\ell$ does not exceed the maximum number of links $\mathfrak{g}$ can possibly have, that is, $\frac{N\times (N-1)}{2}$. 

As the utility of each agent is maximum when its neighborhood size is $\hat{\eta}$, total number of links $\tilde\ell = \frac{N\times \hat{\eta}}{2}$ will be attained if possible. 
However, $\tilde{\ell}$ is not an integer, as both $N$ and $\hat{\eta}$ are odd. 

This implies that, not all $N$ agents have a neighborhood size of $\hat{\eta}$ at stability. This proves (1).

Now, $N-1$ agents having $\hat{\eta}$ neighbors and the $N^{th}$ agent having $\hat{\eta}-1$ or $\hat{\eta}+1$ neighbors are, however, possible. Let g be such a network with exactly $N-1$ agents who have no incentive to add or delete any link. These $N-1$ agents have neighborhood size $\hat{\eta}$. None of these $N-1$ agents will consent to add or delete any link
(among themselves, or with the $N^{th}$ agent). Thus, the symmetric social storage network $\mathfrak{g}$ is bilaterally stable. If $\mathfrak{g}$ is connected, we are done. Otherwise, all non-trivial components (that is, components with 2 or more agents in each) of $\mathfrak{g}$ can be connected as follows, without changing the neighborhood sizes of any of the agents. Let
$\langle i_{1} j_{1} \rangle$ and $\langle i_{2} j_{2} \rangle$ be links in two different (non-trivial) components, say $\mathfrak{g}(\kappa_{1})$ and $\mathfrak{g}(\kappa_{2})$ of $\mathfrak{g}$. Deleting both these links, and replacing them with $\langle i_{1} j_{2} \rangle$ and $\langle i_{2} j_{1} \rangle$ connects $\mathfrak{g}(\kappa_{1})$ and $\mathfrak{g}(\kappa_{2})$, without changing the neighborhood sizes of any of the agents. As neighborhood sizes of all the agents remain the same, the resulting graph is bilaterally stable too. Now, if $i$ is an isolated agent and $\langle jk \rangle$ is a link in $\mathfrak{g}$, delete 
$\langle jk \rangle$, and add $\langle ij \rangle$ and $\langle ik \rangle$ instead. In this case, clearly, $i$ continues to be the only agent with an incentive to either add or delete a link. This proves (2).	\qed
\end{proof}

\begin{remark}
In the proof of Proposition \ref{lem:n-1-agent}, on the one hand, when the network evolves from the null network, $\hat{\eta}-1$ neighbors for the $N^{th}$ agent is as beneficial as possible, and the total number of links will, hence, be ${{\ell}}=\frac{[(N-1) \hat{\eta} + (\hat{\eta}-1)]}{2}$. 

On the other hand, when the network evolves from the complete network, $\hat{\eta}+1$ neighbors for the $N^{th}$ agent is as beneficial as possible, and the total number of links will, hence, be ${{\ell}}=\frac{[(N-1) \hat{\eta} + (\hat{\eta}+1)]}{2}$. This number also does not exceed the maximum possible number of links, as $\hat{\eta} \leq N-2$ (because $\hat{\eta} < N$, and both $\hat{\eta}$ and $N$ are odd). 
\end{remark}

\begin{proposition}\label{lem:g-is-even-stable}
Let at least one of $N$ and $\hat{\eta}$ be even, and let $\hat{\eta}<N$. Then, there exists a connected bilaterally stable symmetric social storage network $\mathfrak{g}$ where no agent has incentives to add or delete any link.
\end{proposition}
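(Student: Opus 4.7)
The plan is to exhibit an explicit connected $\hat{\eta}$-regular graph on $N$ vertices, and then invoke the fact that each agent having neighborhood size exactly $\hat{\eta}$ eliminates every incentive to add or delete a link (by the characterization of $\hat{\eta}$ via Lemmas \ref{lemma:no-incentive-neighbor-increase} and \ref{lemma:no-incentive-neighbor-decrease}, together with the bilateral stability conditions from Section \ref{subsec:stability-condition}). The parity assumption that at least one of $N$ and $\hat{\eta}$ is even is exactly what is needed to guarantee $N\hat{\eta}$ is even, so that a $\hat{\eta}$-regular graph on $N$ vertices can exist (the sum-of-degrees / handshake constraint).

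The construction I would use is a circulant graph on the vertex set $\mathbf{A}=\{0,1,\dots,N-1\}$, labels taken modulo $N$. I would split into two cases based on the parity of $\hat{\eta}$. Case (a): $\hat{\eta}$ is even, say $\hat{\eta}=2k$. Connect each agent $i$ to $i\pm 1, i\pm 2, \dots, i\pm k \pmod N$. Since $\hat{\eta}<N$, these $2k$ neighbors are distinct from $i$, and every agent has exactly $\hat{\eta}$ neighbors. Connectedness follows because the Hamiltonian cycle $0\text{--}1\text{--}2\text{--}\cdots\text{--}(N-1)\text{--}0$ is contained in the graph. Case (b): $\hat{\eta}$ is odd (so $N$ must be even by hypothesis), say $\hat{\eta}=2k+1$. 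Connect each agent $i$ to $i\pm 1,\dots,i\pm k$ and additionally to $i + N/2 \pmod N$. Because $N$ is even, $N/2$ is an integer and the ``diameter'' edges $\{i,i+N/2\}$ form a perfect matching, each contributing degree one to both endpoints. Thus every agent has exactly $2k+1=\hat{\eta}$ neighbors, and the graph again contains the Hamiltonian cycle, hence is connected.

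Having built $\mathfrak{g}$ so that every agent has neighborhood size $\hat{\eta}$, the proof is completed by verifying both clauses of the bilateral stability definition. For any existing link $\langle ij\rangle\in\mathfrak{g}$, deletion would drop $i$'s neighborhood size to $\hat{\eta}-1$, and by Lemma \ref{lemma:no-incentive-neighbor-decrease} (or the corresponding SO-Framework lemmas in Section \ref{subsec:stability-condition}) this gives $u_i(\mathfrak{g}-\langle ij\rangle)\le u_i(\mathfrak{g})$; symmetrically for $j$. For any non-existing link $\langle ij\rangle\notin\mathfrak{g}$, addition would push $i$'s neighborhood size to $\hat{\eta}+1$, and by Lemma \ref{lemma:no-incentive-neighbor-increase} this gives $u_i(\mathfrak{g}+\langle ij\rangle)\le u_i(\mathfrak{g})$; symmetrically for $j$. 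Thus no agent has any incentive to add or delete a link, which is stronger than bilateral stability and matches the statement.

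The only mild obstacle is bookkeeping across the two frameworks: in the SO-Framework the relevant bounds come from storage and budget constraints ($\hat{\eta}=\min\{s/d,b/c\}$ or $N-1$) rather than from the logarithmic formula, so I would need to note that in every case the stability point $\hat{\eta}$ was \emph{defined} so that having exactly $\hat{\eta}$ neighbors removes any incentive to increase or decrease degree, and that $\hat{\eta}<N$ ensures the circulant construction is well-defined (i.e.\ $\hat{\eta}\le N-1$ and in case (b) that $N/2 \notin \{1,\dots,k\}$ up to relabeling, which holds since $\hat{\eta}<N$ implies $2k+1<N$, hence $k<N/2$). Once this is observed, the same regular-graph argument handles every scenario in Table \ref{table:stability-point} uniformly.
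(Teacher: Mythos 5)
The paper gives no proof of this proposition at all (it states only ``We skip the proof''), so there is nothing to compare your argument against; judged on its own, your circulant (Harary-type) construction is the natural route and is essentially correct for $\hat{\eta}\ge 2$. The parity hypothesis is indeed exactly the handshake condition needed for an $\hat{\eta}$-regular graph on $N$ vertices to exist, your degree counts are right (the inequalities $2k<N$ in case (a) and $k<N/2$ in case (b) guarantee the prescribed neighbors are distinct), and once every agent has degree exactly $\hat{\eta}$, Lemmas \ref{lemma:no-incentive-neighbor-increase} and \ref{lemma:no-incentive-neighbor-decrease} (together with their analogues under the storage/budget constraints, where $\hat{\eta}$ is by construction the largest feasible degree) give that no agent gains by adding or deleting a link, which is stronger than bilateral stability. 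Your remark that in the SO-Framework one should appeal to the defining property of the stability point rather than the logarithmic formula is the right piece of bookkeeping.

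The one genuine gap is the degenerate case $\hat{\eta}=1$, i.e.\ $k=0$ in your case (b). There your graph reduces to the perfect matching $\{i,\,i+N/2\}$, which is disconnected whenever $N\ge 4$, and your assertion that the graph ``again contains the Hamiltonian cycle'' is false because there are no $\pm 1$ edges to supply it. This is not merely a defect of the construction: no connected $1$-regular graph on more than two vertices exists, so for $\hat{\eta}=1$ and even $N\ge 4$ the proposition itself cannot hold as literally stated (compare Claim \ref{claim2}, where the paper describes the $\hat{\eta}=1$ stable networks as a disjoint union of pairs). You should either restrict to $\hat{\eta}\ge 2$ explicitly or flag that the proposition implicitly assumes it.
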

\begin{proof}
\textcolor{mycolor}{Existence of the $\hat{\eta}$-regular network on $N$ agents, $\mathfrak{g}$, follows trivially from the Erd\H{o}s--Gallai theorem. Clearly $\mathfrak{g}$ is a bilaterally stable.}
\end{proof}


\subsubsection{Stable Network with Multiple Connected Components}\label{subsubsec:components}
We, now, discuss results on stability of symmetric storage networks with two or more components. 
Examples of scenarios where this might be useful include companies under the same umbrella group, where the social storage networks of each of these companies may be viewed as 
a component of a larger network, which may be monitored or analysed by an independent observer (as discussed in the previous section). 

\begin{clm}\label{atmost-one-with-lessthan-eta}
Suppose $\mathfrak{g}$ is a symmetric social storage network with two or more components. If $\mathfrak{g}$ is bilaterally stable, then there is at most one component with less than or equal to $\hat{\eta}$ agents.
\end{clm}

\begin{proof}

\textcolor{mycolor}{Suppose, $\mathfrak{g}(\kappa_i)$ and $\mathfrak{g}(\kappa_j)$ are two different (non-empty) components with less than or equal to $\hat{\eta}$ agents. It is easy to see that all agents in $\mathfrak{g}(\kappa_i)$ as well as in $\mathfrak{g}(\kappa_j)$ have less than $\hat{\eta}$ neighbours. Consider agents $i\in \mathfrak{g}(\kappa_i)$ and $j\in \mathfrak{g}(\kappa_j)$. Clearly, $\langle ij \rangle \not \in \mathfrak{g}$ but both $i$ and $j$ have incentives to form (at least) one link each, implying $\mathfrak{g}$ is not bilaterally stable.}
\qed
\end{proof}

\begin{proposition}\label{lem:g-is-not-stable}
Let $\mathfrak{g}$ be a symmetric social storage network which has evolved from the null network. Let $\hat{\eta}$ be odd. Suppose $\mathfrak{g}$ consists of $\kappa$ connected components, $\kappa \geq 2$. 
Suppose at least two of the components, say $\mathfrak{g}(\kappa_1)$ and $\mathfrak{g}(\kappa_2)$, each have either $\leq \hat{\eta}$ agents or an odd number of agents more than $\hat{\eta}$.  

Then $\mathfrak{g}$ is not bilaterally stable.
\end{proposition}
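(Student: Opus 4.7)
The plan is to exhibit a pair of unlinked agents, one drawn from each of $\mathfrak{g}(\kappa_1)$ and $\mathfrak{g}(\kappa_2)$, both of whom strictly benefit from adding the link between them. This immediately violates condition~(2) of bilateral stability (Definition~\ref{def:stability}, and the storage/budget variants in Definitions~\ref{def:stability-storage-constraints}--\ref{def:stability-storage-budget-constraints}), so $\mathfrak{g}$ cannot be bilaterally stable.

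First I would record a uniform degree cap for $\mathfrak{g}$: since $\mathfrak{g}$ evolved from the null network, no agent has degree exceeding $\hat{\eta}$. Each link is appended under mutual consent, and by Lemma~\ref{lemma:add} (equivalently Lemma~\ref{lemma:no-incentive-neighbor-increase}) adding a link strictly benefits an agent precisely when its current degree is $<\hat{\eta}$. Hence no agent's degree ever crosses $\hat{\eta}$ during the construction, and in particular every agent in $\mathfrak{g}$ has degree at most $\hat{\eta}$.

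Next I would show that each of $\mathfrak{g}(\kappa_1)$ and $\mathfrak{g}(\kappa_2)$ contains at least one agent whose degree in $\mathfrak{g}$ is strictly less than $\hat{\eta}$. Let $n_t$ denote the number of agents in $\mathfrak{g}(\kappa_t)$, $t\in\{1,2\}$. If $n_t\le\hat{\eta}$, then the maximum possible degree inside the component is $n_t-1<\hat{\eta}$, so every agent there qualifies. Otherwise $n_t$ is odd and $n_t>\hat{\eta}$; by the handshake lemma the sum of the degrees inside $\mathfrak{g}(\kappa_t)$ is even, whereas if every agent in $\mathfrak{g}(\kappa_t)$ had degree exactly $\hat{\eta}$ that sum would equal $n_t\hat{\eta}$, a product of two odd numbers and hence odd --- a contradiction. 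Thus some agent has degree $\neq \hat{\eta}$, and combined with the cap above, this agent's degree is strictly less than $\hat{\eta}$.

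Finally, choose such under-saturated agents $i\in\mathfrak{g}(\kappa_1)$ and $j\in\mathfrak{g}(\kappa_2)$. Since they lie in different components, $\langle ij\rangle\notin\mathfrak{g}$, and Lemma~\ref{lemma:add} yields $u_i(\mathfrak{g}+\langle ij\rangle)>u_i(\mathfrak{g})$ and $u_j(\mathfrak{g}+\langle ij\rangle)>u_j(\mathfrak{g})$. Storage or budget feasibility does not obstruct the addition: under the symmetric-resource hypothesis, an agent with degree below $\hat{\eta}$ automatically retains enough storage for one more neighbor's data and enough budget to fund the extra link, because these caps are exactly what is already encoded in $\hat{\eta}$ via Theorem~\ref{prop:equilibrium-point} and Table~\ref{table:stability-point}. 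This violates condition~(2) of bilateral stability, proving $\mathfrak{g}$ is not bilaterally stable. The main delicate step is the parity argument, which crucially uses the hypothesis that $\hat{\eta}$ is odd; the small-component case and the degree cap itself follow almost directly from the lemmas already proved.
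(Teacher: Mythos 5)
Your proof is correct and follows essentially the same route as the paper: locate one agent of degree strictly below $\hat{\eta}$ in each of $\mathfrak{g}(\kappa_1)$ and $\mathfrak{g}(\kappa_2)$ and observe that both would consent to the cross-component link. The only difference is that you re-derive in place (the degree cap from evolution out of the null network, the handshake-lemma parity argument, and the resource-feasibility check) what the paper delegates to Proposition~\ref{lem:n-1-agent} and Claim~\ref{atmost-one-with-lessthan-eta}, which makes your version more self-contained but not a different argument.
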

\begin{proof}
Follows from Proposition \ref{lem:n-1-agent} and Claim \ref{atmost-one-with-lessthan-eta}.
\qed
\end{proof}

\begin{example}
\begin{figure}[!ht]
\centering
\subfigure[Two components, $\mathfrak{g}(\kappa_{1})$ and  $\mathfrak{g}(\kappa_{2})$, of $\mathfrak{g}$]
{
\includegraphics[scale=0.13]{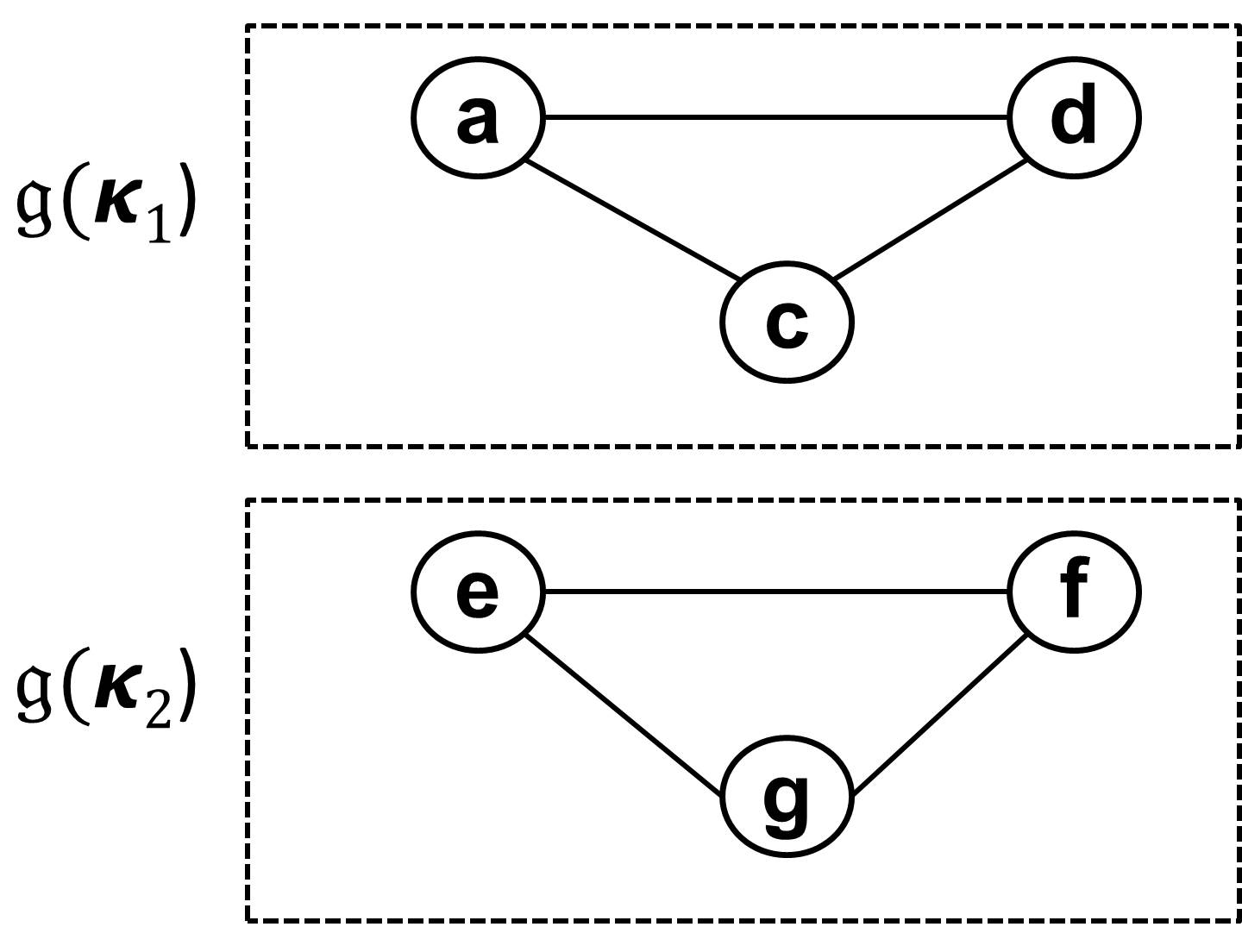} 
\label{fig:component}
}
\quad
\subfigure[Bilaterally stable network $\mathfrak{g'}$]
{
\includegraphics[scale=0.13]{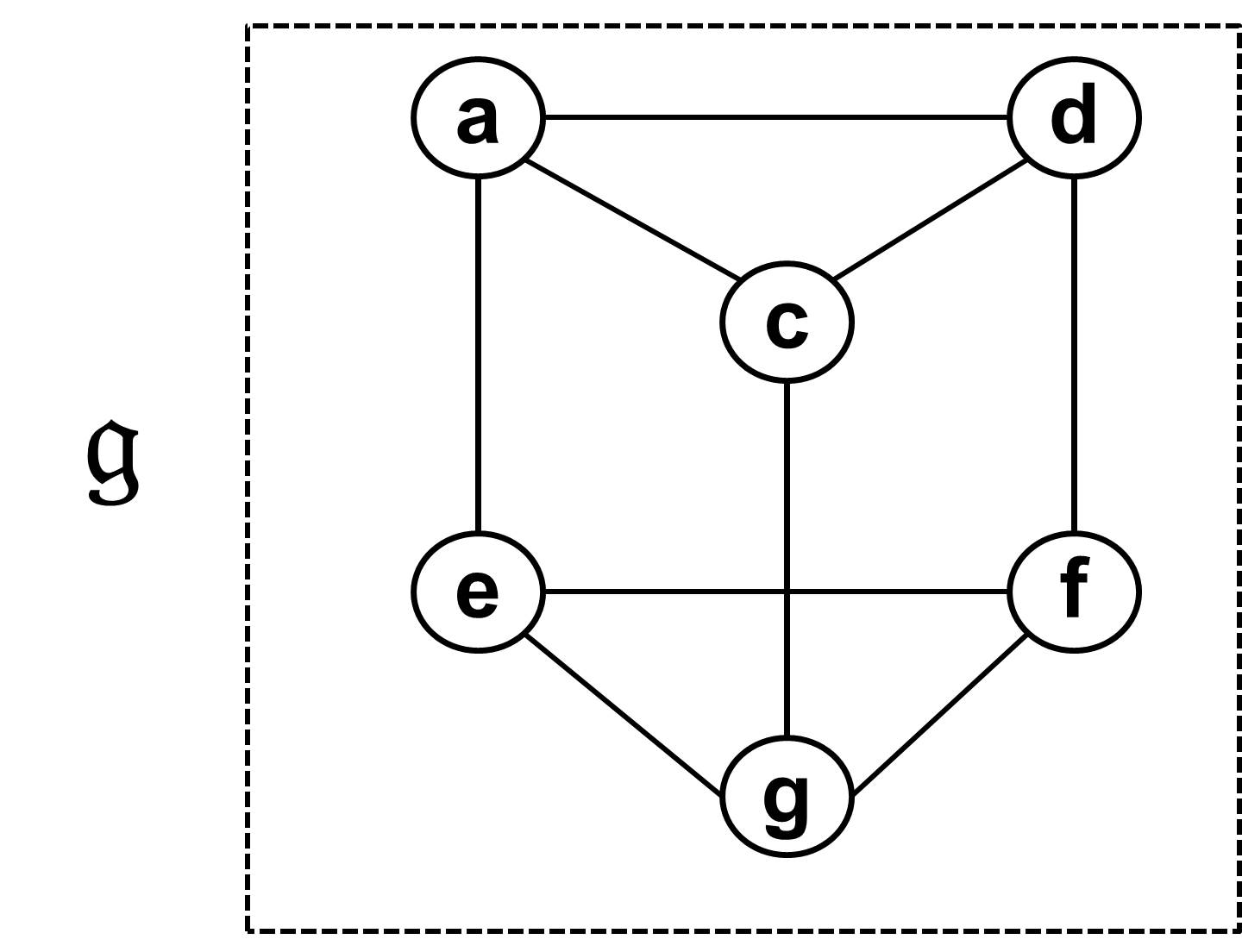} 
\label{fig:grand-network}
}
\caption{Two components $\mathfrak{g}(\kappa_{1})$ and $\mathfrak{g}(\kappa_{2})$, though complete, are bilaterally unstable when $\hat{\eta}=3$ and form a bilaterally stable network $\mathfrak{g'}$. However, the network $\mathfrak{g}$ consisting of $\mathfrak{g}(\kappa_{1})$ and $\mathfrak{g}(\kappa_{2})$ as two components is bilaterally stable when $\hat{\eta}=2$.}
\label{fig:network-component-observation}

\end{figure}

Consider network $\mathfrak{g}$ (see Fig. \ref{fig:component}), with two components $\mathfrak{g}(\kappa_{1})$ and $\mathfrak{g}(\kappa_{2})$. 

If $\hat{\eta}=3$, then both components of $\mathfrak{g}$ have $\hat{\eta}$ agents. Every agent has an incentive to add one more link. Thus $\mathfrak{g}$ is bilaterally unstable. Clearly, no agent can add any more links within the same component. The network $\mathfrak{g}'$ (see Fig. \ref{fig:grand-network}) is an example of a bilaterally stable network, which evolves from $\mathfrak{g}$.

Now, if $\hat{\eta}=2$, the network $\mathfrak{g}$ is bilaterally stable. \qed
\end{example}

\begin{corollary}\label{lem:comp-(n-1)-stable}
Let $\mathfrak{g}$ be a symmetric social storage network which has evolved from the null network and which consists of $\kappa$ components, $\kappa \geq 2$. Let $\hat{\eta}$ be odd, and let $N > \hat{\eta}$. If $\mathfrak{g}$ is bilaterally stable, then at least $\kappa-1$ components must consist of an even number of agents greater than $\hat{\eta}$.
\end{corollary}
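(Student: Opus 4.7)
The plan is to obtain the result as an immediate contrapositive of Proposition \ref{lem:g-is-not-stable}. First I would classify every connected component $\mathfrak{g}(\kappa_{r})$ of $\mathfrak{g}$ into exactly one of three mutually exclusive types: (i) components with at most $\hat{\eta}$ agents, (ii) components with strictly more than $\hat{\eta}$ agents and an \emph{odd} number of agents, and (iii) components with strictly more than $\hat{\eta}$ agents and an \emph{even} number of agents. The conclusion we are aiming for is exactly that at least $\kappa-1$ components are of type (iii).

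Next I would observe that Proposition \ref{lem:g-is-not-stable} says, in this language, that whenever at least two components are of type (i) or (ii), the network fails to be bilaterally stable (since in each such component the argument of Proposition \ref{lem:n-1-agent} together with Claim \ref{atmost-one-with-lessthan-eta} produces an agent with fewer than $\hat{\eta}$ neighbors, and any two such agents sitting in different components will mutually consent to add the cross-component link). Taking the contrapositive, if $\mathfrak{g}$ is bilaterally stable, then at most one component of $\mathfrak{g}$ is of type (i) or (ii). Since the three types partition the set of components, this leaves at least $\kappa-1$ components of type (iii), which is the claim.

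The only thing I would have to check carefully is that Proposition \ref{lem:g-is-not-stable} indeed covers type (i) components with sizes strictly less than $\hat{\eta}$ (not just equal to $\hat{\eta}$), and that the ``evolved from the null network'' hypothesis transfers correctly from the proposition to the corollary --- both are immediate from the statements, so I do not expect any real obstacle here. The corollary then follows without any new combinatorial argument beyond a clean application of the contrapositive.
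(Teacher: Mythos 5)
Your proposal is correct and matches the paper's intent: the corollary is stated without proof precisely because it is the contrapositive of Proposition \ref{lem:g-is-not-stable}, and your three-way classification of components makes that contrapositive explicit. No gaps.
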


\begin{remark}
In Proposition \ref{lem:g-is-not-stable} and Corollary \ref{lem:comp-(n-1)-stable}, if we consider networks which have evolved from the complete network, then Example \ref{fig:3odd} below acts as a counter example. 
If $\hat{\eta}$ is even, we apply Proposition \ref{lem:g-is-even-stable} to each component having more than $\hat{\eta}$ agents to see that each of these components is bilaterally stable. Now, there can be at most one component with $\leq \hat{\eta}$ agents (refer Claim \ref{atmost-one-with-lessthan-eta}), and if there is such a component, $\mathfrak{g}$ is bilaterally stable if and only if that component is complete. 
\end{remark}

The following example shows a bilaterally stable network, which has evolved from the complete network.

\begin{example}\label{eg:odd-delete-from-complete}
Let $N = 15$ and $\hat\eta = 3$. Consider the network $\mathfrak{g}$ on $N$ agents (see Fig. \ref{fig:3odd}) that consists of three components, $\mathfrak{g}(\kappa_{1})$, $\mathfrak{g}(\kappa_{2})$ and $\mathfrak{g}(\kappa_{3})$. 
Though $\mathfrak{g}$ consists of three agents, $a$, $f$ and $k$, who have an incentive to delete a link each, $\mathfrak{g}$ is bilaterally stable. This is because the agents, $a$, $f$ and $k$, are in three different components, in each of which all other agents have neighborhood size $\hat\eta$. 

\begin{figure*}[!h]
\centering
\subfigure[$\mathfrak{g}(\kappa_{1})$ ]
{
\includegraphics[scale=0.11]{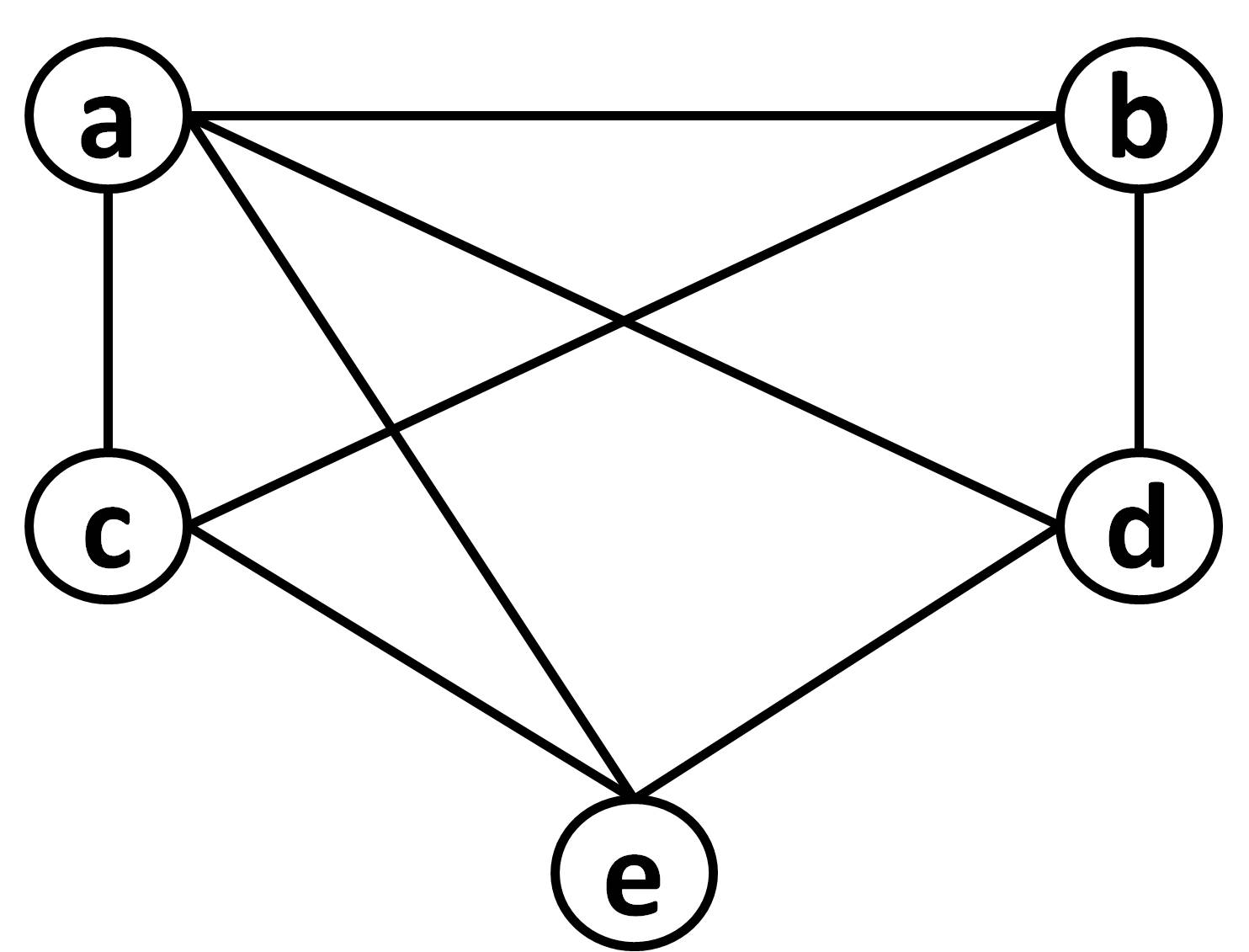} 
\label{fig:delete-network-1}
}
\quad \quad \quad
\subfigure[$\mathfrak{g}(\kappa_{2})$ ]
{
\includegraphics[scale=0.11]{delete-network-2.jpg} 
\label{fig:delete-network-2}
}
\quad \quad \quad
\subfigure[$\mathfrak{g}(\kappa_{3})$]
{
\includegraphics[scale=0.11]{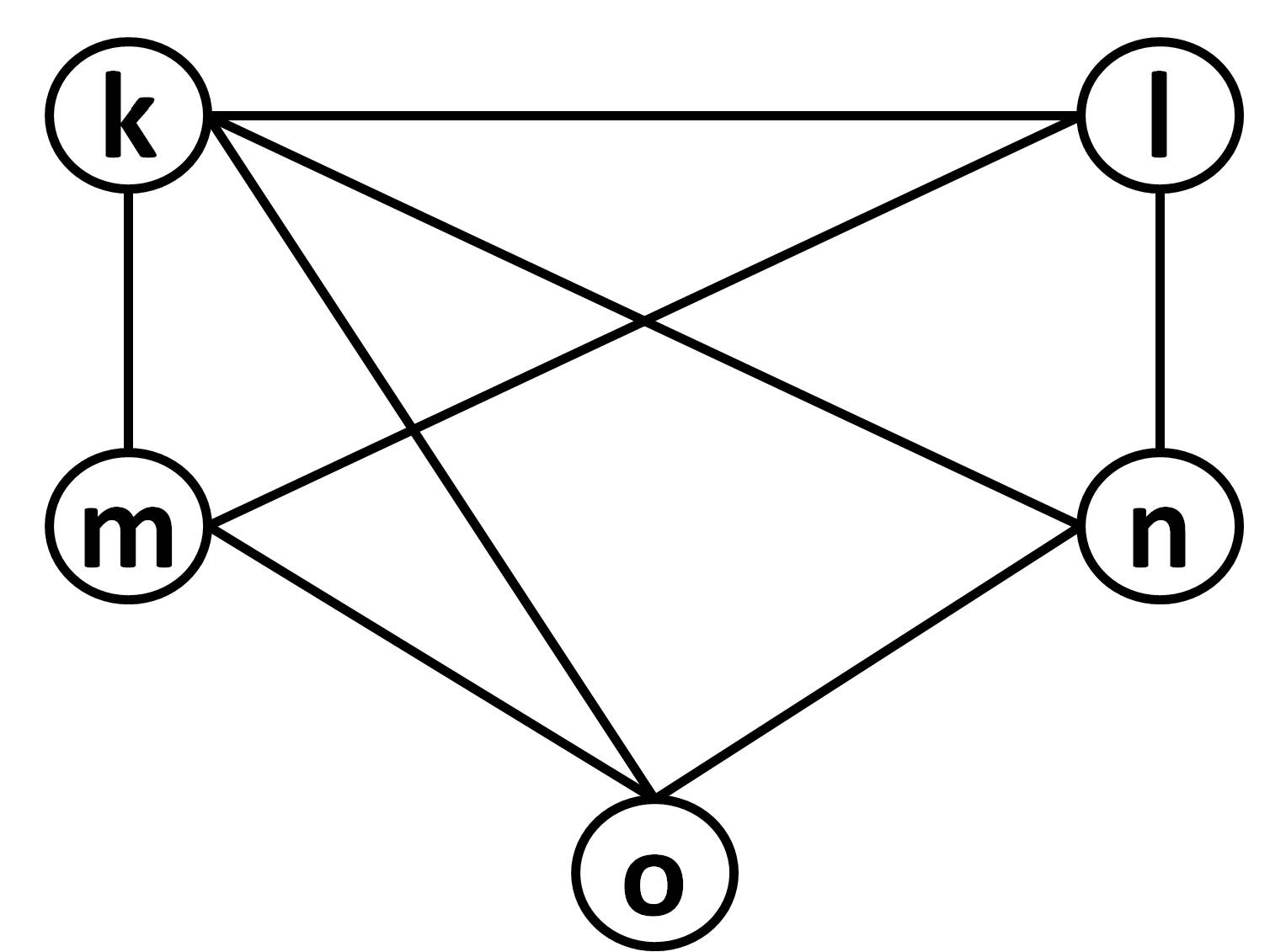} 
\label{fig:delete-network-3}
}
\caption{Stable Network $\mathfrak{g}$ on 15 agents, consisting of 3 components $\mathfrak{g}(\kappa_{1})$, $\mathfrak{g}(\kappa_{2})$ and $\mathfrak{g}(\kappa_{3})$}
\label{fig:3odd}
\end{figure*}
\end{example}

\begin{clm}\label{claim2}
Suppose $\mathfrak{g}$ is a symmetric social storage network. 
If $\hat{\eta}=1$, and if $\mathfrak{g}$ has evolved from the null network, then $\mathfrak{g}$ is bilaterally stable if and only if $\mathfrak{g}$ consists of a set of $\frac{N-1}{2}$ connected pairs of agents plus one isolated agent if $N$ is odd, and a set of $\frac{N}{2}$ connected pairs of agents if $N$ is even.
\end{clm}
\begin{proof}
\textcolor{mycolor}{As $\mathfrak{g}$ has evolved from the null network and as $\hat{\eta}=1$, no agent has two or more neighbours. Hence, if  $N$ is even, $\mathfrak{g}$ consists of $\frac{N}{2}$ connected pairs of agents. 
 Similarly, if $N$ is odd, $\mathfrak{g}$ consists of one isolated agent and the remaining $N-1$ agents connect in pairs.}\qed
\end{proof}

\begin{remark}\label{remark-from-ccomplete-star}
In Claim \ref{claim2}, if $\mathfrak{g}$ has evolved from the complete network (by mutual deletion of links), then networks consisting of star components are also bilaterally stable as per Definitions \ref{def:stability}, \ref{def:stability-storage-constraints}, and \ref{def:stability-storage-budget-constraints}.  
\end{remark}

\begin{remark}
In Claim \ref{claim2}, if $\mathfrak{g}$ is a given network, then in addition to the star components as discussed in Remark \ref{remark-from-ccomplete-star}, 
$\mathfrak{g}$ may also consist of (at most) one isolated agent and continue to be bilaterally stable.  
\end{remark}

It is interesting to note that in any star network, given that $\hat{\eta}=1$, though the universal agent has incentive to delete a link (or links), no other (pendant) agent will consent to deletion. 
However, if we start from the null network, we have the following observation.\\

\begin{clm}\label{claim3}
Suppose $\mathfrak{g}$ has evolved from the null network. Then, if  $\mathfrak{g}$ is bilaterally stable, $\mathfrak{g}$ can never contain a star network as component.
\end{clm}

\begin{proof}

\textcolor{mycolor}{If $\hat{\eta}=1$, the result follows from Claim \ref{claim2}}.

\textcolor{mycolor}{Suppose $\hat{\eta}>1$. If possible, let $\mathfrak{g}$ be a star network. It is easy to see that all pendant agents have incentives to add (at least) one more link implying that $\mathfrak{g}$ is not bilaterally stable, a contradiction.}
\qed
\end{proof}

\subsubsection{Unique Stable Networks}\label{subsubsec:unique}

In the previous subsections, we have seen results on the existence of a bilaterally stable social storage network. In this subsection, we look at conditions under which a unique bilaterally stable social storage network exists. Whenever a unique bilaterally stable network exists, the agents themselves endogenously form this network. Any independent observer or regulator knows precisely which network would form (or has formed).

\begin{clm}
If $N=\hat{\eta}+1$ or $\hat{\eta} \geq N$, then there exists a unique symmetric social storage network $\mathfrak{g}$ that is bilaterally stable, namely the complete network on $N$ agents.
\end{clm}
\begin{proof}
\textcolor{mycolor}{In both cases (that is, $N=\hat{\eta}+1$ or $\hat{\eta} \geq N$), the complete network is the one which maximises the utility of each agent. That is, no agent has an incentive to delete any existing link and, clearly, no agent can add any more links.} \qed
\end{proof}

\begin{clm}
If $N>\hat{\eta}+1$, then there are always two or more different (with respect to degree sequence\footnote{Two networks are {\em{different with respect to degree sequence}} if the sorted sequence of degrees (neighborhood sizes) in one is different from that of the other. Note that, both sequences are sorted in the ascending order (or both in the descending order).}) bilaterally stable networks.
\end{clm}
\begin{proof}
\textcolor{mycolor}{If $N=\hat{\eta}+2$, the following  stable networks are possible, which are different with respect to degree sequence.}
\textcolor{mycolor}{The first, where $N-1$ agents form a clique and the other agent is isolated. The second network is as follows. If $\hat{\eta}$ and, hence, $N$ are even, the connected regular network with $N$ agents, where each agent has a neighborhood size of $\hat{\eta}$ is bilaterally stable. If $\hat{\eta}$ and, hence, $N$ are odd, the connected network with $N$ agents, where $N-1$ agents have a neighborhood size of $\hat{\eta}$ and the other agent has a neighborhood size of $\hat{\eta} - 1$, is bilaterally stable. (If $\hat{\eta}$ and $N$ are odd, the connected network where $N-1$ agents have a neighborhood size of $\hat{\eta}$ and the other agent has a neighborhood size of $\hat{\eta} + 1$, is a third bilaterally stable network).} \qed
\end{proof}
\begin{example}
\begin{figure}[!h]
\centering
\subfigure[$\mathfrak{n}_{1}$ ]
{
\includegraphics[scale=0.15,width=2cm,height=1.8cm]{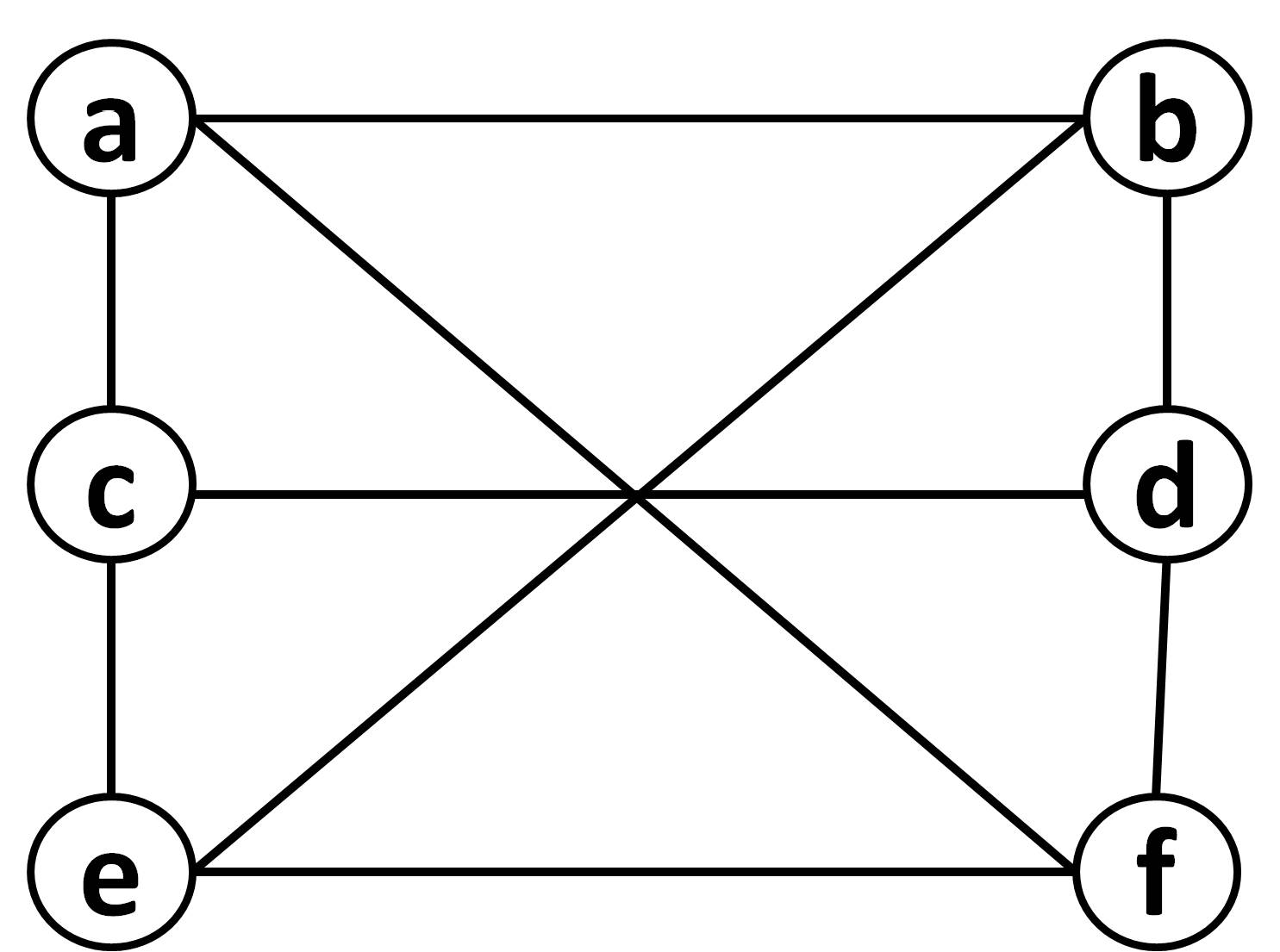} 
\label{fig:n1}
}
\quad
\subfigure[ $\mathfrak{n}_{2}$]
{
\includegraphics[scale=0.15,width=2cm,height=1.8cm]{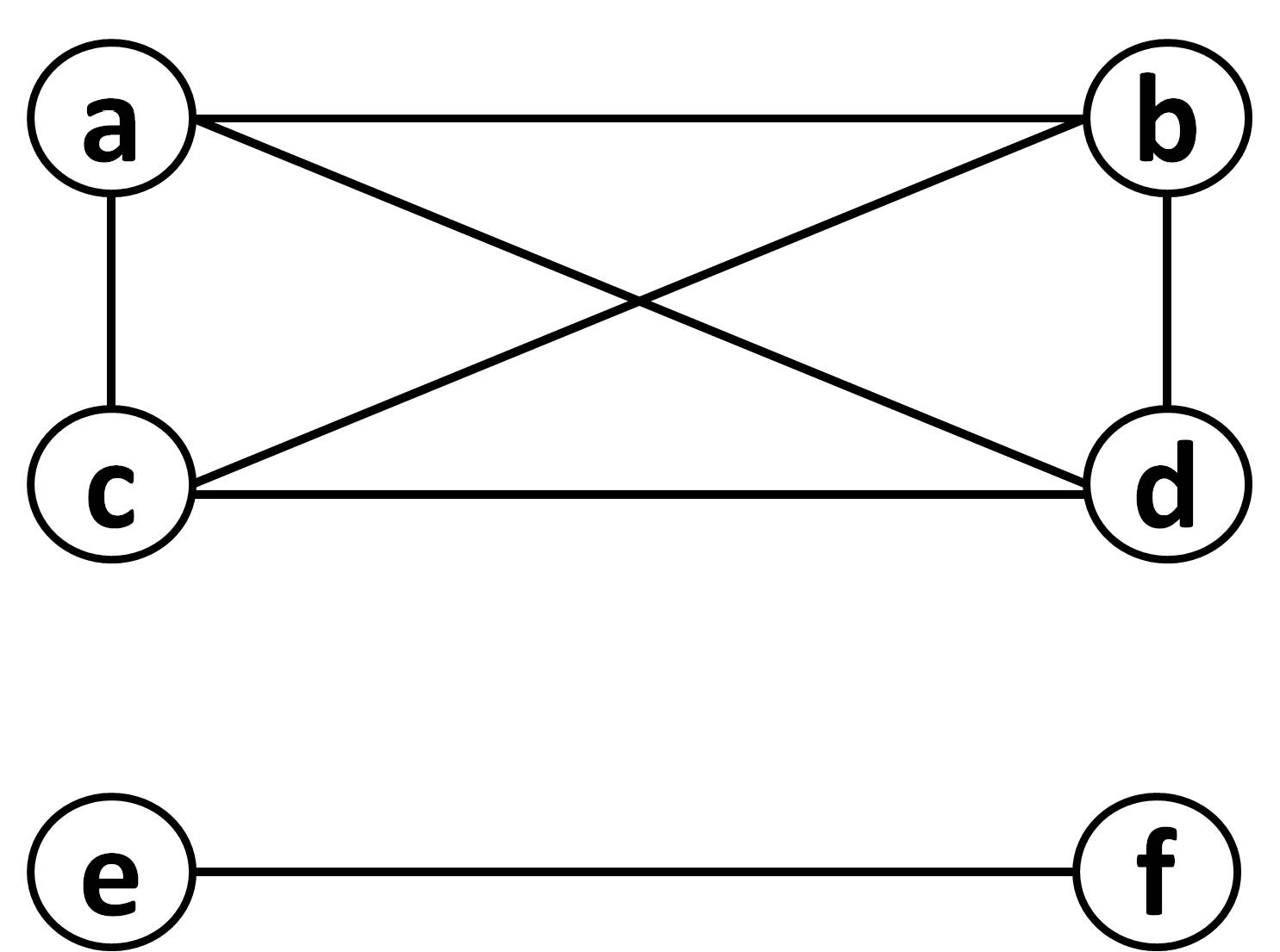} 
\label{fig:n2}
}
\quad
\subfigure[ $\mathfrak{n}_{3}$]
{
\includegraphics[scale=0.15,width=2cm,height=1.8cm]{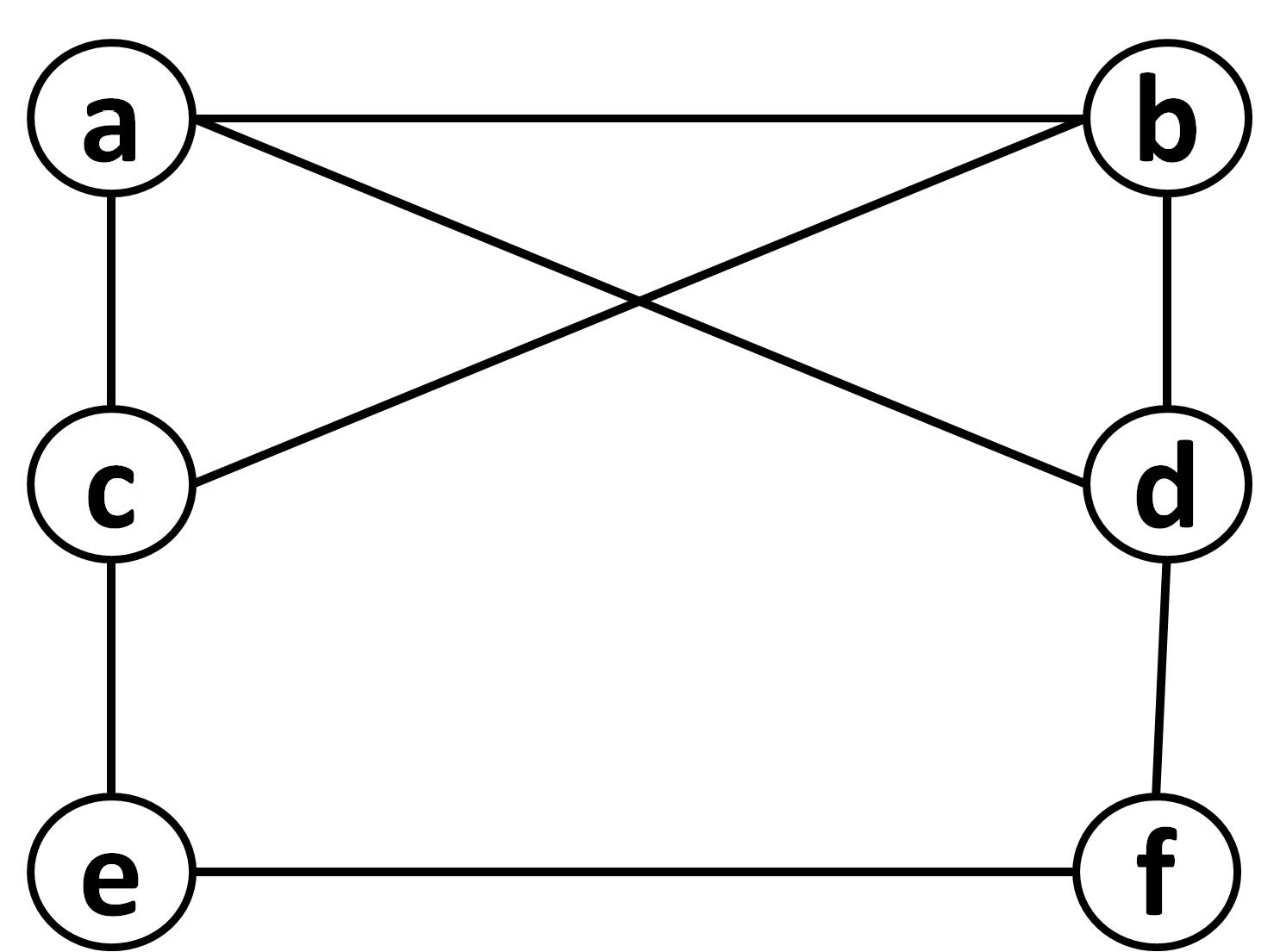} 
\label{fig:n3}
}
\quad
\subfigure[ $\mathfrak{n}_{4}$]
{
\includegraphics[scale=0.15,width=2cm,height=1.8cm]{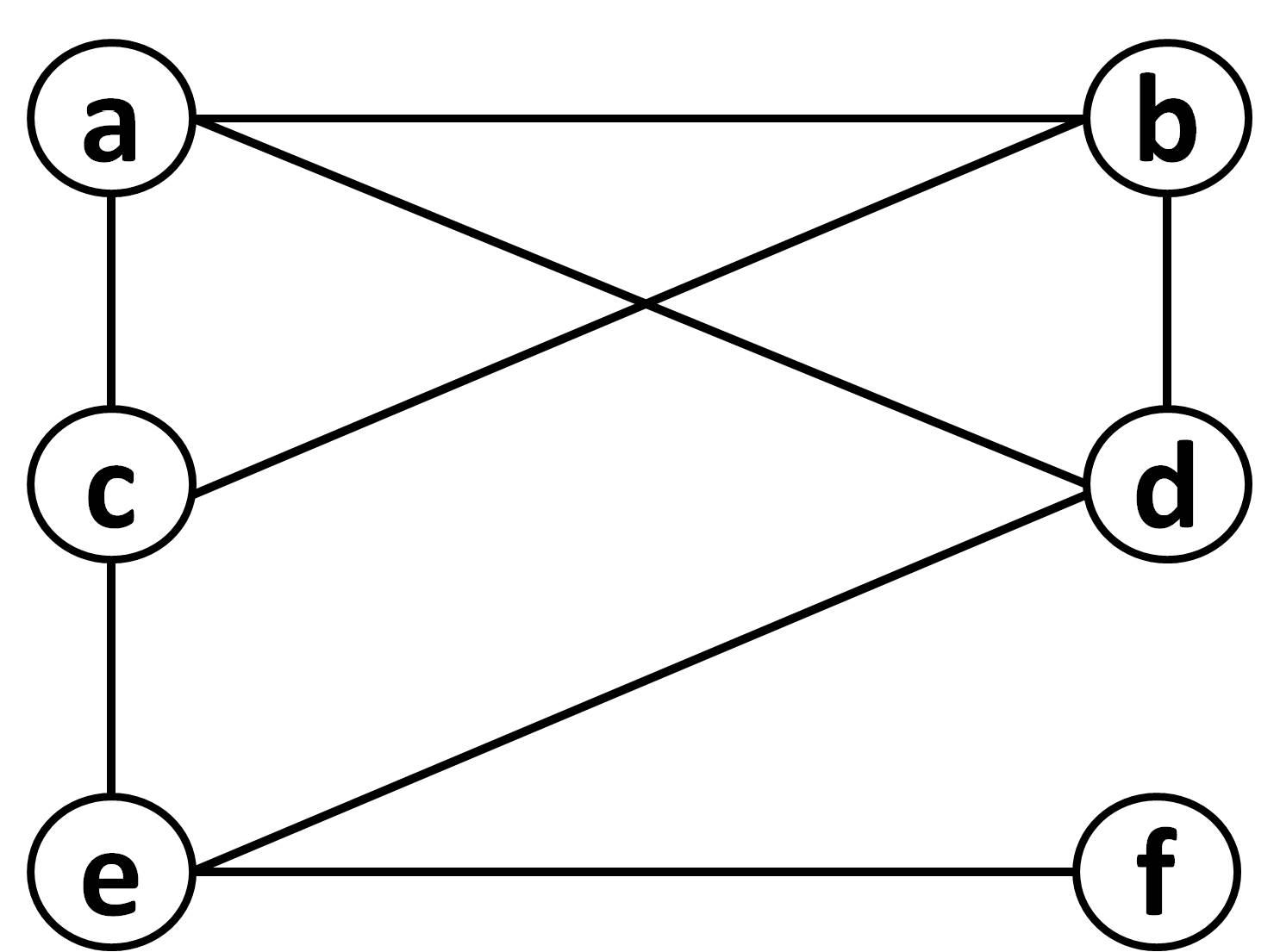} 
\label{fig:n4}
}
\caption{Bilaterally Stable Networks with $N=6$ agents, $\hat{\eta}=3$.}
\label{fig:network-s}
\end{figure}
Let $\hat{\eta}=3$ and $N=6$. Then, there are four networks $\mathfrak{n_{1}}$ (see Fig. \ref{fig:n1}), $\mathfrak{n_{2}}$ (see Fig. \ref{fig:n2}), $\mathfrak{n_{3}}$ (see Fig. \ref{fig:n3}) and $\mathfrak{n_{4}}$ (see Fig. \ref{fig:n4}) which are bilaterally stable. \\
\end{example}

If we look at specific protocols of network formation, then we get further uniqueness results. For example, in Claim \ref{claim2}, starting from the null network (or any network where no agent has more than $1$ neighbor), the resulting bilaterally stable network is unique up to isomorphism.

\subsection{Efficient and Contented Social Storage Networks}\label{subsec:welfare}

In this subsection, we look at efficient social storage networks and contented social storage networks. As discussed earlier, an observer who observes or monitors or regulates the network may externally perturb the system so as to reach an efficient or a contented network. 


We have seen in Section \ref{sec:stability-point} that there exists a unique stability point, $\hat{\eta}$, (for each network type, under the given framework) such that, no agent gains by adding more neighbors than $\hat{\eta}$, and severing existing relationships resulting in a neighborhood size of less than $\hat{\eta}$. An efficient social storage network is, hence, one in which maximum possible number of agents have $\hat{\eta}$ neighbors. 

\begin{remark}
An efficient social storage network is bilaterally stable. 
\end{remark} 

We, now, discuss an example to highlight the fact that not all stable networks are efficient. 

\begin{figure}[!h]
\centering
\subfigure[$\mathfrak{g}_{1}$]
{
\includegraphics[scale=0.14]{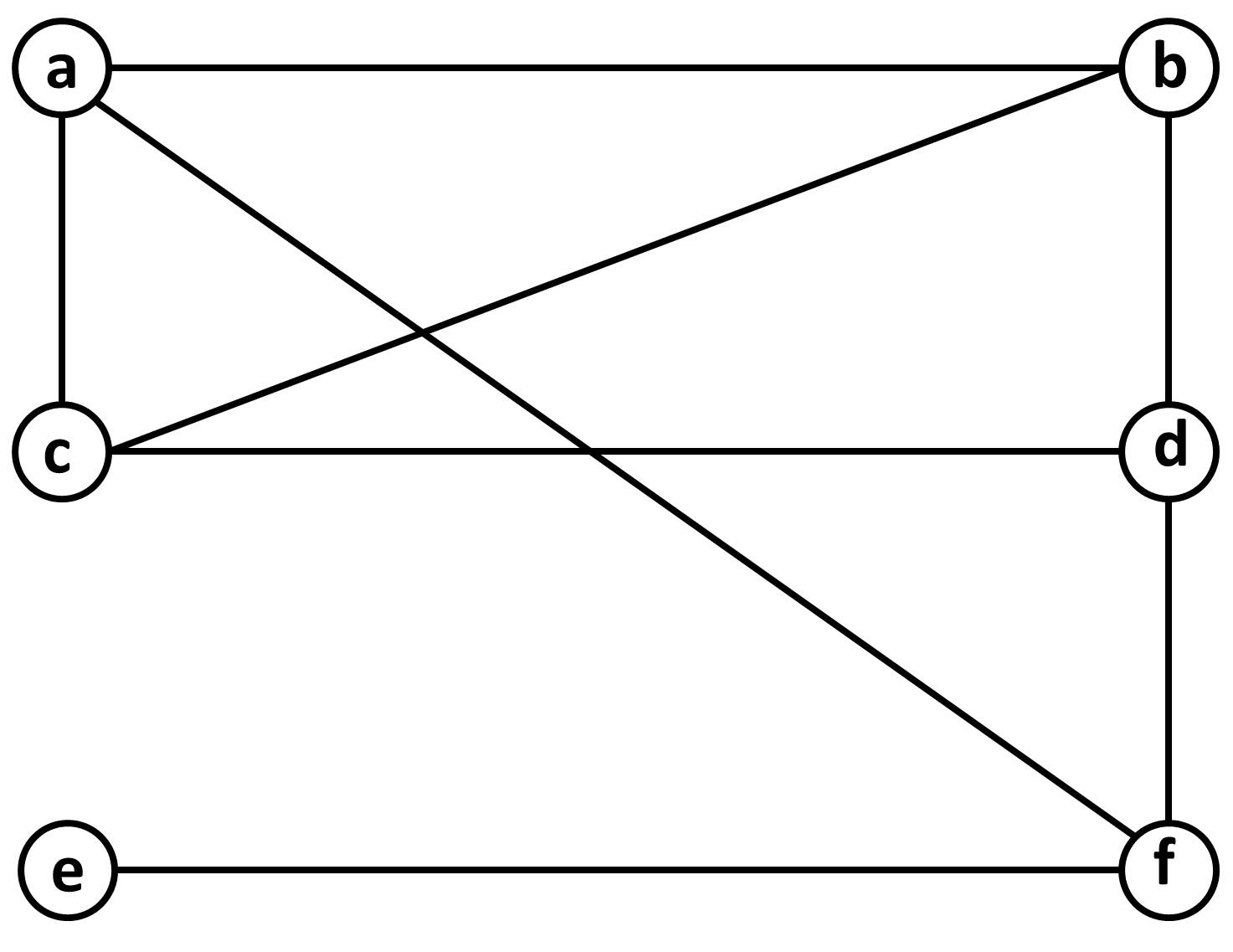}
\label{fig:struct-welfare-g1}
}
\quad
\subfigure[$\mathfrak{g}_{2}$]
{
\includegraphics[scale=0.14]{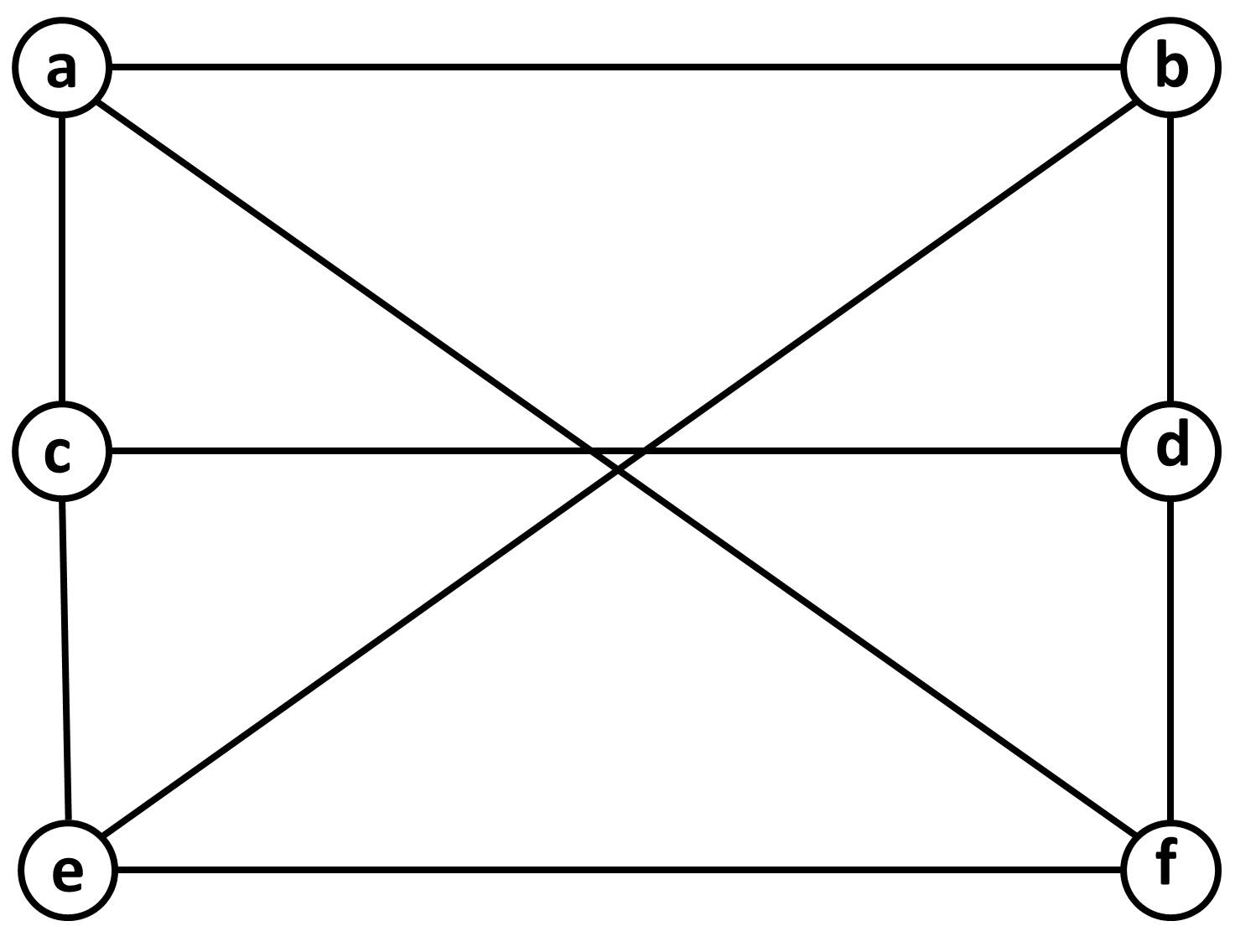}
\label{fig:struct-welfare-g2}
}
\quad
\subfigure[$\mathfrak{g}_{3}$]
{
\includegraphics[scale=0.14]{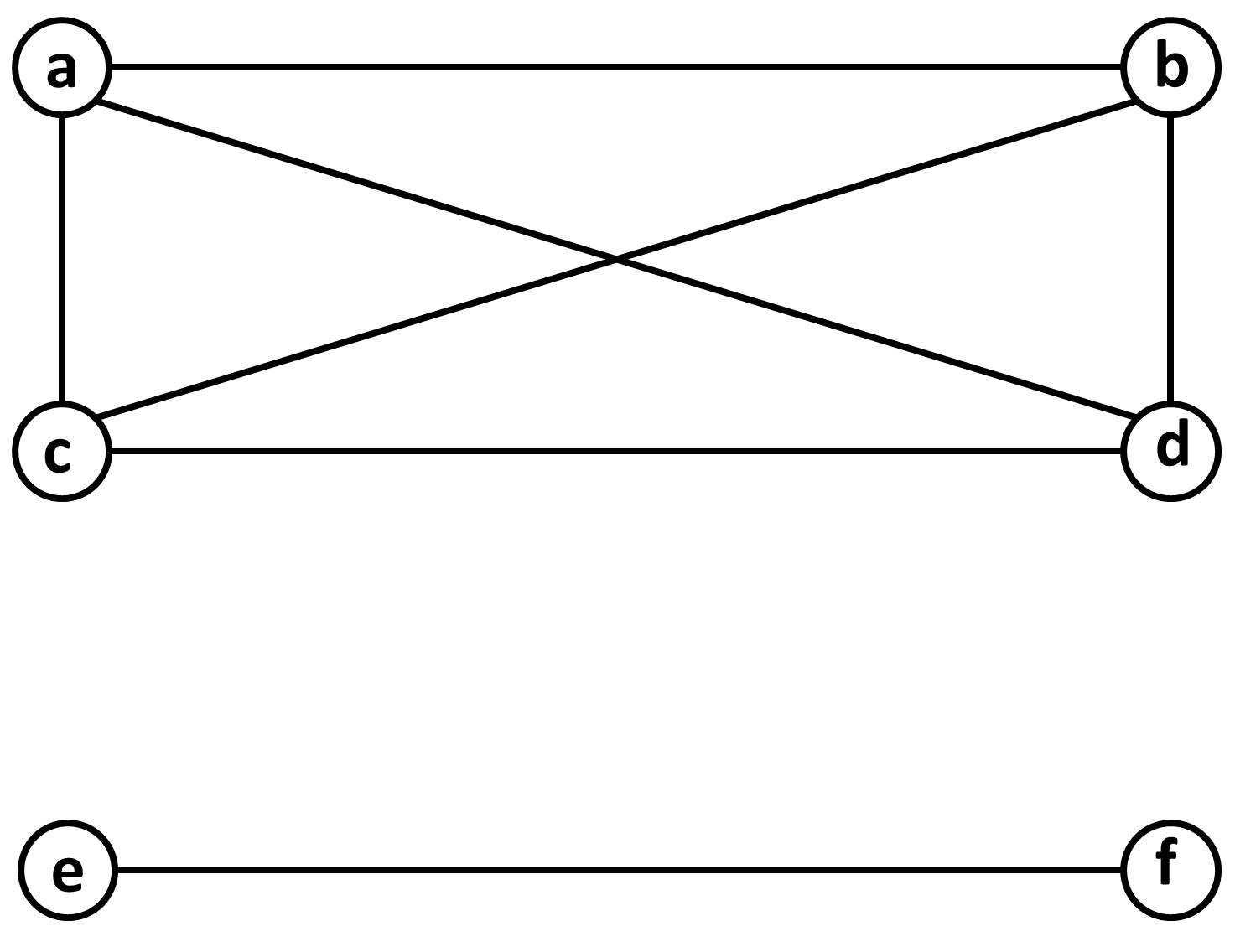}
\label{fig:struct-welfare-g3}
}
\caption{Network Structure and Social Welfare}
\label{fig:network-structure-welfare}
\end{figure}
\begin{example}\label{exmp:social-welfare}
Suppose there are six agents, $a, b, c, d, e$ and $f$, in a social storage network with stability point $\hat{\eta}=3$. Assume that, starting from the null network, these agents add links (that is, build mutual data backup partnerships). Different network structures  may emerge, for example Fig. \ref{fig:struct-welfare-g1}, Fig. \ref{fig:struct-welfare-g2}, and Fig. \ref{fig:struct-welfare-g3}).
\end{example}
In network $\mathfrak{g}_{1}$ (see Fig. \ref{fig:struct-welfare-g1}), agent $e’s$ expected value of data backup is less than that of the rest of the agents. In $\mathfrak{g}_{2}$ (see Fig. \ref{fig:struct-welfare-g2}), all agents achieve the same (and maximum) expected value of data backup, and in $\mathfrak{g}_{3}$ (see Fig. \ref{fig:struct-welfare-g3}), agents $a, b, c$, and $d$ achieve higher 
expected value of data backup than agents $e$ and $f$. $\mathfrak{g}_{2}$ is efficient, whereas $\mathfrak{g}_{1}$ and $\mathfrak{g}_{3}$ are not (though they are bilaterally stable). 

\vskip 1em

We, now, discuss contented networks. 
%

\begin{remark}
A contented social storage network is bilaterally stable. 
\end{remark}

It is easy to see that not all stable networks are contented. In Example \ref{exmp:achieve-stability}, though both $\mathfrak{g}$ and $\mathfrak{s}$ are stable, neither of these networks are contented. 
Consider $\mathfrak{g}$. An independent observer could just add a storage device, $p$, to the network, which leads to a contented network as explained below. 
This storage device acts as a dummy agent, not trying to maximise its utility, and always agreeing to add or delete any link with any agent. 
Hence, for contented networks, we do not consider any dummy agent as a part of the network. 
In $\mathfrak{g}$, agent $e$ has not achieved the maximum possible utility (as $\hat\eta = 3$, but $e$ has 2 neighbors) while all other agents have. By allowing $e$ to store a copy of its data on storage device $p$, $e$ also obtains the maximum possible utility. This network is, now, a contented network (where the utility of the dummy agent $p$ is not considered). This is, in fact, a hybrid model --- hybrid between a centralised storage system and a decentralised one.

Next, we relate contented networks and efficient networks. 

\begin{proposition}\label{prop:content-efficient}
Let $\mathfrak{g}$ be a symmetric social storage network with $N$ agents and stability point $\hat\eta$. Then: 
\begin{enumerate}
\item If at least one of $N$ and $\hat\eta$ is/ are even, then, $\mathfrak{g}$ is efficient if and only if $\mathfrak{g}$ is contented. 
\item Suppose $N$ and $\hat\eta$ are odd. Then, an efficient network does exist but there does not exist any contented network. 
\end{enumerate}
\end{proposition}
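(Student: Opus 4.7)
The plan rests on a single observation that the preceding sections have essentially set up: by Lemmas~\ref{lemma:no-incentive-neighbor-increase}, \ref{lemma:no-incentive-neighbor-decrease} and Theorem~\ref{prop:equilibrium-point}, each agent's individual utility $u_i$ is uniquely maximized when $\eta_i(\mathfrak{g}) = \hat\eta$. Writing $u^\star$ for this maximum individual utility, we immediately get the upper bound
\[
\sum_{i\in \mathbf{A}} u_i(\mathfrak{g}) \;\leq\; N\,u^\star \qquad \text{for every } \mathfrak{g} \in \mathcal{G}(N),
\]
with equality iff $\eta_i(\mathfrak{g}) = \hat\eta$ for every $i$, i.e.\ iff $\mathfrak{g}$ is contented. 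Thus contentedness and attainment of the bound $N u^\star$ are equivalent, and everything reduces to the question of whether this bound is achievable as a degree sequence realizable by a simple graph on $N$ vertices.

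For Part (1), the plan is to apply Proposition~\ref{lem:g-is-even-stable}: when at least one of $N, \hat\eta$ is even, $N\hat\eta/2$ is an integer, and an $\hat\eta$-regular graph on $N$ vertices exists, which is both contented and achieves the bound $Nu^\star$. Hence efficient networks exist and in fact coincide with contented ones: contented $\Rightarrow$ efficient is immediate; conversely, if $\mathfrak{g}$ is efficient then $\sum u_i(\mathfrak{g}) = N u^\star$ (since some contented network attains this value), and equality in the term-by-term bound forces every $u_i$ to equal $u^\star$, which by uniqueness of the maximizer forces $\eta_i(\mathfrak{g}) = \hat\eta$ for all $i$.

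For Part (2), when $N$ and $\hat\eta$ are both odd, the handshake lemma gives $\sum_i \eta_i(\mathfrak{g}) = 2|\mathbf{L}|$, which is even, whereas a contented network would require $\sum_i \eta_i(\mathfrak{g}) = N\hat\eta$, which is odd -- a contradiction. So no contented network exists. Existence of an efficient network follows from finiteness of $\mathcal{G}(N)$: the sum $\sum_i u_i$ attains its supremum on this finite set. For a concrete witness, Proposition~\ref{lem:n-1-agent}(2) produces a bilaterally stable network in which $N-1$ agents have degree $\hat\eta$ and one agent has degree $\hat\eta - 1$ or $\hat\eta + 1$; picking the latter choice to maximize that lone agent's utility gives a candidate efficient network.

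The main obstacle is closing the argument that this candidate is actually efficient, since one must rule out better sums arising from degree sequences in which two or more agents deviate from $\hat\eta$. The idea is the following counting argument: any network $\mathfrak{g}$ has $\sum_i u_i(\mathfrak{g}) = Nu^\star - \sum_i [u^\star - u(\eta_i(\mathfrak{g}))]$, and each bracketed penalty is nonnegative and strictly positive whenever $\eta_i(\mathfrak{g}) \neq \hat\eta$. The parity constraint forces at least one deviation. Since a single deviation of magnitude $1$ incurs only the penalty $u^\star - \max(u(\hat\eta-1), u(\hat\eta+1))$, while any alternative feasible degree sequence either has $\geq 2$ deviating agents (strictly larger total penalty) or a single agent deviating by $\geq 2$ (again strictly larger penalty, by strict unimodality of $u(\cdot)$ about $\hat\eta$), the candidate above attains the minimum total penalty and hence is efficient.
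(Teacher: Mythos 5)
Your proof is correct and follows essentially the same route as the paper's: Part (1) rests on the existence of an $\hat\eta$-regular network when at least one of $N,\hat\eta$ is even (Proposition \ref{lem:g-is-even-stable}), and Part (2) on the parity obstruction underlying Proposition \ref{lem:n-1-agent} --- the paper's own proof consists of exactly these two citations. Your explicit decomposition $\sum_i u_i(\mathfrak{g}) = Nu^\star - \sum_i\bigl(u^\star - u(\eta_i(\mathfrak{g}))\bigr)$ together with the penalty-minimization step actually supplies detail the paper leaves implicit (deferring it to Proposition \ref{mo-iff-efficient}); the only imprecision is the phrase ``picking the latter choice,'' since which of $\hat\eta-1$ or $\hat\eta+1$ is better depends on the parameters (and $\hat\eta+1$ is infeasible for SRN under the SO-Framework), but your use of $\max\bigl(u(\hat\eta-1),u(\hat\eta+1)\bigr)$ over feasible degrees in the penalty term already handles this correctly.
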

\begin{proof}
{\it 1} follows from Proposition \ref{lem:g-is-even-stable}, since, if at least one of $N$ and $\hat\eta$ is/ are even, then, $\mathfrak{g}$ is efficient if and only if $\mathfrak{g}$ is $\hat\eta$-regular. \\

{\it 2} follows from Proposition \ref{lem:n-1-agent}.  \qed
\end{proof}

\vskip 1em
\begin{remark}
Not all efficient networks are contented. 
\end{remark}

In Example \ref{exmp:achieve-stability}, neither $\mathfrak{g}$ nor $\mathfrak{s}$ are contented. However, (at least) one of them is efficient. The following Propositions help identify which of them is/ are efficient, under the MO- as well as SO-Frameworks, for SVN, SRN and SV-SRN networks, as the case may be (Refer Table \ref{table:Different-Networks}). Note that any stable network in which maximum possible number of agents have $\hat{\eta}$ neighbors is not necessarily efficient, as per Definitions \ref{def:welfare}, \ref{def:welfare-storage-constraint}, and \ref{def:welfare-storage-budget-constraints}. 

\begin{proposition}\label{mo-iff-efficient}
Let $\mathfrak{g}$ be an SVN or SV-SRN under the MO-Framework, with $N$ agents and stability point $\hat\eta$. Suppose both $N$ and $\hat\eta$ are odd. 
Then $\mathfrak{g}$ is efficient if and only if $\mathfrak{g}$ has $N-1$ agents with neighborhood size $\hat\eta$ and one of the following holds: 
{
\begin{enumerate}
\item $c<\frac{\beta \lambda^{^{\hat\eta}}}{2}(\frac{1}{\lambda}-\lambda)$ and $\mathfrak{g}$ has one agent with neighborhood size $\hat\eta + 1$.
\item $c>\frac{\beta \lambda^{^{\hat\eta}}}{2}(\frac{1}{\lambda}-\lambda)$ and $\mathfrak{g}$ has one agent with neighborhood size $\hat\eta - 1$.
\item $c=\frac{\beta \lambda^{^{\hat\eta}}}{2}(\frac{1}{\lambda}-\lambda)$ and $\mathfrak{g}$ has one agent with neighborhood size either $\hat\eta + 1$ or $\hat\eta - 1$.
\end{enumerate}
}
\end{proposition}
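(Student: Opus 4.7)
The plan is to recast efficiency as a discrete optimization over realizable degree sequences, exploit the (discrete) strict concavity of the single-agent utility $f(k) := \beta(1-\lambda^{k}) - ck$ around its unique integer maximizer $\hat\eta$, and then compare the two possible minimal one-agent perturbations. Concretely, I let $\delta_i := f(\hat\eta) - f(\eta_i(\mathfrak{g})) \geq 0$ denote agent $i$'s loss relative to the ideal neighborhood size $\hat\eta$, so that efficiency is equivalent to minimizing $\sum_i \delta_i$ over all graphic degree sequences on $N$ vertices. By the handshake lemma, $\sum_i \eta_i(\mathfrak{g})$ is even, whereas $N\hat\eta$ is odd because both $N$ and $\hat\eta$ are odd. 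Hence $\sum_i (\eta_i(\mathfrak{g}) - \hat\eta)$ must be odd; in particular it cannot be zero, so the all-$\hat\eta$ profile is infeasible.

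Next I use Lemmas \ref{lemma:no-incentive-neighbor-increase} and \ref{lemma:no-incentive-neighbor-decrease} (and the fact that the forward difference $f(k+1)-f(k) = \beta\lambda^{k}(1-\lambda) - c$ is strictly decreasing in $k$) to show that $f$ is strictly concave on the integers near $\hat\eta$, with a unique integer maximum at $\hat\eta$. Thus $\delta$ is a strictly convex function of the integer deviation $\eta_i(\mathfrak{g}) - \hat\eta$, vanishing only at $0$, and the per-step losses
\[
|\Delta^{+}| := f(\hat\eta) - f(\hat\eta+1) = c - \beta\lambda^{\hat\eta}(1-\lambda), \qquad |\Delta^{-}| := f(\hat\eta) - f(\hat\eta-1) = \beta\lambda^{\hat\eta-1}(1-\lambda) - c
\]
are both strictly positive.

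Using strict convexity, any odd total deviation achieves the minimum sum of losses by having exactly one agent deviate by $\pm 1$ and all others stay at $\hat\eta$. Indeed, any alternative either spreads the odd total across three or more deviating agents (each contributing a strictly positive $\delta_i \geq \min\{|\Delta^{+}|,|\Delta^{-}|\}$, so the total strictly exceeds that minimum) or concentrates a deviation of magnitude $\geq 3$ in one agent (whose loss is strictly larger than $2\min\{|\Delta^{+}|,|\Delta^{-}|\}$ by strict convexity). Comparing the two candidate one-agent perturbations,
\[
|\Delta^{+}| \lessgtr |\Delta^{-}| \iff 2c \lessgtr \beta(1-\lambda)\bigl(\lambda^{\hat\eta-1} + \lambda^{\hat\eta}\bigr) \iff c \lessgtr \tfrac{\beta\lambda^{\hat\eta}}{2}\bigl(\tfrac{1}{\lambda} - \lambda\bigr),
\]
which yields exactly the three cases listed. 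Finally, feasibility of the prescribed degree sequence follows from $\hat\eta \leq N-2$ (both odd, $\hat\eta < N$), so $\hat\eta + 1 \leq N-1$, and the explicit construction used in the proof of Proposition \ref{lem:n-1-agent} produces a connected bilaterally stable graph with the required degrees; conversely, any efficient graph must attain the above minimum and therefore have the claimed degree sequence, giving the ``only if'' direction.

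The main obstacle is the combinatorial step of ruling out \emph{every} other parity-odd degree sequence; the cleanest way to handle this is the strict-convexity argument above, which packages the case analysis (more deviating agents vs.\ larger single jumps) into a single monotonicity observation and avoids an unwieldy enumeration. A secondary subtlety is ensuring graphicality on the boundary $\hat\eta = N-2$, which is why the parity of $N$ and $\hat\eta$ is invoked to push $\hat\eta + 1$ safely below $N$.
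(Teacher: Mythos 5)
Your proof is correct and reaches the same threshold comparison the paper uses, but it is genuinely more complete in one respect. The paper's proof simply asserts, citing Proposition \ref{lem:n-1-agent} and the definition of efficiency, that one of the two candidate networks $\mathfrak{g_1}$ (one agent at $\hat\eta-1$) or $\mathfrak{g_2}$ (one agent at $\hat\eta+1$) must be efficient, and then compares $u_i(\mathfrak{g_1})$ with $u_i(\mathfrak{g_2})$ to obtain the cutoff $c \lessgtr \frac{\beta\lambda^{\hat\eta}}{2}(\frac{1}{\lambda}-\lambda)$ --- exactly the computation you perform via $|\Delta^{+}|$ versus $|\Delta^{-}|$. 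What you add is the missing combinatorial justification for why these two degree sequences are the only candidates: the handshake-lemma parity argument showing the all-$\hat\eta$ profile is not graphic when $N\hat\eta$ is odd, and the strict discrete concavity of $f(k)=\beta(1-\lambda^k)-ck$ (decreasing forward differences) showing that any other odd-total-deviation degree sequence incurs a strictly larger aggregate loss. This is the step the paper compresses into ``it is easy to see,'' so your write-up is closer to a full proof of the ``only if'' direction than the published one.

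Two small points. First, your dichotomy ``three or more deviators, or one deviator of magnitude $\geq 3$'' omits the case of exactly two deviators; it is harmless because your own bound $\delta_i \geq \min\{|\Delta^{+}|,|\Delta^{-}|\} > 0$ for every deviator already gives a total loss of at least $2\min\{|\Delta^{+}|,|\Delta^{-}|\}$ there, but you should state the case explicitly (any configuration with two or more deviators, or with one deviator of magnitude $\geq 2$, is strictly worse). Second, both you and the paper implicitly assume $\hat\eta < N$ and, for the SV-SRN case, that an agent \emph{can} physically hold $\hat\eta+1$ neighbors (i.e., that the storage bound $s/d$ is not the binding constraint defining $\hat\eta$); when $\hat\eta = s/d$ case~1 of the statement is vacuous. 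This caveat is inherited from the proposition as stated rather than a defect of your argument.
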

\begin{proof}

For each $i \in \mathbf{A}$, its utility is   
$u_{i}(\mathfrak{g})=\beta_{i}(1-\lambda^{\eta_{i}(\mathfrak{g})})-c \eta_{i}(\mathfrak{g})$. (Refer Equation \ref{eq:utility-mo-frame}). 
As the network is SVN or SV-SRN, $\beta_{i} = \beta$, for all $i$. 
\begin{equation*}\label{efficient-mo}
\underset{\eta_{i}(\mathfrak{g})} {max\ } u_{i}(\mathfrak{g}) = \underset{\eta_{i}(\mathfrak{g})} {max\ }(\beta(1-\lambda^{\eta_{i}(\mathfrak{g})})-c \eta_{i}(\mathfrak{g})) 
= \beta(1-\lambda^{\hat\eta})-c \hat\eta
\end{equation*}

Let $\mathfrak{g_1}$ be the network where 
$N-1$ agents have $\hat\eta$ neighbors and the other agent has $\hat\eta - 1$ neighbors. 
Let $\mathfrak{g_2}$ be  the network where  
$N-1$ agents have $\hat\eta$ neighbors and the other agent has $\hat\eta + 1$ neighbors. 

\vskip 1em

$u_i(\mathfrak{g_1}) = \beta(1-\lambda^{(\hat\eta - 1}))-c (\hat\eta - 1)$ and $u_i(\mathfrak{g_2}) = \beta(1-\lambda^{(\hat\eta + 1}))-c (\hat\eta + 1)$. 

\vskip 1em

From Proposition \ref{lem:n-1-agent} and Definition \ref{def:welfare}, it is easy to see that either $\mathfrak{g_1}$ or $\mathfrak{g_2}$ (or both) is (are) efficient. 
That is,  
$max (\sum\limits_{i}u_i(\mathfrak{g}))$ is either $u_i(\mathfrak{g_1})$  or $u_i(\mathfrak{g_2})$. If $u_i(\mathfrak{g_1}) < u_i(\mathfrak{g_2})$, we get result $1$,  
if $u_i(\mathfrak{g_1}) > u_i(\mathfrak{g_2})$, we get result $2$ and if $u_i(\mathfrak{g_1}) = u_i(\mathfrak{g_2})$, both $\mathfrak{g_1}$ and $\mathfrak{g_2}$ are efficient, leading to result $3$. \qed


\end{proof}

\begin{proposition}
An SVN under the SO-Framework is efficient if and only if it is a complete network. 
\end{proposition}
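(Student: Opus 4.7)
The plan is to exploit two observations: first, under the SO-Framework, the utility $u_i(\mathfrak{g}) = \beta(1-\lambda^{\eta_i(\mathfrak{g})})$ of each agent is strictly increasing in its neighborhood size $\eta_i(\mathfrak{g})$ (since $\lambda \in (0,1)$), and second, in an SVN under the SO-Framework we have assumed sufficient storage and sufficient budget, so no feasibility constraint rules out the complete network. This turns the social optimization into a separable problem in which each agent's contribution to $\sum_i u_i(\mathfrak{g})$ is maximized exactly when its degree is maximized.

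For the ``if'' direction, I would observe that in any network $\mathfrak{g}$ on $N$ agents we have $\eta_i(\mathfrak{g}) \leq N-1$ for every $i \in \mathbf{A}$, and hence, by strict monotonicity of $\eta \mapsto \beta(1-\lambda^{\eta})$, $u_i(\mathfrak{g}) \leq \beta(1-\lambda^{N-1})$ with equality if and only if $\eta_i(\mathfrak{g}) = N-1$. Summing over all agents yields $\sum_i u_i(\mathfrak{g}) \leq N\beta(1-\lambda^{N-1})$. For the complete network $K_N$, every agent indeed has degree $N-1$, so this upper bound is attained. Hence $K_N$ is efficient.

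For the ``only if'' direction, suppose $\mathfrak{g}$ is efficient but not complete. Then at least one pair $\langle ij\rangle \notin \mathfrak{g}$, so at least one agent $k$ satisfies $\eta_k(\mathfrak{g}) < N-1$; strict monotonicity gives $u_k(\mathfrak{g}) < \beta(1-\lambda^{N-1})$, while every other agent $l$ satisfies $u_l(\mathfrak{g}) \leq \beta(1-\lambda^{N-1})$. Summing, $\sum_i u_i(\mathfrak{g}) < N\beta(1-\lambda^{N-1}) = \sum_i u_i(K_N)$, contradicting the efficiency of $\mathfrak{g}$. Hence $\mathfrak{g} = K_N$. Alternatively one may appeal directly to Corollary \ref{coro:SVN-SO-Framework-Stability} combined with Theorem \ref{lemma:stability-point-susb-network}, since the stability point $\hat\eta = N-1$ is simultaneously the maximizer of each individual $u_i$, and efficiency is a weaker requirement than simultaneous individual maximization in this additively separable setting.

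No step is expected to be a real obstacle. The proof is essentially a one-line monotonicity argument; the only subtlety worth emphasizing is that the result hinges on the sufficient-storage and sufficient-budget assumptions baked into the SVN under the SO-Framework (without them, degree $N-1$ may be infeasible, and the characterization would instead involve $\min\{s/d, b/c\}$ as in Theorem \ref{lemma:stability-point-sb-network}).
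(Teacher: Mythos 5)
Your proof is correct and follows essentially the same route as the paper: the paper's proof is simply ``Follows from Theorem \ref{lemma:stability-point-susb-network}'', whose underlying content is exactly the strict monotonicity of $u_i$ in $\eta_i(\mathfrak{g})$ (from Lemma \ref{lemma:susb-always-add}) together with the sufficiency assumptions that make degree $N-1$ feasible. You have merely written out in full the summation and equality-case argument that the paper leaves implicit.
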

\begin{proof}
Follows from Theorem \ref{lemma:stability-point-susb-network}. \qed
\end{proof}

\begin{proposition}\label{prop:SRN-SO-Contentedness-Framework}
For an SRN under the SO-Framework, $s/d$ and $b/c$ act as constraints for the maximum number of links possible. 

Let $\mathfrak{g}$ be an SRN under the SO-Framework, with $N$ agents and stability point $\hat\eta$. Then: 
\begin{enumerate}
\item If at least one of $N$ and $\hat\eta$ is/ are even, then $\mathfrak{g}$ is efficient if and only if $\mathfrak{g}$ is $\hat\eta$-regular. 
\item If both $N$ and $\hat\eta$ are odd, then $\mathfrak{g}$ is efficient if and only if $\mathfrak{g}$ has
$N-1$ agents with $\hat\eta$ neighbors and the other agent with $\hat\eta - 1$ neighbors. 
\end{enumerate}
\end{proposition}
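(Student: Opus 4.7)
The plan is to exploit two structural facts that hold throughout the SO-Framework. First, the utility $u_i(\mathfrak{g})=\beta_i(1-\lambda^{\eta_i(\mathfrak{g})})$ is strictly increasing in $\eta_i(\mathfrak{g})$, since $\lambda\in(0,1)$. Second, by Theorem \ref{lemma:stability-point-sb-network}, the storage and budget constraints $s\geq d\,\eta_i(\mathfrak{g})$ and $b\geq c\,\eta_i(\mathfrak{g})$ cap every agent's neighborhood size at $\hat\eta=\min\{s/d,b/c\}$. Together these observations reduce the maximization of $\sum_i u_i(\mathfrak{g})$ to driving as many of the $\eta_i(\mathfrak{g})$ as possible to $\hat\eta$, subject to the handshake identity $\sum_i \eta_i(\mathfrak{g})=2L$, which forces the total degree to be even.

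For Case 1, I would argue that when $N\hat\eta$ is even the handshake constraint poses no obstruction. Proposition \ref{lem:g-is-even-stable} delivers a $\hat\eta$-regular network on $N$ agents, and under an SRN every such network is automatically feasible since the degree is bounded by the common value $\hat\eta$. In this network every agent attains the per-agent maximum $\beta_i(1-\lambda^{\hat\eta})$, so $\hat\eta$-regularity implies efficiency. Conversely, if an efficient $\mathfrak{g}$ is not $\hat\eta$-regular, some agent $i$ has $\eta_i(\mathfrak{g})<\hat\eta$; since no agent can ever exceed $\hat\eta$, replacing $\mathfrak{g}$ by the $\hat\eta$-regular network would strictly increase $\sum_i u_i$, contradicting efficiency. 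This gives the if-and-only-if.

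For Case 2, the handshake parity becomes binding: with both $N$ and $\hat\eta$ odd, $N\hat\eta$ is odd, and so $\sum_i\eta_i(\mathfrak{g})$ cannot equal $N\hat\eta$. The total degree deficit $\sum_i(\hat\eta-\eta_i(\mathfrak{g}))$ must therefore be odd, and hence at least $1$. A deficit of exactly $1$ is realized by the degree sequence $(\hat\eta,\dots,\hat\eta,\hat\eta-1)$; I would invoke Erd\H{o}s--Gallai, or construct directly by starting from a $\hat\eta$-regular graph on the even set of $N-1$ agents (which exists since $N-1$ is even and $\hat\eta\leq N-2$) and attaching the last agent to $\hat\eta-1$ of them, swapping one edge if needed to avoid multiplicities. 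Any other feasible degree sequence has deficit $\geq 3$, and since each unit reduction in $\eta_i$ below $\hat\eta$ decreases $u_i$ by $\beta_i\lambda^{\eta_i-1}(1-\lambda)>0$, concentrating a single unit of deficit on one agent strictly dominates any configuration that spreads $\geq 3$ units of deficit across the network. This yields both directions of the equivalence.

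The main obstacle is the strict-dominance step at the end of Case 2: one must show that every degree sequence with odd deficit at least $3$ yields strictly smaller $\sum_i u_i$ than the unique (up to labelling) deficit-$1$ sequence. The cleanest route is to note that the marginal loss from each unit deficit is at least $\beta_{\min}\lambda^{\hat\eta-1}(1-\lambda)>0$, and that these losses add across deficient agents; combined with realizability of the deficit-$1$ sequence (the graphicality check above), this closes the argument.
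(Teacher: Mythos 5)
Your argument is correct and follows essentially the same route as the paper's: the key observation in both is that $s/d$ and $b/c$ are hard caps, so (unlike the MO-Framework) no agent can ever exceed $\hat\eta$ neighbours, and efficiency then reduces to realizing the best degree sequence compatible with the handshake parity --- which the paper obtains by citing its earlier existence results on $\hat\eta$-regular and near-regular stable networks, and which you verify explicitly via the deficit-counting bound. The only blemish is in your alternative direct construction for Case 2: attaching the $N$-th agent to $\hat\eta-1$ members of a $\hat\eta$-regular graph on $N-1$ vertices would push those members to degree $\hat\eta+1$, so one must first delete a matching of $(\hat\eta-1)/2$ edges among the chosen members (or simply rely on Erd\H{o}s--Gallai, as you also propose, or on the paper's own earlier construction of a network with $N-1$ agents at $\hat\eta$ and one at $\hat\eta-1$).
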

\begin{proof}
From Theorem \ref{lemma:stability-point-sb-network}, $\hat\eta = min\{\frac{s}{d}, \frac{b}{c}\}$. 

As the budget $b$ and the storage space available $s$ act as constraints, no agent can have more than $\hat\eta$ neighbors. Therefore, part $1$ follows from Proposition \ref{lem:n-1-agent}. 
(We do not have the possibility of one agent having $\hat\eta+1$ neighbors as we had in Proposition \ref{mo-iff-efficient}). 

Part $2$ follows from Proposition \ref{lem:g-is-even-stable}. \qed
\end{proof}

\section{Related Work}\label{sec:related-future}
\myworries{As briefly mentioned in the introduction, our work is most closely related to P2P systems. Especially, our strategic network formation game has some similarities with peer selection for data placement \citep{Rzadca-Game-Free-Riding, Rzadca-Extended, Toka-Partner-Selection} and topology formation \citep{Moscibroda-2006-topology-formation, stefan-network-creation-game} in P2P systems.  To start off, in P2P nomenclature, 
virtual (i.e., logical) topologies (or structures) are built by peers (in-general computers or software modules) on top of physical networks (e.g., Internet).} 


\myworries{\cite{Rzadca-Game-Free-Riding, Rzadca-Extended} have studied data placement in a strategic interaction between peers to maximise data availability. Here, peers are involved in a reciprocal replication contract (a pair of agents replicates each others' data to increase data availability). They show that agents prefer to form contracts with only those who have similar availability. This behaviour of peers makes the system inefficient. 
However, by setting cooperation rules and providing incentives to peers, data availability can be increased along with the increase in the efficiency of the system. We can take inspiration from these ideas of cooperation rules and incentives to design a more practical social storage system.}

\myworries{\cite{Toka-Partner-Selection} studied data placement in a different strategic setting than above. Here, peers selfishly select partners based upon their profiles. The profile of each peer, which includes the online availability, the bandwidth, and global preferences, is considered along with the utility function so that the data storing costs are minimised. 
Authors have shown here that there exists at least one pairwise stable matching and it can be found in polynomial time. In our study, we do not consider agents availability. This is based upon the assumption that the out of band communication \citep{f2fstorage} is possible between them. However, this can be further analysed.}

\myworries{Finally, we look at topology formation in P2P systems. P2P topologies are a mirror image of social connections in our case.
\cite{Moscibroda-2006-topology-formation, stefan-network-creation-game} have proposed a locality game (inspired by the network creation game proposed by \cite{Fabrikant-NCG-2003}) to study the impact of selfish peers on P2P topologies. In this setting, selfish peers select their partners in such way that the stretch (i.e., the look up performance in terms of latencies) could be minimised. Their three main results are as follows: the topologies build by selfish agents are worse compared to the topologies build by agents in collaboration;  the topologies constructed by selfish agents are never stable (i.e., there is always a change in the topology); 
and determining a pure Nash equilibrium is NP-complete here.  This aspect of selfish agents is part of our future work and is discussed in detail in the next section. However, as motivated in the introduction and in the background sections, for us the solution concept of Nash equilibrium is not useful and we use bilateral stability instead.}

\myworries{When looking at P2P systems more closer to our social storage systems, P2P social networking is one such area \citep{Buchegger-P2P-Social-Network-1, Buchegger-P2P-OSN}. Topology formation is one of the concerns here \citep{Buchegger-P2P-OSN}. We believe that our solution concept of bilateral stability has its theoretical consequences in determining which bilaterally stable topology emerges in P2P social networking.}


\section{Conclusion and Future Work}\label{sec:conclusion}
In this paper, we have expanded on two untouched aspects of social storage systems, namely, endogenous network formation and bilateral stability of such networks. We have formalised social storage networks as a network formation game where each agent tries to maximise its utility.
We considered two frameworks for utility of agents in the network. We modified the pairwise stability definition of \cite{jackson} to include mutual consent for link deletion too (as required for social storage networks), and also to include storage and budget constraints. 

After defining bilateral stability as a modification of pairwise stability, we analysed bilateral stability of symmetric social storage networks. Our stability analysis involved restudying conditions of stability under the new definition of pairwise stability (that is, bilateral stability), derivation of a unique stability point (which is a neighborhood size where no agent has any incentive to add or delete a link), and some necessary and sufficient conditions for symmetric social storage networks to be bilaterally stable. We also showed that ideally all agents in a network want to achieve their stability point but a network can be bilaterally stable even when this stability point is not reached for one agent. 

\myworries{Further, we discussed which bilaterally stable networks would evolve. We also discuss why just studying stability is not enough and one has to look at efficiency and contentment of the network. Efficiency is the case when the sum of utilities of all agents is maximised, and contentment is when the individual utility of every agent is maximised. We relate these three properties of the network with one another. We also give conditions on the number of agents and stability point (besides other constraints) to achieve bilaterally stable, efficient, and contented networks.}

\myworries{Next, we discuss some future directions. We first discuss model specific future work and then solution concept specific. We assumed that the cost to maintain a link is shared equally among the agents on either side of the link. Looking at asymmetric cost sharing, for example centrally-sponsored star networks, is one of the direction for future work.} 

\textcolor{mycolor}{For the MO-framework, we use a convex combination of our two objective functions (maximizing data reliability and minimizing the total cost of the link), and this is no longer a case of Multi-Objective (MO) optimization. Since the solution of the convexly combined problem may not always be the solution of the original MO problem, we plan to look at finding a Pareto frontier as part of future work (path followed by most MO algorithms).}

\myworries{In our current work, we have not focused on the heterogeneous behavior of agents in social storage settings. Although incorporating complex and heterogeneous behavior of agents into the model is closer to real world scenarios, this would make it difficult to deal with the model and as well as predict its outcome. \cite{stefan-social-range-matrix} propose a social range matrix, which is a novel approach to deal with heterogeneous behavior of agents in the network. In particular, social range matrices capture three scenarios: anarchy, monarchy and coalitions. In anarchy, each agent is selfish. In monarchy, agents only care about one agent in the network. In the coalitions scenario, agents support each other within the same coalition but act selfishly or maliciously towards agents in other coalitions. In this work, they propose a network creation game for capturing the effect of the social range matrix, and further explore how this matrix affects equilibria in a network game. 
Investigating the applicability of the social range matrix for the frameworks \ref{subsub:framework1} and \ref{subsub:framework2} and bilateral stability is part of our future work.}

\myworries{In all our discussions,
we have assumed that any pair of agents can potentially form a link. In scenarios where agents do not necessarily trust all agents in the network, our results on bilateral stability extend to every clique (of mutually trusting agents) in the network. If not all agents trust each other, we may use an extension of the Hall's marriage theorem \citep{hall1935representatives} to aid independent observers determine whether it is possible to form an efficient network or not.} 

\myworries{Coming to the solution concept, if we had used the concept of {\em{Pairwise Nash Stability}} as defined by \cite{goyal-pairwise-nash} and had applied the mutual consent requirement for deletion too, we would have the same results we have obtained in this paper. This is because the mutual consent requirement for addition and deletion overrides the requirement for Nash equilibrium. We are currently working on modifying the definition of Pairwise Nash Stability to multiple other scenarios.
Looking at strong and coalition-proof Nash equilibria \citep{duttamutuswami}, strong pairwise stability \citep{jackson2005strongly}, and farsighted equilibrium \citep{Dutta-Farsighted}, are also future research directions.} 

\myworries{In this paper, we have discussed about network efficiencies but not looked at its contrapositive. That is, the inefficiencies in the network. The price of anarchy is an interesting measure to analyse the extent to which a network is inefficient \citep{Papadimitriou-POA, demaine-price-of-anarchy}. By definition, this means ratio of the worst sum of the utilities of agents in an equilibrium network to the best sum of the utilities. In this paper, we have bilaterally stable networks in place of equilibrium networks. In our case, efficient networks are the ones with the best sum of utilities. We also plan to analyze stability and efficiency of social storage networks by considering pairwise stability as the solution concept, thereby dropping the requirement of mutual consent for deletion.} 

\myworries{Our utility function depends on the cost incurred by an agent to maintain a link, the worth (value) of data, the disk failure rate, and the neighborhood size of an agent. Out of these, the first three are constants (same for all agents) while the neighborhood size varies. In our case, we can find the best sum of utility (using Proposition \ref{prop:SRN-SO-Contentedness-Framework}). However, knowing the worst sum of utility is non-trivial. The neighborhood size of every agent that would give us the worse sum of utility is challenging.}




\end{document}